\DeclareRobustCommand{\Arrow}[1][]{%
\check@mathfonts
\if\relax\detokenize{#1}\relax
\settowidth{\dimen@}{$\m@th\rightarrow$}%
\else
\setlength{\dimen@}{#1}%
\fi
\sbox\z@{\usefont{U}{lasy}{m}{n}\symbol{41}}%
\begin{picture}(\dimen@,\ht\z@)
\roundcap
\put(\dimexpr\dimen@-.7\wd\z@,0){\usebox\z@}
\put(0,\fontdimen22\textfont2){\line(1,0){\dimen@}}
\end{picture}%
}
\newcommand{\rightarrowshort}{\hspace{.2mm}\scalebox{.8}{\Arrow[.15cm]}\hspace{.2mm}}
\newcommand{\citeas}[1]{\cite{#1}} 
\newcommand{\citep}[1]{\cite{#1}} 
\newtheorem{theorem}{Theorem}
\newtheorem{observation}{Observation}
\newcounter{exo}
\newenvironment{example}[1][]
    {\refstepcounter{exo}%
\begin{center}
   \begin{tabular}{|>{\columncolor{gray!10}}p{0.967\textwidth}|}
    \hline \\[-3mm] {\bfseries{Example \theexo}: #1}\\ \hline
    }
    {
    \\[1.5mm] \hline
  \end{tabular}
    \end{center}
    }
\newcommand{\np}{{\sf{NP}}}
\newcommand{\npc}{{\np}-{\sf{complete}}}
\newcommand{\nph}{{\sf{NP-hard}}}
\newcommand{\paranph}{\sf{paraNP}-hard}
\newcommand{\paranphshort}{\sf{paraNP}-hard}
\newcommand{\nphns}{{\np}-{{hardness}}}
\newcommand{\nphshort}{{\sf{NP-hard}}}
\newcommand{\poly}{{\sf{P}}}
\newcommand{\wah}{{\sf{W[1]-hard}}}
\newcommand{\wahshort}{{\sf{W[1]-hard}}}
\newcommand{\wahns}{{\sf{W[1]}}-hardness}
\newcommand{\wac}{{\textsf{W[1]}}-{\sf{complete}}}
\newcommand{\wacns}{{\sf{W[1]}}-{{completeness}}}
\newcommand{\wb}{{\sf{W[2]}}}
\newcommand{\wbh}{\sf{W[2]-hard}}
\newcommand{\wbhshort}{{\sf{{W[2]}-hard}}}
\newcommand{\wbc}{\wb-complete}
\newcommand{\wbhns}{\wb-hardness}
\newcommand{\fpt}{{\sf{FPT}}}
\newcommand{\yesins}{Yes-instance}
\newcommand{\noins}{No-instance}
\newcommand{\bigo}[1]{O(#1)}
\newcommand{\abs}[1]{|#1|}
\newcommand{\edge}[2]{\{#1,#2\}}
\newcommand{\xc}{\mathcal{H}}
\newcommand{\xce}{H}
\newcommand{\xs}{A}
\newcommand{\xse}{a}
\newcommand{\xsize}{\kappa}
\newcommand{\w}{w} 
\newcommand{\mymid}{:} 
\newcommand{\setmid}{:}
\newcommand{\onlyfull}[1]{}
\newcommand{\prob}[1]{{\sc{#1}}}
\newcommand{\probb}[2]{{\prob{#1}}-{#2}}
\newcommand{\memph}[1]{\emph{#1}}
\newcommand{\score}[3]{\textsf{sc}_{#1}(#2, #3)} 
\newcommand{\margin}[4]{\textsf{margin}_{#1}(#2\rightarrowshort #3, #4)} 
\newcommand{\muplus}{\cup}
\newcommand{\vset}{N} 
\newcommand{\eset}{A}  
\newcommand{\vere}{u}
\newcommand{\edee}{e}
\newcommand{\vvset}[1]{N(#1)}
\newcommand{\abmv}{ABMV}
\newcommand{\neighbor}[2]{\Gamma_{#1}(#2)}
\newcommand{\degree}[2]{\textsf{deg}_{#1}(#2)}
\newcommand{\yy}{Y} 
\newcommand{\nn}{N} 
\newtheorem{claim}{Claim}
\newtheorem{lemma}{Lemma}
\newtheorem{corollary}{Corollary}
\newcommand{\EPP}[3]
{\begin{center}
{
\renewcommand{\tabcolsep}{0.45mm}
\begin{tabularx}{0.98\columnwidth}{ll}
\toprule
\multicolumn{2}{l}{\small\textsc{#1}} \\ \midrule
{\bf Input:}   & \parbox[t]{0.87\columnwidth}{#2\vspace*{0mm}}  \\
{\bf Question:}    & \parbox[t]{0.87\columnwidth}{#3\vspace*{0mm}} \\ \bottomrule
\end{tabularx}
}
\end{center}}
\begin{document}

\title{
On the Parameterized Complexity of Controlling Approval-Based Multiwinner Voting: Destructive Model \& Sequential Rules
}

\author{Yongjie Yang}

\date{\small{Chair of Economic Theory, Saarland University, Saarb\"{u}cken 66123, Germany \\
yyongjiecs@gmial.com}}

\maketitle

\begin{abstract}
Over the past few years, the (parameterized) complexity landscape of constructive control for many prevalent approval-based multiwinner voting (ABMV) rules has been explored. We expand these results in two directions. First, we study constructive control for sequential Thiele's rules.
Second, we study destructive counterparts of these problems.
Our exploration leads to a comprehensive understanding of the complexity of these problems. Along the way, we also study several interesting axiomatic properties of ABMV rules, and obtain generic results for rules fulfilling these properties. In particular, we show that for many rules satisfying these properties, election control problems are generally hard to solve from a parameterized complexity point of view, even when restricted to certain special cases.
\medskip

\noindent{\bf{Keywords:}} {approval-based multiwinner voting, election control, sequential rules, Thiele's rules, {\nph}, {\wbh}, Chernoff property}
\end{abstract}

\section{Introduction}
Voting has been commonly recognized as a prominent mechanism for collective decision-making. Since its application has been expanded from traditional scenarios (e.g., political elections) to a much broader range of areas where voting with a huge number of voters or candidates may be involved, analyzing voting rules from a computer science perspective has become vital. In particular,  to what extent a voting rule  resists strategic voting has been regarded as an important factor to judge the suitability and quality of voting rules in practice, and computer science promises to offer us a fine-grained answer to this question~\citep{Bartholdi92howhard}. A voting rule can be either immune or susceptible to a particular type of strategic voting. Immunity is the highest level of resistance of voting rules to strategic voting.
If a rule is susceptible to a strategic type, investigating whether it is easy (polynomial-time solvable) or hard (\nph) to successfully perform the corresponding strategic action is the next step to measure its resistance degree. On top of that, parameterized complexity theory offers us a more subtle classification by identifying fixed-parameter tractable (\fpt) problems and fixed-parameter intractable ({\wah}, {\wbh},  etc.) problems from those proven to be {\nph}. From a purely theoretical point of view, higher complexity implies higher degree of resistance.\footnote{There are also experimental studies of strategic voting problems (see, e.g.,~\cite{DBLP:journals/jcss/ErdelyiFRS15a,DBLP:conf/aaai/LoreggiaNRVW14}),
which provide complementary insight towards understanding whether hard strategic problems can be solved efficiently in practice.}

Arguably, investigating the complexity of many strategic voting problems for single-winner voting rules had largely dominated the early development (during 1990--2010) of Computational Social Choice. Thanks to the tremendous effort of many researchers, an almost complete landscape of the complexity of related problems has emerged~\cite{BaumeisterR2016,DBLP:journals/corr/abs-2207-00710,handbookofcomsoc2016Cha7FR}. However, the past few years have witnessed a shift of research focus from single-winner voting to multiwinner voting (see, e.g.,~\cite{DBLP:conf/aaai/BrillIM022,DBLP:conf/aaai/CaragiannisF22,DBLP:conf/ijcai/DoHL022,DBLP:conf/ijcai/SornatWX22}).
Towards achieving a comprehensive understanding of the complexity of related problems for multiwinner voting rules, we study eight standard election control problems for important approval-based multiwinner voting (\abmv) rules. These problems model the scenario where a powerful external agent (i.e., the chair of an election) aims to either make a set of distinguished candidates be included in all winning $k$-committees (constructive) or make none of them be contained in any winning $k$-committees (destructive) by performing a certain type of strategic action (e.g., adding or deleting a limited number of voters or candidates).
To make our results as general as possible, we also study several axiomatic properties of {\abmv} rules, which might be of independent interest, and derive results applying to rules satisfying these properties.

\begin{table}
\caption{
A summary of the complexity of election control problems for ABMV rules.
Our results for sequential rules may apply to many other sequential Thiele's rules satisfying certain properties not shown in the table (see corresponding theorems for the details).
Here,~$J$ is the set of distinguished candidates,~$m$ is the number of candidates,~$n$ is the number of votes,~$k$ is the size of the winning committee, and~$\ell$ is the solution size.
Parameters with respect to which parameterized complexity results hold are shown as superscripts.
Note that {\wahns}/{\wbhns} of a problem with respect to a combined parameter~$\kappa+\kappa'$ implies that the problem is {\wah}/{\wbh} when parameterized by~$\kappa$ and~$\kappa'$ separately. For other results concerning constructive control for ABMV rules not shown here we refer to~\protect\cite{DBLP:conf/ijcai/Yang19}.}
\label{tab-results-summary}
\renewcommand{\tabcolsep}{2.34mm}
\begin{center}
\footnotesize{
\begin{tabular}{|l|l|l|l|l|}\toprule
       &    {\prob{CCAV}} & {\prob{CCDV}}  & {\prob{CCAC}} & {\prob{CCDC}} \\ \toprule

seqABCCV
&{\wahshort}$^{\ell}$, {\paranphshort}$^k$
& {\wbhshort}$^{\ell}$ ($k=1$, \protect\cite{DBLP:journals/tcs/LiuFZL09})
& {\wbhshort}$^{k+\ell}$
& {\nphshort} \\

seqPAV
&  ($\abs{J}=1$, Thm.~\ref{ccav-seqpav-nph}) 
&  {\paranphshort}$^k$  ($\abs{J}=1$, Thm.~\ref{ccdv-seqpav-nph})
& ($\abs{J}=1$, Thm.~\ref{thm-ccac-seqThiele-wbh})
&($\abs{J}=1$, Thm.~\ref{thm-ccdc-seqpav-seqabccv-np-hard})  \\  \bottomrule


\end{tabular}


\renewcommand{\tabcolsep}{0.81mm}
\begin{tabular}{|l|l|l|l|l|}\toprule
          &   {\prob{DCAV}} & {\prob{DCDV}}  & {\prob{DCAC}} & {\prob{DCDC}} \\ \toprule

AV
&  {\wbhshort}$^{\ell}$ ($\abs{J}=1$, Thm.~\ref{thm-dcav-many-rules-wbh})
& {{\wbhshort}$^{n-\ell}$} ($\abs{J}=1$, Thm.~\ref{thm-dcdv-save-wbh-remaining-votes})
& {\poly} \cite{DBLP:journals/jair/MeirPRZ08}
& immune (Thm.~\ref{thm-chernoff-rule-immue-dcdc}) \\

& $\alpha$-efficient \&  neutral
&  {{\wahshort}$^{k+\ell}$} ($\abs{J}=1$, Thm.~\ref{thm-dcdv-av-wah-k-ell})
&
& Chernoff\\ \cline{1-1} \cline{4-5}

SAV
& {\wahshort}$^{k+\ell}$ ($\abs{J}=1$, Thm.~\ref{thm-dcav-many-rules-wbh-k-ell})
& AV-uniform
& {\nphshort} ($\abs{J}=1$,
& {\wbhshort}$^{\ell}$  \\

NSAV
&  AV-uniform
&
& Cor.~\ref{cor-dcac-nsav-np-hard}, Thm.~\ref{thm-dcac-sav-np-hard})
& ($\abs{J}=1$, Thm.~\ref{thm-dcdc-sav-np-hard}\onlyfull{, Cor.~\ref{cor-dcdc-nsav-np-hard}}) 
\\ \midrule

seqABCCV
& {\wahshort}$^{k}$\ \ \ ($\abs{J}=1$, Thm.~\ref{thm-dcav-seqabccv-wa-hard})
& {\wbhshort}$^{\ell}$ (Thm.~\ref{thm-dcdv-seq-wbh})
& {\wahshort}$^{k+\ell}$
&{\wahshort}$^{m-\ell, k}$  \\ \cline{1-2}

seqPAV
&  {\wahshort}$^{k, \ell}$ ($\abs{J}=1$, Thms.~\ref{thm-dcav-seqpav-nph}, \ref{thm-dcav-seqabccv-wa-hard})
&
& ($\abs{J}=1$, Thm.~\ref{thm-dcac-thiele-rule})
& ($\abs{J}=1$,  Thm.~\ref{thm-dcdc-seqThiele-wah}) \\  \midrule
%

%

ABCCV/PAV & \multicolumn{4}{l|}{{\wahshort}$^{k}$, {\paranphshort}$^{\ell}$ ($\abs{J}=1$, Thm.~\ref{thm-abccv-very-speical-case-np-hard})}
\\  \hline

MAV & \multicolumn{4}{l|}{{\wbhshort}$^{k}$, {\paranphshort}$^{\ell}$ ($\abs{J}=1$, Thm.~\ref{thm-mav-very-speical-case-np-hard})}
\\ \bottomrule
\end{tabular}
}
\end{center}
\end{table}

\subsection{Related Works and Our Main Contributions}

The complexity of strategic voting problems was first studied by Bartholdi, Tovey, and Trick~\citeas{BARTHOLDI89,Bartholdi92howhard}, who introduced several constructive control problems. The destructive counterparts of these problems were first studied by Hemaspaandra, Hemaspaandra, and Rothe~\citeas{DBLP:journals/ai/HemaspaandraHR07}. The number of papers investigating the (parameterized) complexity of these problems for single-winner voting rules is enormous (see, e.g.,~\cite{DBLP:journals/jair/FaliszewskiHH11,DBLP:journals/mlq/HemaspaandraHR09,DBLP:journals/ai/NevelingR21}). We refer to the book chapters~\cite{BaumeisterR2016,handbookofcomsoc2016Cha7FR} for a consultation for results by 2016, and refer to~\cite{DBLP:journals/aamas/ErdelyiNRRYZ21,DBLP:journals/ai/NevelingR21,DBLP:conf/atal/Yang17,DBLP:conf/ecai/000120} for some recent results.

Meir~\citeas{DBLP:journals/jair/MeirPRZ08} initiated the study of the complexity of control problems for multiwinner voting rules where the main focus are ranking-based rules. However, we focus on {\abmv} rules. Besides, in their model, the external agent derives utilities from candidates and attempts to achieve a winning~$k$-committee yielding a total utility (sum of utilities of committee members) exceeding a given threshold~$s$. For other strategic problems for ranking-based multiwinner voting rules, we refer to~\citep{DBLP:journals/ai/BredereckFKNST21,DBLP:journals/toct/BredereckFNT21,DBLP:journals/aamas/BredereckKN21} and references therein.

Yang~\citeas{DBLP:conf/ijcai/Yang19} studied the complexity of constructive control by adding/deleting votes/candidates for many {\abmv} rules including approval voting (AV), satisfaction approval voting (SAV), net-SAV (NSAV), approval-based Chamberlin-Courant voting (ABCCV), proportional approval voting (PAV), and minimax approval voting (MAV). Magiera~\citeas{Magierphd2020} complemented this work by investigating destructive control by deleting votes/candidates for AV and SAV, and showed {\nphns} for these problems in the special case where there is only one distinguished candidate.

Other strategic problems for {\abmv} rules have also  been studied recently. Faliszewski, Skowron, and Talmon~\citeas{DBLP:conf/atal/FaliszewskiST17} studied bribery problems for {\abmv} rules, where the goal is to make a distinguished candidate be contained in a winning $k$-committee by performing a limited number of modification operations. Later, Yang~\citeas{DBLP:conf/atal/000120} studied the destructive counterparts of these problems. Faliszewski, Gawron, and Kusek~\citeas{DBLP:conf/eumas/FaliszewskiGK22}, and Gawron and Faliszewski~\citeas{DBLP:conf/aldt/GawronF19} studied the robustness of {\abmv} rules, which can be considered as variants of bribery problems. Markakis and Papasotiropoulos~\citeas{DBLP:conf/ijcai/MarkakisP21} studied control under conditional approval voting.

Our main contributions are summarized as follows.
\begin{enumerate}
\item[(1)] We first study constructive control for numerous sequential $\omega$-Thiele's rules containing the two prevalent {\abmv} rules \mbox{seqABCCV} and seqPAV, and then we study destructive control for all the {\abmv} rules mentioned above.
\item[(2)] We examine several axiomatic properties of {\abmv} rules and derive several general complexity results for {\abmv} rules satisfying these properties.
\item[(3)] We offer a complete landscape of the complexity of the eight standard control problems for the aforementioned {\abmv} rules and provide a comprehensive understanding of their parameterized complexity with respect to mainly the natural parameters such as the solution size and the size of winning committees.
Table~\ref{tab-results-summary} summarizes our main results.
\end{enumerate}

We remark that all our reductions established in the paper run in polynomial time, and all the eight standard control problems for the aforementioned concrete rules are obviously in~{\np}. As a consequence, if a problem is shown to be {\wah} or {\wbh}, it implies that the problem is also {\npc}. It should be remarked that all our hardness results for constructive control can be trivially adapted for establishing he same hardness for the  corresponding utility-based control problems formulated in~\cite{DBLP:journals/jair/MeirPRZ08}: just set the utility of each distinguished candidate to be~$1$, set that of each other candidate to be~$0$, and set the threshold~$s$ to be the number of distinguished candidates.

\section{Preliminaries}
\label{sec-preliminaries}
We assume the reader is familiar with the basics in graph theory and (parameterized) complexity~\cite{DBLP:books/sp/CyganFKLMPPS15,DBLP:conf/lata/Downey12,DBLP:journals/interfaces/Tovey02,Douglas2000}.

For an integer~$i$,~$[i]$ denotes the set of all positive integers at most~$i$. For a set~$S$, we use~$\overrightarrow{S}$ to denote an arbitrary linear order over~$S$.

\subsection{Approval-Based Multiwinner Voting}
An election is a tuple~$(C, V)$ consisting of a set~$C$ of candidates and a multiset~$V$ of votes cast by a set of voters. Each vote~$v\in V$ is defined as a subset of~$C$.
Throughout the paper, the terms vote and voter are used interchangeably.
We say that a vote~$v$ approves a candidate~$c$ if~$c\in v$. For $C'\subseteq C$, we use~$V_{C'}$ to denote the multiset of votes obtained from~$V$ by removing $C\setminus C'$ from every vote in~$V$. Therefore, $(C', V_{C'})$ is the election $(C, V)$ restricted to~$C'$.
For notational brevity, we usually write $(C', V)$ for $(C', V_{C'})$.
For a candidate~$c\in C$,~$V(c)$ is the multiset of votes in~$V$ approving~$c$.
A~$k$-set is a set  of cardinality~$k$. A committee (resp.\ $k$-committee) refers to a subset (resp.\ $k$-subset) of candidates.

An ABMV rule~$\varphi$ maps each election $(C, V)$ and an integer~$k\leq \abs{C}$ to a collection $\varphi(C, V, k)$ of~$k$-committees of~$C$, which are called {{winning~$k$-committees}} of~$\varphi$ at~$(C, V)$.\footnote{In practice, when there are multiple winning $k$-committees, a tie-breaking rule is used to select one from them.}

Now we elaborate on ABMV rules studied in the paper. Under these rules, each~$k$-committee receives a score based on the votes, and the rules select winning $k$-committees maximizing or minimizing this score. We first consider additive rules~\cite{Kilgour2010,DBLP:conf/ijcai/YangW18}, where the score of a committee is the sum of the scores of its members, and these rules select $k$-committees with the maximum score. Three prevalent {\abmv} rules are formally defined  as follows.

\begin{description}
\item[AV] The AV score of a candidate~$c\in C$ is the number of votes approving~$c$, and the AV score of a committee~$w\subseteq C$ is $\sum_{v\in V}\abs{v\cap \w}$. The rule selects $k$-committees with the maximum AV score among all $k$-committees.

\item[SAV] Each candidate~$c\in C$ receives~$\frac{1}{\abs{v}}$ points from each vote~$v\in V$ approving~$c$. The SAV score of a committee~$w\subseteq C$ is then $\sum_{v\in V, v\neq\emptyset} \frac{\abs{w\cap v}}{\abs{v}}$. The rule selects $k$-committees with the maximum SAV score among all $k$-committees.

\item[NSAV] Each candidate~$c\in C$ receives $\frac{1}{\abs{v}}$ points from each vote~$v\in V$ approving~$c$, and receives  $-\frac{1}{\abs{C\setminus v}}$ points from each vote~$v\in V$ disapproving~$c$. The NSAV score of a committee~$w\subseteq C$ is then $\sum_{v\in V, v\neq\emptyset} \frac{\abs{w\cap v}}{\abs{v}}-\sum_{v\in V, v\neq C} \frac{\abs{w\setminus v}}{\abs{C\setminus v}}$. The rule selects $k$-committees with the maximum NSAV score among all $k$-committees.
\end{description}

Now we give definitions of another group of rules where the score of each committee is nonlinearly determined by the scores of its members.

\begin{description}
\item[ABCCV] A voter is satisfied with a committee if this committee includes at least one of her approved candidates. The ABCCV score of a committee is the number of voters who are satisfied with the committee, and winning $k$-committees are those with the maximum ABCCV score.
\end{description}

ABCCV is a variant of the original rules proposed by Chamberlin and Courant~\cite{ChamberlinC1983APSR10.2307/1957270}.

\begin{description}
\item [PAV] The PAV score of a committee~$\w\subseteq C$ is defined as \[\sum_{v\in V, v\cap \w\neq \emptyset}\left(\sum_{i=1}^{\abs{v\cap \w}}\frac{1}{i}\right).\] Winning $k$-committees are those with the maximum PAV score.

\item[MAV] The Hamming distance between two subsets~$\w\subseteq C$ and~$v\subseteq C$ is $\abs{\w\setminus v}+\abs{v\setminus \w}$. The score of a committee~$\w$ is the maximum Hamming distance between~$\w$ and the votes, i.e., $\max_{v\in V} (\abs{\w\setminus v}+\abs{v\setminus \w})$. Winning $k$-committees are those having the minimum MAV score.
\end{description}

PAV was first studied by Thiele~\citeas{Thiele1895}, and MAV was proposed by Brams~\citeas{Bramsminimaxapproval2007}.

It should be noted that calculating a winning $k$-committee for ABCCV, PAV, and MAV is {\nph}~\cite{DBLP:conf/atal/AzizGGMMW15,LeGrand2004Tereport,DBLP:journals/scw/ProcacciaRZ08}.

We also derive results for a class of rules named $\omega$-Thiele's rules which contain some of the above defined rules.

\begin{description}
\item[$\omega$-Thiele's rules] Each Thiele's rule is characterized by a function $\omega: \mathbb{N}\rightarrow \mathbb{R}$ so that $\omega(0)=0$ and $\omega(i+1)\geq \omega(i)$ for all nonnegative integers~$i$. The score of a committee $\w\subseteq C$ is defined as $\sum_{v\in V}\omega(\abs{v\cap w})$. The rule selects $k$-committees with the maximum score.\footnote{Thiele's rules are also studied under other names such as weighted PAV, generalized approval procedures, etc.\ (see, e.g.,~\citep{DBLP:journals/scw/AzizBCEFW17,Kilgour2012}).}
\end{description}

Obviously, AV is the $\omega$-Thiele rule where $\omega(i)=i$ for all $i>0$, ABCCV is the $\omega$-Thiele's rule where $\omega(i)=1$ for all $i>0$, and PAV is the $\omega$-Thiele's rule where $\omega(i)=\sum_{j=1}^i 1/j$ for all $i>0$. Many of our results apply to a subclass of Thiele's rules satisfying some properties.

For each integer~$i$, let $\Delta_{\omega}(i)=\omega(i+1)-\omega(i)$. .
For an ABMV rule~$\varphi$ defined above, an election $E=(C, V)$, and a committee $\w\subseteq C$, we use $\score{\varphi}{\w}{E}$ to denote the~$\varphi$ score of~$\w$ in~$E$. The~$\varphi$ margin contribution of a candidate $c\in C$ to~$\w$ with respect to the election~$E$ is defined as
\[\margin{\varphi}{c}{\w}{E}=\score{\varphi}{\w\cup \{c\}}{E}-\score{\varphi}{\w}{E}.\]

In the paper, we also study sequential versions of ABCCV  and PAV which fall into the category of sequential Thiele's rules defined as follows.

\begin{description}
\item[Sequential $\omega$-Thiele's rules] Let~$\varphi$ be an $\omega$-Thiele's rule. Sequential~$\varphi$ (seq$\varphi$) determines the winning $k$-committee by including its members iteratively. Initially, we have  $\w=\emptyset$. In the next iteration, we add a candidate from $C\setminus \w$ into~$\w$ who has the maximum~$\varphi$ margin contribution to $\w$. If multiple candidates have the same margin contribution, a tie-breaking scheme is used to select the one added into~$\w$. In this paper, we assume that ties are broken by a linear order over~$C$. The procedure terminates when~$\w$ contains exactly~$k$ candidates.
\end{description}

It should be noted that seqAV is exactly AV equipped with a linear tie-breaking order.
For further detailed discussions on the above defined rules,  we refer to the recent comprehensive survey by Lackner and Skowron~\citeas{DBLP:series/sbis/LacknerS23}.

The following notions are used in several of our algorithms concerning sequential Thiele's rules.
For an $\omega$-Thiele's rule~$\varphi$, an election~$E=(C, V)$, a linear order~$\rhd$ over~$C$, and a positive integer $i\leq k\leq \abs{C}$, we use~$\w^i_{\omega, \rhd, E}$ to denote the $i$-th candidate included in the winning $k$-committee of seq$\varphi$ at~$E$ when ties are broken by~$\rhd$. Moreover, let
\[\w^{\leq i}_{\omega, \rhd, E}=\{\w^j_{\omega, \rhd, E} \setmid j\leq i\}.\] When it is clear from the context which~$\omega$ and~$\rhd$ are considered, we simply write~$\w^i_E$ for~$\w^i_{\omega, \rhd, E}$, and write~$\w^{\leq i}_E$ for~$\w^{\leq i}_{\omega, \rhd, E}$.

\subsection{Some Axiomatic Properties of {\abmv} Rules}
In this section, we provide definitions of some axiomatic properties of {\abmv} rules used to establish our complexity results.

\begin{description}
    \item[$\alpha$-efficiency]
    An {\abmv} rule~$\varphi$ is $\alpha$-efficient if for every election $(C, V)$ and every integer $k$, if $C_0=\{c\in C \setmid V(c)=\emptyset\}\neq \emptyset$ and $\abs{C\setminus C_0}\geq k$, it holds that none of~$C_0$ is contained in any winning $k$-committee of~$\varphi$ at $(C, V)$.\footnote{Lackner and Skowron~\protect\citeas{DBLP:journals/jet/LacknerS21} studied the notion of weak efficiency  which requests that if a winning committee~$\w$ contains some candidate~$c$ not approved by any votes, then replacing~$c$ with any candidate not in~$\w$ yields another winning committee.
Peters~\citeas{petersphdthesis} defined a similar notion with the same name for resolute rules, i.e., rules that always return exactly one winning committee. Our notion equals that of Peters when restricted to resolute rules. To avoid confusion, we use the name $\alpha$-efficiency. Indeed, for irresolute rules, $\alpha$-efficiency and weak efficiency (defined in \cite{DBLP:journals/jet/LacknerS21}) are independent.}

\item[Neutrality] (\protect\cite{DBLP:series/sbis/LacknerS23}) An {\abmv} rule is neutral if candidates are treated equally. More precisely, let $E=(C, V)$ be an election and let $k\leq \abs{C}$ be a positive integer. For a bijection $f: C\rightarrow C$, let~$E_f$ be the election obtained from~$E$ by changing each $v\in V$ into $\{f(c) \setmid c\in v\}$. A rule~$\varphi$ is neutral if for every election $E=(C, V)$, every positive integer $k\leq \abs{C}$, and every bijection $f: C\rightarrow C$, it holds that a $k$-committee~$w$ of~$C$ is a winning $k$-committee of~$\varphi$ at~$E$ if and only if  $\{f(c)\setmid c\in w\}$ is a winning $k$-committee of~$\varphi$ at~$E_f$.

\item[AV-Uniform]  An {\abmv} rule~$\varphi$ is AV-uniform if for every election $(C, V)$ and every integer $k\leq \abs{C}$ such that every vote in~$V$ approves the same number of candidates, it holds that the winning $k$-committee of~$\varphi$ at $(C, V)$ coincide with those of AV at $(C, V)$, i.e., $\varphi(C,V,k)=\text{AV}(C, V, k)$.
\end{description}

Clearly, AV, SAV, and NSAV are AV-uniform, and none of other concrete rules studied in the paper are AV-uniform.
Additionally, it is easy to see that AV, SAV, NSAV, and PAV are $\alpha$-efficient and neutral.
However, ABCCV, seqABCCV, and MAV are not~$\alpha$-efficient\footnote{However, ABCCV and MAV satisfy weak efficiency defined by Lackner and Skowron~\citeas{DBLP:journals/jet/LacknerS21}.}, as illustrated in Example~\ref{ex-b}.

\begin{example}[ABCCV, seqABCCV, and MAV are not~$\alpha$-efficient]
\label{ex-b}
Consider an election with three candidates~$a$,~$b$, and~$c$, and with one vote which approves exactly~$a$ and~$b$. All $2$-committees are ABCCV winning $2$-committees. In addition, with the tie-breaking order $c\rhd a\rhd b$, $\{a, c\}$ is a seqABCCV winning $2$-committee. Therefore, neither ABCCV nor its sequential version is $\alpha$-efficient. To check that MAV is not $\alpha$-efficient, observe that any $1$-committee is an winning $1$-committee.
\end{example}

Observe that seqAV and seqPAV are $\alpha$-efficient, but they are neither neutral nor AV-uniform. As a matter of fact, it is easy to see that an $\omega$-Thiele's rule satisfies $\alpha$-efficiency if and only if $\triangle_{\omega}(i)>0$ for all $i\geq 0$, and none of the sequential $\omega$-Thiele's rules is neutral when ties are broken by a linear order.

Now we elaborate on the definition of the Chernoff property.
Let~$X$ be a set. A choice function~$f: 2^X\rightarrow 2^X$ satisfies the Chernoff property if for every~$Y\subseteq X$, it holds that $(f(X)\cap Y)\subseteq f(Y)\subseteq Y$~\cite{chernoffeconometrica1954}\footnote{We note that in some literature, a choice function refers to a function which returns exactly one single element from~$X$, and the function defined here is called a choice correspondence.}. This property has been studied under several other names (see, e.g.,~\cite{sen1971TRES,PetesPtheorydecision2021}). We extend this properly to {\abmv} rules in a natural way.

\begin{description}
\item[Chernoff] An {\abmv} rule~$\varphi$ satisfies the Chernoff property if for every election $(C, V)$, every nonnegative integer $k\leq \abs{C}$, and every $C'\subseteq C$ such that $\abs{C'}\geq k$ the following condition holds: for every $\w\in \varphi(C,V, k)$ there exists $\w'\in \varphi(C',V_{C'}, k)$ such that $(\w\cap C')\subseteq \w'$. Generally speaking, this property says that for every winning $k$-committee~$\w$ of the election $(C, V)$, if we consider the election restricted to some $C'\subseteq C$ then there exists at least one winning $k$-committee of the subelection which contains all candidates in~$\w$ that are also contained in~$C'$.
\end{description}

AV and seqAV clearly satisfy the Chernoff property.
However, none of the other concrete rules studied in the paper satisfies the Chernoff property.
It is interesting to see if there are other natural {\abmv} rules satisfying the Chernoff property.

A summary of whether a particular rule satisfies a particular property defined above is given in Table~\ref{tab-properties}.

\begin{table*}
\caption{A summary of axiomatic properties of {\abmv} rules. Here, ``\yy''/``\nn'' in an entry means that the corresponding rule satisfies/dissatisfies the property.}
\label{tab-properties}
    \centering{
    \begin{tabular}{|l|c|c|c|c|}\hline
         & $\alpha$-efficient & neutral & AV-uniform & Chernoff  \\ \hline

     AV  & \yy & \yy & \yy &\yy \\ \hline

     SAV  & \yy & \yy & \yy &\nn \\ \hline

    NSAV  & \yy & \yy & \yy &\nn \\ \hline

    PAV  & \yy & \yy & \nn &\nn \\ \hline

ABCCV  & \nn & \yy & \nn &\nn \\ \hline

 MAV  & \nn & \yy & \nn &\nn \\ \hline

 seqPAV  & \yy & \nn & \nn &\nn \\  \hline

seqABCCV  & \nn & \nn & \nn &\nn \\ \hline
    \end{tabular}
    }
\end{table*}

\subsection{Problem Formulations}

Now we extend the definitions of several standard single-winner election control problems to multiwinner voting. We first give the definitions of destructive control problems. These problems model the scenario where some external agent (e.g., the election chair) aims to make some distinguished candidates the winners by reconstructing the election.

\EPP
{Destructive Control by Adding Voters for~$\varphi$ (\probb{DCAV}{$\varphi$})}
{An election~$(C, V\muplus U)$, an integer~$k\leq \abs{C}$, a nonempty subset~$J\subseteq C$, and an integer~$\ell\leq \abs{U}$.}
{Is there $U'\subseteq U$ such that~$|U'|\leq \ell$ and none of~$J$ belongs to any winning $k$-committee of~$\varphi$ at $(C, V\muplus U')$?}

In the above definition, votes in~$V$ and votes in~$U$ are respectively called registered votes and unregistered votes.

\EPP
{Destructive Control by Deleting Voters for~$\varphi$ (\probb{DCDV}{$\varphi$})}
{An election~$(C, V)$, an integer~$k\leq \abs{C}$, a nonempty subset~$J\subseteq C$, and an integer~$\ell\leq \abs{V}$.}
{Is there $V'\subseteq V$ such that~$\abs{V'}\leq \ell$ and none of~$J$ belongs to any winning $k$-committee of~$\varphi$ at $(C, V\setminus V')$?}

\EPP
{Destructive Control by Adding Candidates for~$\varphi$ (\probb{DCAC}{$\varphi$})}
{An election~$(C\muplus D, V)$, an integer~$k\leq \abs{C}$, a nonempty subset $J\subseteq C$, and an integer~$\ell\leq \abs{D}$.}
{Is there $D'\subseteq D$ such that $\abs{D'}\leq \ell$ and none of~$J$ belongs to any winning $k$-committee of~$\varphi$ at $(C\muplus D', V)$?}

In the above definitions, we call candidates in~$C$ and~$D$ respectively registered candidates and unregistered candidates.

\EPP
{Destructive Control by Deleting Candidates for~$\varphi$ (\probb{DCDC}{$\varphi$})}
{An election~$(C, V)$, an integer~$k\leq \abs{C}$, a nonempty subset~$J\subseteq C$, and an integer~$\ell\leq \abs{C}-k$.}
{Is there $C'\subseteq C\setminus J$ such that~$\abs{C\setminus C'}\geq k$ and none of~$J$ belongs to any winning $k$-committee of~$\varphi$ at $(C\setminus C', V)$?}

In the above definitions, candidates in~$J$ are called distinguished candidates.
The above problems model the scenario where an external agent aims to prevent any of the distinguished candidates from being contained in {any winning $k$-committee (other than one particular winning $k$-committee).
This is realistic when the external agent is unaware of the tie-breaking rule, or when a randomized tie-breaking rule is used. In this case, the external agent may want to make sure that her goal is reached without taking any risk.

In {\probb{DCDC}{$\varphi$}}, we do not allow to delete the distinguished candidates. This captures the setting where the external agent (chair) does not want the public to easily identify her distinguished (disliked) candidates, thereby be criticized for abusing power. Besides, this is consistent with the original definition of {\prob{DCDC}} for single-winner voting rules.

We also study constructive counterparts of the above problems. More concretely, for each $X\in \{\text{AV}, \text{DV}, \text{AC}, \text{DC}\}$, the {\probb{CCX}{$\varphi$}} problem takes the same input as {\probb{DCX}{$\varphi$}}, and asks if it is possible to make all distinguished candidates contained in all winning $k$-committees by performing the corresponding control operation. We refer to~\cite{DBLP:conf/ijcai/Yang19} for formal definitions of these problems.

Note that when $\abs{J}=k=1$, the above destructive control problems are exactly the nonunique-winner models of the same-named problems for single-winner voting rules, and the constructive control problems are the unique-winner models of the same-named problems for single-winner voting rules that have been extensively and intensively studied in the literature\onlyfull{~\cite{Bartholdi92howhard,DBLP:journals/jair/FaliszewskiHHR09,handbookofcomsoc2016Cha7FR,DBLP:conf/atal/Yang17}}.

A rule~$\varphi$ is immune to a destructive \onlyfull{(resp.\ constructive) }control type (DCAV, DCDV, DCAC, DCDC), if the external agent cannot exert the corresponding control operation successfully, in the sense that~$J$ cannot be turned from intersecting\onlyfull{ (resp.\ not being contained in all winning $k$-committees)} at least one winning $k$-committee into being entirely disjoint from any winning $k$-committees\onlyfull{ (resp.\ contained in all winning $k$-committees)} by performing the control operation. The immunity of~$\varphi$ to a constructive type is defined in a similar spirit. Precisely, we say that a rule~$\varphi$ is immune to a  constructive control type (CCAV, CCDV, CCAC, CCDC), if the external agent cannot turn~$J$ from not being contained in all winning $k$-committees into being  contained in all winning $k$-committees by performing the corresponding control operation.

We also study a problem which is a special case of all the above destructive control problems. An analogous special case of constructive control has been considered in~\cite{DBLP:conf/ijcai/Yang19}.

\EPP{Nonwinning Testing for~$\varphi$}
{An election $(C, V)$, a nonnegative integer~$k\leq \abs{C}$, and a candidate $p\in C$.}
{Is~$p$ not contained in any winning $k$-committee of $\varphi$ at~$(C, V)$?}

\subsection{Useful Hardness Problems}
Our hardness results are based on reductions from the following problems.

Let $G=(\vset, \eset)$ be a graph where~$\vset$ is the set of vertices and~$\eset$ is the set of edges of~$G$.
We say that a vertex~$\vere\in \vset$ covers an edge, or say that~$\vere$ is incident to this edge, if~$\vere$ is one of the two endpoints of the edge. The edges that are covered by a subset $S\subseteq N$ of vertices refer to the edges covered by at least one vertex in~$S$.
A {\it{vertex cover}} of~$G$ is a subset of vertices covering all edges of~$G$. In other words,~$S\subseteq \vset$ is a vertex cover of~$G$ if after removing all vertices in~$S$ there are no edges. A subset~$S\subseteq N$ is an independent set of~$G$ if $\vset\setminus S$ is a vertex cover of~$G$.
A clique of~$G$ is a subset of vertices so that there is an edge between every two vertices in the subset.
For a vertex $\vere\in \vset$, let $\neighbor{G}{\vere}=\{\vere'\in \vset \setmid \edge{\vere}{\vere'}\in \eset\}$ be the set of neighbors of~$\vere$ in~$G$. For $\vset'\subseteq \vset$, let $\neighbor{G}{\vset'}=(\bigcup_{\vere\in \vset'}\neighbor{G}{\vere})\setminus \vset'$. The degree of a vertex~$\vere\in \vset$ is defined as $\degree{G}{\vere}=\abs{\neighbor{G}{\vere}}$.
A regular graph is a graph where all vertices have the same degree. More specifically, a $d$-regular graph is a graph where every vertex has degree~$d$.

\EPP
{Vertex Cover}
{A graph~$G=(N, A)$, and an integer~$\kappa$.}
{Does~$G$ have a vertex cover of size~$\kappa$?}

It is well-known that {\prob{Vertex Cover}} is {\npc}, and it remains so even when restricted to $3$-regular graphs~\cite{DBLP:journals/tcs/GareyJS76,DBLP:journals/jct/Mohar01}.

\EPP
{Independent Set}
{A graph~$G=(N, A)$, and an integer~$\kappa$.}
{Does~$G$ have an independent set of size~$\kappa$?}

\EPP
{Clique}
{A graph $G=(\vset, \eset)$ and an integer~$\kappa$.}
{Does~$G$ have a clique of size~$\kappa$?}

It is known that, with respect to~$\kappa$, {\prob{Independent Set}} and {\prob{Clique}} are {\wac}, and this holds even when restricted to regular graphs~\cite{DBLP:journals/cj/Cai08,DBLP:journals/tcs/Marx06,DBLP:conf/cats/MathiesonS08}.

\EPP
{Maximum Partial Vertex Cover (MPVC)}
{A graph~$G=(\vset, \eset)$, and two integers~$\kappa$ and~$t$.}
{Is there $S\subseteq \vset$ such that $\abs{S}\leq \kappa$ and~$S$ covers at least~$t$ edges of~$G$?}

It is known that {\prob{MPVC}} is {\npc} and {\wah} with respect to~$\kappa$~\cite{DBLP:conf/wads/GuoNW05}.

\EPP
{Restricted Exact Cover by Three Sets (RX3C)}
{A universe $\xs=\{\xse_1,\xse_2,\dots,\xse_{3\kappa}\}$ and a collection $\xc=\{\xce_1,\xce_2,\dots,\xce_{3\kappa}\}$ of $3$-subsets of~$\xs$ such that every~$\xse\in \xs$ occurs in exactly three elements of~$\xc$.}
{Does~$\xc$ contain an exact set cover of~$\xs$, i.e., a subcollection $\xc'\subseteq \xc$ such that~$\abs{\xc'}=\kappa$ and every~$\xse\in \xs$ occurs in exactly one element of~$\xc'$?}

It is known that {\prob{RX3C}} is {\npc}~\cite{DBLP:journals/tcs/Gonzalez85}.

\EPP
{Set Packing}
{A universe $\xs$, a collection~$\xc$ of subsets of~$\xs$, and an integer $\kappa\leq \abs{\xs}$.}
{Does~$\xc$ contain a set packing of cardinality $\kappa$, i.e., a subcollection $\xc'\subseteq \xc$ such that $\abs{\xc'}=\kappa$ and for every two distinct $\xce, \xce'\in \xc'$ it holds that $\xce\cap \xce'=\emptyset$?}

It is known that {\prob{Set Packing}} is {\npc} and is {\wac} with respect to $\kappa$. Moreover, the {\wacns} holds even if all elements in~$\xc$ have the same cardinality. This special case is called {\prob{Regular Set Packing}}. However, it should be pointed out that if~$\xc$ consists of~$d$-subsets of~$\xs$ where~$d$ is a constant, the problem becomes {\fpt} with respect to~$\kappa$. See~\cite{DBLP:series/txcs/DowneyF13} for more detailed discussions.

%

In a graph~$G=(\vset, \eset)$, we say that a vertex~$\vere\in \vset$ dominates another vertex $\vere'\in \vset$ if they are neighbors in~$G$, i.e., $\edge{\vere}{\vere'}\in \eset$.

\EPP
{Red-Blue Dominating Set (RBDS)}
{A bipartite graph $G=(R\muplus B, \eset)$ and an integer~$\kappa\leq \abs{B}$.}
{Is there $B'\subseteq B$ such that $\abs{B'}= \kappa$ and~$B'$ dominates~$R$?}

In the definition of {\prob{RBDS}}, we often call vertices in~$R$ red vertices and call those in~$B$ blue vertices.
It is known that {\prob{RBDS}} is {\npc} and, moreover, with respect to~$\kappa$ it is {\wbc}~\cite{fellows2001,garey}.

It is fairly easy to verify that both {\prob{Regular Set Packing}} and {\prob{RBDS}} contain {\prob{RX3C}} as a special case.

\begin{observation}
\label{obs-a}
The following relations  hold. 
\begin{itemize}
    \item {\prob{Regular Set Packing}} is equivalent to {\prob{RX3C}} when restricting the former problem so that each $\xce\in \xc$ is of cardinality three and every~$\xse\in \xs$ is contained in three elements of~$\xc$.
    \item {\prob{RBDS}} is equivalent to {\prob{RX3C}} when restricting the {\prob{RBDS}} so that input bipartite graphs are $3$-regular.
\end{itemize}

\end{observation}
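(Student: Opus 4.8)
The plan is to observe that both equivalences rest on a single counting principle, and then transcribe the structure between the set-system and graph formulations in each case. The key principle I would establish first: in each restricted instance the relevant covering objects (the sets, or the neighborhoods of blue vertices) all have size exactly three, each ground element lies in exactly three of them, and the ground set has size $3\xsize$. From this I would derive that a family of $\xsize$ such objects covers the entire ground set if and only if the objects are pairwise disjoint: indeed $\xsize$ disjoint three-element sets meet exactly $3\xsize$ distinct elements, which is all of them, and conversely covering all $3\xsize$ elements using only $\xsize$ sets of size three leaves no room for any overlap.

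For the first equivalence, I would first note that an instance of {\prob{RX3C}} is syntactically already a restricted {\prob{Regular Set Packing}} instance: the universe $\xs$ has $3\xsize$ elements, each $\xce\in\xc$ has cardinality three, and each $\xse\in\xs$ lies in exactly three members of $\xc$. It then remains to show, on such an instance, that an exact cover of size $\xsize$ exists if and only if a set packing of size $\xsize$ exists. One direction is immediate, since an exact cover is by definition a family of pairwise disjoint sets and hence a set packing. For the converse I would invoke the counting principle above: $\xsize$ pairwise disjoint three-element sets cover all $3\xsize$ elements of $\xs$, hence form an exact cover.

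For the second equivalence, given a $3$-regular bipartite graph $G=(R\muplus B,\eset)$, I would first record that double counting the edges yields $3\abs{R}=\abs{\eset}=3\abs{B}$, so $\abs{R}=\abs{B}$. I would then build the set system with universe $R$ and with one set per blue vertex, namely its neighborhood in $R$; $3$-regularity guarantees that each set has size three and each red vertex lies in exactly three sets, so this is precisely an {\prob{RX3C}} instance with $\abs{R}=\abs{B}=3\xsize$, and the construction is reversible via the element--set incidence relation. Under the induced bijection between subsets $B'\subseteq B$ and subcollections of the set system, a size-$\xsize$ set $B'$ dominates $R$ exactly when the corresponding $\xsize$ sets together meet all of $R$, which by the counting principle is equivalent to those sets being pairwise disjoint and thus forming an exact cover.

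The only delicate point I anticipate is isolating the counting step cleanly: the upgrade from ``covers everything'' to ``covers everything exactly once'' uses all three restrictions simultaneously (uniform set size three, multiplicity three, and $\abs{\xs}=3\xsize$), and none of them may be dropped. Stating this principle once, up front, is what lets both equivalences follow by routine transcription between the set-system and graph descriptions, so I would expect no further obstacle beyond checking that the bijections and the reverse constructions are well defined.
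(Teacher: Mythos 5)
Your proof is correct and is precisely the routine verification the paper intends: the paper states this observation without proof, and your single counting principle---that $\kappa$ three-element sets cover a $3\kappa$-element ground set if and only if they are pairwise disjoint, i.e., if and only if they form an exact cover---together with the syntactic identification of blue-vertex neighborhoods with sets is the entire content of both equivalences (including the edge-counting step giving $\abs{R}=\abs{B}$). One negligible inaccuracy in your closing remark: the multiplicity-three condition is not actually used in the counting step itself (uniform set size three and $\abs{A}=3\kappa$ suffice for the upgrade from ``covers everything'' to ``covers everything exactly once''); it is needed only so that the restricted instances are syntactically {\prob{RX3C}} instances, which does not affect the validity of your argument.
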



\section{Constructive Control for Sequential Rules}
In this section, we study constructive control for sequential rules.
When $k=1$, sequential $\omega$-Thiele's rules such that $\omega(1)>0$ are equivalent to AV. Such rules are termed {\it{standard}} rules by Lackner and Skowron~\citeas{DBLP:series/sbis/LacknerS23}. It is known that {\probb{CCAV}{AV}} is {\wah} and {\probb{CCDV}{AV}} is {\wbh} with respect to the solution size~$\ell$~\citep{DBLP:journals/ai/HemaspaandraHR07,DBLP:journals/tcs/LiuFZL09}.
It follows that for all standard rules~$\varphi$, {\probb{CCAV}{$\varphi$}} and {\probb{CCDV}{$\varphi$}} are respectively {\wah} and {\wbh} with respect to the solution size even when $k=1$.
However, it is not immediately clear how to modify these reductions for~$k$ being any arbitrary constant.
We derive such a reduction for {\prob{CCAV}} from {\prob{Set Packing}} which is completely different from the reductions established in~\citep{DBLP:journals/ai/HemaspaandraHR07,DBLP:journals/tcs/LiuFZL09}. More importantly, our reduction applies to many sequential $\omega$-Thiele's rules (excluding seqAV), while theirs are tailored for AV (can be adapted to show same results for seqAV).

\begin{theorem}
\label{ccav-seqpav-nph}
For every $\omega$-Thiele's rule such that $\Delta_{\omega}(i)<\omega(1)$ for all positive integers~$i$, {\probb{CCAV}{\memph{seq}$\varphi$}} is {\wah} with respect to~$\ell$, the number of added votes. Moreover, this holds for all positive constants~$k$ and when there is only one distinguished candidate.
\end{theorem}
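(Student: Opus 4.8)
The plan is to give a polynomial-time parameterized reduction from \prob{Set Packing} parameterized by~$\kappa$, which is \wac{} even on instances where all sets in~$\xc$ have the same cardinality (\prob{Regular Set Packing}). Given such an instance $(\xs, \xc, \kappa)$ with each $\xce\in\xc$ of common cardinality~$d$, I would construct an election and set the number~$\ell$ of added votes to be a function of~$\kappa$ alone (most naturally $\ell=\kappa$), so that a \wah{} lower bound for {\probb{CCAV}{seq$\varphi$}} parameterized by~$\ell$ follows. The target constant committee size~$k$ and the single distinguished candidate~$p$ must be built into the construction from the start; the distinguished candidate~$p$ should be forced into the committee precisely when the chosen unregistered votes encode a valid packing.

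The core idea is to exploit the defining inequality $\Delta_\omega(i)<\omega(1)$ for all positive~$i$. This says that the marginal gain of giving a voter a \emph{second} (or later) approved committee member is strictly less than the gain $\omega(1)$ of covering a fresh, so-far-unsatisfied voter. First I would introduce one candidate~$c_\xse$ for each element $\xse\in\xs$, a block of ``selector'' candidates encoding the sets, and the distinguished candidate~$p$. The registered votes~$V$ would be designed so that seq$\varphi$, run on $(C,V)$ alone, fills the committee with garbage/dummy candidates and never selects~$p$. Each unregistered vote in~$U$ would correspond to a set $\xce\in\xc$, approving exactly the element-candidates $\{c_\xse:\xse\in\xce\}$ together with~$p$; adding such a vote nudges the sequential procedure toward picking the corresponding element-candidates and~$p$. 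Because $\Delta_\omega(i)<\omega(1)$, adding two unregistered votes that \emph{share} an element $\xse$ wastes approval weight (the shared candidate~$c_\xse$ contributes only the small marginal $\Delta_\omega(1)$ on its second covering voter rather than a full $\omega(1)$), whereas adding votes corresponding to \emph{pairwise disjoint} sets keeps every covered voter at its first, maximally valuable approval. Thus the margin contributions arrange so that~$p$ is included in the seq$\varphi$ winning committee if and only if the added votes correspond to $\kappa$ pairwise disjoint sets, i.e.\ a set packing of size~$\kappa$.

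I would then argue correctness in two directions. For the forward direction, given a size-$\kappa$ set packing~$\xc'$, adding the $\kappa$ corresponding unregistered votes causes the early iterations of seq$\varphi$ to select the element-candidates of~$\xc'$ (each at its first-coverage value $\omega(1)$) and, crucially, to then select~$p$, whose accumulated margin contribution from the $\kappa$ packing-votes reaches the required threshold. For the reverse direction, I would show that if the added votes do \emph{not} correspond to a disjoint family, the overlap strictly lowers~$p$'s margin contribution below that of the competing dummy candidates (here the strict inequality $\Delta_\omega(i)<\omega(1)$ is essential), so~$p$ loses in some iteration and never enters the committee. Handling the tie-breaking order~$\rhd$ requires care: dummy candidates should be ordered just ahead of~$p$ so that~$p$ enters only when its score \emph{strictly} exceeds theirs, making the disjointness condition tight.

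The main obstacle I anticipate is twofold. First, the sequential nature means I cannot simply compare final committee scores; I must control the \emph{entire trajectory} of margin contributions across all~$k$ iterations, ensuring that the desired element-candidates and~$p$ are chosen in the right order and that no spurious candidate overtakes~$p$ at any intermediate step. This calls for padding gadgets (dummy candidates with carefully calibrated registered approval counts) so that the first several iterations behave predictably regardless of which unregistered votes are added, isolating~$p$'s fate to the overlap structure of the chosen sets. Second, making the construction work uniformly for \emph{every} admissible~$\omega$ and every fixed constant~$k$ — not just PAV or a single~$k$ — requires the score arithmetic to depend only on the qualitative inequality $\Delta_\omega(i)<\omega(1)$ rather than on specific values of~$\omega$; the numbers of registered dummy votes would therefore be expressed symbolically in terms of $\omega(1)$ and $\Delta_\omega(\cdot)$ and chosen large enough to dominate, which is exactly the delicate quantitative part to get right.
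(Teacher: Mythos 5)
Your high-level route coincides with the paper's: a parameterized reduction from {\prob{Set Packing}} with budget $\ell=\kappa$, one unregistered vote per set approving that set's element-candidates together with~$p$, and correctness resting on $\Delta_{\omega}(i)<\omega(1)$. But the two mechanisms you propose for making the correctness argument work are both flawed. First, your reverse-direction mechanism---``overlap strictly lowers~$p$'s margin contribution below that of the competing dummy candidates''---is false at the iteration that matters: every added vote approves~$p$, so~$p$'s margin contribution to the \emph{empty} committee is exactly (number of added votes)${}\cdot\omega(1)$, completely independent of whether the chosen sets overlap. Hence no dummy candidate, i.e., no candidate whose approvals do not depend on which unregistered votes were added, can do the blocking: a dummy that beats~$p$ at the first iteration in the overlapping case beats it in the packing case too, and vice versa. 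Overlap is only detectable \emph{after} a candidate approved by two or more added votes has entered the committee, so the blockers must be candidates whose scores are boosted by the added votes themselves. Second, your forward dynamic (element-candidates enter first, then~$p$) cannot be calibrated at all: if covered element-candidates beat~$p$ at the first iteration when a packing is added, then after each such insertion the element-candidates of the not-yet-hit packing sets retain \emph{exactly} the same margin (their approving votes still have no committee member), while~$p$'s margin strictly drops (here $\Delta_{\omega}(1)<\omega(1)$ works against you); so they keep beating~$p$ at every round, and since for large~$\kappa$ there are far more than~$k$ of them,~$p$ never enters even in the yes-case. The only workable dynamic is the opposite of the one you describe:~$p$ must win the \emph{first} iteration in the yes-case, which also forces the tie-breaking order to favour~$p$ over the candidates it ties with---the reverse of your ``$p$ enters only when its score strictly exceeds theirs''.

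The paper resolves both issues with a gadget your construction lacks: for each element~$a$ it creates a block~$C(a)$ of~$k$ candidates, each receiving $\kappa-1$ private registered votes, while the element-candidate~$a$ itself receives only $\kappa-2$ private registered votes, and the unregistered vote of a set~$H$ approves $\bigl(\bigcup_{a\in H}(C(a)\cup\{a\})\bigr)\cup\{p\}$; the tie-breaking order is (elements) $\rhd\, p \,\rhd$ (block candidates). In the yes-case~$p$ then has AV score~$\kappa$, strictly above every element-candidate (at most $\kappa-1$) and tied with block candidates, and~$p$ wins that tie, entering at iteration one. If fewer than~$\kappa$ votes are added, or if two added votes share an element~$a^{\star}$, then \emph{every} one of the~$k$ candidates in $C(a^{\star})$ has margin contribution at least $(\kappa-1)\omega(1)+\sum_{a^{\star}\in H}\Delta_{\omega}(\cdot)$, whereas~$p$'s is at most $(\kappa-2)\omega(1)+\sum_{a^{\star}\in H}\Delta_{\omega}(\cdot)$ (respectively $(\kappa-t)\omega(1)+\cdots$ with $t\ge 2$), and this comparison holds at every partial committee because the block candidates' private votes always contribute full $\omega(1)$'s while the at most $\kappa-2$ remaining sets contribute at most $\omega(1)$ each to~$p$. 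Since the block has exactly~$k$ members, they shut~$p$ out of all~$k$ iterations. It is this ``$k$ perpetual blockers per element'' structure---not dummies---that makes the reverse direction sound for every rule satisfying the hypothesis; with a single candidate per element, one overlapping pair boosts only one candidate, and for rules with $\Delta_{\omega}(1)$ close to $\omega(1)$ (e.g., $\omega(1)=1$, $\omega(2)=1.8$)~$p$ regains the lead right after that single candidate is inserted, breaking the reduction. Finally, vote counts are integers, so they cannot be ``expressed symbolically in terms of $\omega(1)$ and $\Delta_{\omega}(\cdot)$''; the paper needs only the combinatorial counts $\kappa-2$ and $\kappa-1$, with all score comparisons reduced to $0\le\Delta_{\omega}(i)<\omega(1)=\Delta_{\omega}(0)$.
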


\begin{proof}
Let $\varphi$ be an $\omega$-Thiele's rule such that $\Delta_{\omega}(i)<\omega(1)$ for all positive integers~$i$. Notice that as~$\omega$ is nondecreasing, this implies that $\omega(1)>0$.
Let $(\xs, \xc, \kappa)$ be an instance of {\prob{Set Packing}}. We construct an instance of {\probb{CCAV}{seq$\varphi$}} as follows. Let~$k=\bigo{1}$ be a positive constant. For every $\xse\in \xs$, we first create one candidate denoted by the same symbol for simplicity, and then we create a set~$C(\xse)$ of~$k$ candidates. Let $C^+(\xse)=C(\xse)\cup \{\xse\}$. In addition, we create a candidate~$p$. Let $J=\{p\}$ and let $C=(\bigcup_{\xse\in \xs}C^+(\xse))\cup J$. Regarding the registered votes, for every $\xse\in \xs$, we create
\begin{itemize}
\item $\kappa-2$ votes approving only $\xse$, and
\item  $\kappa-1$ votes approving only~$c$ for every $c\in C(\xse)$.
\end{itemize}
Let~$V$ be the multiset of the above $\abs{\xs}\cdot (\kappa-2)+\abs{\xs}\cdot k\cdot (\kappa-1)$ registered votes.
The unregistered votes are created according to~$\xc$. In particular, for every $\xce\in \xc$, we create one unregistered vote~$v(\xce)$ which approves~$p$, and all candidates created for the elements in $\xce$, i.e.,
\[v(\xce)=\left(\bigcup_{\xse\in \xce}C^+(\xse)\right) \cup \{p\}.\]
In the following discussion, for a given $\xc'\subseteq \xc$, we use~$U(\xc')$ to denote the set of the unregistered votes corresponding to~$\xc'$, i.e., $U(\xc')=\{v(\xce) \setmid \xce\in \xc'\}$.
Let $\ell=\kappa$. The instance of {\probb{CCAV}{seq$\varphi$}} is $((C, V\muplus U(\xc)), k, J, \ell)$. The tie-breaking order is
$\overrightarrow{\xs} \rhd p\rhd \overrightarrow{\bigcup_{\xse\in \xs}C(\xse)}$.
The instance clearly can be constructed in polynomial time. In the following, we prove the correctness of the reduction.

$(\Rightarrow)$ Assume that there exists $\xc'\subseteq \xc$ of cardinality~$\kappa$ so that every two elements in~$\xc'$ are disjoint.  Let $E=(C, V\muplus U(\xc'))$, and let~$\w$ be the winning $k$-committee of seq$\varphi$ at~$E$ with respect to the tie-breaking order~$\rhd$. By the construction of registered and unregistered votes, in the election~$E$,~$p$ has AV score~$\kappa$, every $\xse\in \xs$ has AV score at most $\kappa-1$, and every other candidate has AV score at most~$\kappa$. Then, as $\omega(1)>0$ and~$p$ is before all the other candidates in~$C\setminus A$ in the tie-breaking order, it holds that~$p$ is the first candidate added into~$\w$ according to the definition of seq$\varphi$.  So, we have $p\in \w$, and hence the instance of {\probb{CCAV}{seq$\varphi$}} is a {\yesins}.

$(\Leftarrow)$ Assume that there exists $U(\xc')\subseteq U(\xc)$ where $\xc'\subseteq \xc$ so that $\abs{\xc'}\leq \ell=\kappa$ and~$p$ is contained in the winning $k$-committee, denoted~$\w$, of seq$\varphi$ at $E=(C, V\muplus U(\xc'))$ with respect to~$\rhd$. We first prove the following claim.

In what follows, for a given $C'\subseteq C$ and a given $\xce\in \xc'$, let $m(C', \xce)=\abs{C'\cap v(\xce)}$ be the number of candidates in~$C'$ approved by~$v(\xce)$.

\begin{claim}
\label{claim-a}
 $\abs{\xc'}=\kappa$.
\end{claim}
{\noindent\it{Proof of Claim~\ref{claim-a}.}}
Assume, for the sake of contradiction, that $\abs{\xc'}<\kappa$. Then, the AV score of~$p$ in~$E$ is at most $\kappa-1$. As $\omega(1)>0$ and~$p$ it not approved by any registered votes,~$p$ is not contained in the winning $k$-committee of seq$\varphi$ at $(C, V)$. As a consequence,~$U(\xc')$ is nonempty. Let~$\xce^{\star}$ be any arbitrary element in~$\xc'$, and let~$\xse^{\star}$ be any arbitrary element in~$\xce$. We compare the~$\varphi$ margin contributions of candidates in~$C(\xse^{\star})$ and that of the distinguished candidate~$p$ to a committee of at most $k-1$ candidates. In particular, let~$C'$ be any arbitrary committee of at most $k-1$ candidates so that $C'\subseteq C\setminus \{p\}$. Let~$c$ be any arbitrary candidate in $C(\xse^{\star})\setminus C'$. As~$\abs{C(\xse^{\star})}=k$ and $\abs{C'}\leq k-1$, such a candidate~$c$ exists. Recall that there are $\kappa-1$ registered votes approving only~$c$. The~$\varphi$ margin contribution of~$c$ to~$C'$ in~$E$ is then
\begin{equation}
\label{eq-xa}
(\kappa-1)\cdot \omega(1)+\sum_{\xse^{\star}\in \xce\in \xc'}\triangle_{\omega}(m(C', \xce)).
\end{equation}
As every unregistered vote approves~$p$, the~$\varphi$ margin contribution of~$p$ to~$C'$ in~$E$ is
\begin{equation}
\label{eq-xb}
    \sum_{\xce\in \xc'}\triangle_{\omega}(m(C', \xce)) = \sum_{\xse^{\star}\not\in \xce\in \xc'}\triangle_{\omega}(m(C', \xce))+ \sum_{\xse^{\star}\in \xce\in \xc'}\triangle_{\omega}(m(C', \xce))\\
\end{equation}
It follows that
\begin{equation*}
\begin{aligned}
     \eqref{eq-xb}-\eqref{eq-xa}= &  \left(\sum_{\xse^{\star}\not\in \xce\in \xc'}\triangle_{\omega}(m(C', \xce))\right)   - (\kappa-1)\cdot \omega(1)  \\
                             \leq & (\kappa-2)\cdot \max_{\xse^{\star}\not\in \xce\in \xc'}\triangle_{\omega}(m(C', \xce)) - (\kappa-1)\cdot \omega(1)\\
                                < & 0. \\
\end{aligned}
\end{equation*}
The second line in the above inequality is due to that $\abs{\xc'}<\kappa$, and~$a^{\star}$ is contained in at least one element of~$\xc'$. The last line is due to that $\triangle_{\omega}(i)<\omega(1)$ for all $i\geq 1$ (and hence $\omega(1)>0$), and the fact that $\triangle_{\omega}(0)=\omega(1)$.
Therefore, the~$\varphi$ margin contribution of~$c$ to~$C'$ is larger than that of~$p$ to~$C'$ in the election~$E$. As~$C'$ can be any committee of $C\setminus \{p\}$ of at most $k-1$ candidates, this means that in the case where $\abs{\xc'}<\kappa$, the existence of the set~$C(\xse^{\star})$ of~$k$ candidates prevents~$p$ from being contained in the winning $k$-committee of seq$\varphi$ at~$E$ with respect to~$\rhd$. In other words, $\abs{\xc'}<\kappa$ contradicts with $p\in \w$. This completes the proof for the claim.
\medskip

By Claim~\ref{claim-a}, to complete the proof of the theorem, it suffices to prove that~$\xc'$ is a set packing. For the sake of contradiction, assume that there exists $\xse\in \xs$ which occurs in at least two elements of~$\xc'$. This implies that the AV score of~$\xse$ in~$E$ is at least~$\kappa$, and the AV score of every candidate in~$C(\xse)$ is at least~$\kappa+1$. As $\abs{\xc'}=\kappa$, the AV score of~$p$ in~$E$ is~$\kappa$.
Therefore, the first candidate added into~$\w$ according to the procedure of seq$\varphi$ must be a candidate from $C\setminus (\{p\}\cup \xs)$ which is approved by at least two votes in~$U(\xc')$.
Without loss of generality, let~$c^{\star}$ denote the first candidate added into~$\w$, and let~$a^{\star}\in \xs$ be such that~$c^{\star}\in C(\xse^{\star})$. Let~$C'$ be a committee such that $c^{\star}\in C' \subseteq C\setminus \{p\}$ and $\abs{C'}\leq k-1$. Let~$t$ be the number of votes in~$U(\xc')$ approving~$c^{\star}$, i.e., $t=\{\xce \setmid \xse^{\star}\in \xce\in \xc'\}$. By the above discussion, it holds that $t\geq 2$. The~$\varphi$ margin contribution of~$p$ to~$C'$ in~$E$ is
\[\sum_{\xce\in \xc'}\Delta_{\omega}(m(C', \xce))\leq (\kappa-t)\cdot \omega(1)+\sum_{\xse^{\star}\in \xce\in \xc'} \Delta_{\omega}(m(C', \xce)).\]
Additionally, the~$\varphi$ margin contribution of every candidate from $C(\xse^{\star})\setminus C'$ to~$C'$ in the election~$E$ is at least
\[(\kappa-1)\cdot \omega(1)+\sum_{\xse^{\star}\in \xce\in \xc'} \Delta_{\omega}(m(C', \xce)),\]
which is larger than that of~$p$ provided $t\geq 2$.
As $\abs{C(\xse^{\star})}=k$, in light of the above discussion,~$p$ cannot be contained in the winning $k$-committee of seq$\varphi$ at~$E$, a contradiction. Therefore, we can conclude now that~$\xc'$ is a set packing of cardinality~$\kappa$.
\end{proof}

A matching in a graph~$G=(\vset, \eset)$ is a subset $\eset'\subseteq \eset$ of edges so that none of two edges in~$\eset'$ share an endpoint in common.
A vertex is saturated by a matching if the vertex is an endpoint of some edge in the matching.

\begin{theorem}
\label{ccdv-seqpav-nph}
For every $\omega$-Thiele's rule such that $\omega(2)<2\omega(1)$, {\probb{CCDV}{\memph{seq}$\varphi$}} is {\nph}.
Moreover, this holds for all positive constants~$k$, and when there is only one distinguished candidate, every vote approves at most three candidates, and every candidate is approved by at most two votes.
\end{theorem}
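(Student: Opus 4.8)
The plan is to reduce from a sparse \np-complete covering problem whose incidence structure already respects the two degree bounds. A natural choice is \prob{Vertex Cover} restricted to $3$-regular graphs (equivalently, by Observation~\ref{obs-a}, \prob{RX3C}): encode each vertex of $G=(\vset,\eset)$ as a voter and each edge as a candidate, so that every edge-candidate is approved by exactly the two voters sitting at its endpoints (at most two votes per candidate) and every voter approves exactly its incident edges (at most three candidates per vote, since $G$ is $3$-regular). Fix the committee size to an arbitrary constant $k$, introduce the distinguished candidate $p$ (so $J=\{p\}$) approved by two private voters that approve nothing else, and put $p$ \emph{last} in the tie-breaking order $\rhd$, with all edge-candidates ahead of it. Note first that $\omega(2)<2\omega(1)$ together with monotonicity of $\omega$ forces $\omega(1)>0$, so all margin contributions behave as expected.

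\textbf{Mechanism via concavity.} The structural fact driving the whole argument is that under $\omega(2)<2\omega(1)$, i.e.\ $\Delta_\omega(1)<\Delta_\omega(0)=\omega(1)$, the greedy process of seq$\varphi$ builds a \emph{matching} in its first phase. A candidate approved by two voters has margin contribution exactly $2\omega(1)$ precisely when both voters are still ``fresh'' (approve no committee member yet), and its contribution drops strictly below $2\omega(1)$ the instant one of its voters is used. Since $2\omega(1)$ is the largest possible margin (every candidate has at most two approvers), seq$\varphi$ keeps adding the $\rhd$-first \emph{fresh} edge-candidate; the edges it picks are pairwise non-adjacent, hence form a greedily built matching, and $p$---also of constant margin $2\omega(1)$, but last in $\rhd$---is selected only at the first iteration at which no fresh edge-candidate remains. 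To make this a clean single-slot competition for every constant $k$, I would prepend a filler gadget of $k-1$ mutually disjoint dedicated edge-candidates (each with its own two private voters), placed first in $\rhd$; these occupy the first $k-1$ iterations irrespective of what happens in the main part, leaving exactly one contested slot in which $p$ faces the surviving fresh edge-candidates of $G$.

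\textbf{Correctness.} For the forward direction, given a vertex cover of size $\kappa$, delete the corresponding $\kappa$ voters: every edge-candidate then loses an endpoint-voter, so at the contested slot no edge-candidate is fresh, and $p$ (still at margin $2\omega(1)$, the unique maximum there) is added. Conversely, if no cover of size $\leq \ell=\kappa$ exists, then after any admissible deletion some main edge-candidate retains both endpoints; I would show, by a matching argument, that seq$\varphi$ can always be supplied with a fresh edge-candidate until all $k$ slots are filled, so $p$ is never reached. By construction the instance keeps every vote at $\leq 3$ approvals and every candidate at $\leq 2$ approvals, every vote approves at most three candidates, and for each fixed $k$ the construction is polynomial; this yields \np-hardness for every positive constant $k$ (i.e.\ \paranph\ with respect to $k$).

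\textbf{Main obstacle.} The delicate part is the converse direction, where one must defeat \emph{every} deletion strategy, not just the intended one. Because $2\omega(1)$ is the sole top margin value, an adversary could try to spend part of the budget disabling filler candidates, or to steer the greedy matching into a ``corner'' in which it spoils all remaining edge-candidates (dropping each to margin $\omega(1)+\Delta_\omega(1)<2\omega(1)$) before $k$ members are chosen, thereby slipping $p$ in without covering all of $G$. Ruling out this budget-trading is exactly where the matching machinery is essential: I expect to argue that freeing any slot is immediately backfilled by another fresh edge-candidate unless every edge is genuinely covered, so that the greedily built matching cannot be driven below size $k$ within budget $\kappa$ unless the deleted voters form an honest cover. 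Making this accounting tight---relating the deletions, the surviving edges, and the size of the greedy maximal matching---while simultaneously respecting the stringent $\le 3$/$\le 2$ degree bounds (which prevent using high-score ``walls'' as in the \prob{CCAV} construction of Theorem~\ref{ccav-seqpav-nph}) is the technical crux of the proof.
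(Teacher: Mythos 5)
You have correctly identified the paper's skeleton: the same reduction source ({\prob{Vertex Cover}} on $3$-regular graphs), the same edge-candidate/vertex-vote encoding, two private voters for~$p$, $p$ placed last in the tie-breaking order, and the key mechanism that $\omega(2)<2\omega(1)$ forces the greedy phase to build a maximal matching whose saturated vertices, together with the deleted vertex-votes, will cover~$G$. But your filler gadget breaks exactly at the point you flag as the ``technical crux.'' A filler consisting of one candidate with two private voters can be disabled by deleting a \emph{single} vote, while the matching edge that backfills the freed slot saturates \emph{two} vertices; the exchange rate is therefore $1$ deletion for $2$ covered vertices, and the accounting $\abs{S}+\abs{S'}\leq \kappa$ fails. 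Concretely, take $k=2$, $G=K_4$, $\kappa=\ell=2$ (minimum vertex cover of~$K_4$ has size~$3$, so this is a {\noins} of {\prob{Vertex Cover}}). The chair deletes the vote~$v(u)$ of one vertex~$u$ and one private voter of the single filler~$f$. Then~$f$ has margin~$\omega(1)$, the three triangle candidates of $K_4-u$ have margin $2\omega(1)$, and the greedy process picks one triangle edge-candidate in slot~$1$, which drops both remaining triangle candidates to margin $\omega(2)<2\omega(1)$; at slot~$2$ the unique maximum-margin candidate is~$p$, so the chair succeeds with budget~$2$ although no vertex cover of size~$2$ exists. Your converse direction, as sketched, cannot be repaired for this gadget.

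The paper's fix is a more expensive filler: for each $i\in[k-1]$ a \emph{triangle} gadget $C(i)=\{c_i^1,c_i^2,c_i^3\}$ with three votes approving the pairs $\{c_i^1,c_i^2\}$, $\{c_i^1,c_i^3\}$, $\{c_i^2,c_i^3\}$. Deleting only one of these votes leaves some $c_i^j$ with both its approvals intact, so disabling a gadget costs at least \emph{two} deletions, restoring the exact two-for-two exchange: if~$t$ gadget votes are deleted and $t_x$ counts gadgets losing~$x$ votes, then the number of greedy matching edges~$A'$ slipping into filler slots satisfies $\abs{A'}\leq t_2+t_3\leq t/2$, whence $\abs{S}+2\abs{A'}\leq \abs{S}+t\leq \kappa$. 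Note also a second, related misstatement in your outline: even in the corrected reduction the deleted voters need \emph{not} ``form an honest cover''; the chair may legitimately trade two gadget deletions for one matching edge, and the witness cover is $S\cup S'$ where~$S'$ is the set of vertices saturated by the greedy maximal matching of $G-S$, not~$S$ alone. Your forward direction and the degree/sparsity bookkeeping ($\leq 3$ approvals per vote, $\leq 2$ votes per candidate, which the triangle gadget also respects) are otherwise consistent with the paper.
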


\begin{proof}
Let $\varphi$ be an $\omega$-Thiele's rule such that $\omega(2)<2\omega(1)$. Observe that by the definition of $\omega$-Thiele's rule, $\omega(2)<2\omega(1)$ implies $\omega(1)>0$.
We prove the theorem by a reduction from {\prob{Vertex Cover}} restricted to $3$-regular graphs. Let~$k=\bigo{1}$ be a positive constant. Let $(G, \kappa)$ be an instance of {\prob{Vertex Cover}} where $G=(\vset, \eset)$ is a $3$-regular graph. We create an instance of {\probb{CCDV}{seq$\varphi$}} as follows. Without loss of generality, we assume that~$G$ contains at least two edges.
Candidates are created as follows.
\begin{itemize}
    \item First, for each edge $\edge{u}{u'}\in \eset$, we create one candidate $c(\edge{u}{u'})$. In the following exposition, for a given $\eset'\subseteq \eset$, we use $C(\eset')=\{c(\edge{u}{u'}) \setmid \edge{u}{u'}\in \eset'\}$ to denote the set of candidates corresponding to~$\eset'$.
    \item In addition, for each $i\in [k-1]$, we create a set $C(i)=\{c_i^1, c_i^2, c_i^3\}$ of three candidates (for $k=1$ we do not have such candidates). Let $C([k-1])=\bigcup_{i\in [k-1]}C(i)$.
    \item Finally, we create one candidate~$p$.
\end{itemize}
Let $J=\{p\}$ and let $C=C(\eset)\cup J\cup C([k-1])$. Obviously, $\abs{C}=\abs{\eset}+3(k-1)+1$.
From $\abs{\eset}\geq 2$, it follows that $\abs{C}\geq 3k>k$.
We create the following votes:
\begin{itemize}
\item two votes approving only the distinguished candidate~$p$;
\item for each vertex $\vere\in \vset$, one vote $v(\vere)$ which approves all the candidates corresponding to edges covered by~$\vere$, i.e., $v(\vere)=\{c(\edge{\vere}{\vere'}) \setmid \vere'\in \neighbor{G}{\vere}\}$;
\item for every $i\in [k-1]$, one set~$V(i)$ of three votes:
\begin{itemize}
\item one vote approving exactly $c_i^1$ and $c_i^2$;
\item one vote approving exactly $c_i^1$ and $c_i^3$; and
\item one vote approving exactly $c_i^2$ and $c_i^3$.
\end{itemize}
\end{itemize}
Let~$V$ denote the multiset of the above constructed votes. Apparently, $\abs{V}=2+\abs{\vset}+3(k-1)=\abs{\vset}+3k-1$. Let $V([k-1])=\bigcup_{i\in [k-1]}V(i)$.
Finally, let $\ell=\kappa$. The instance of {\probb{CCDV}{seq$\varphi$}} is $((C, V), k, J,  \ell)$ which can be constructed in polynomial time. The tie-breaking order is $\overrightarrow{C([k-1])}\rhd \overrightarrow{C(\eset)}\rhd p$. In the following, we prove the correctness of the reduction.

$(\Rightarrow)$ Assume that~$G$ has a vertex cover of~$\kappa$ vertices. Let $V'=\{v(\vere) \setmid \vere\in S\}$ be the set of votes corresponding to~$S$. Let $E=(C, V\setminus V')$. As $G-S$ does not have any edges, in the election~$E$, every edge-candidate~$c(\edge{\vere}{\vere'})\in C(\eset)$ has AV score at most one. Clearly, every candidate in $C([i-1])\cup \{p\}$ has AV score two in~$E$. Then, by (1) the fact that~$p$ is the last one in the tie-breaking order~$\rhd$, (2) the construction of votes for each $i\in [k-1]$, and (3) the fact that $\omega(1)>0$ and  $\omega(2)<2\omega(1)$, we know that the first $k-1$ candidates added into the winning $k$-committees of~seq$\varphi$ at~$E$ are all from $C([k-1])$, one from each~$C(i)$ where $i\in [k-1]$. Formally, it holds that $w^{\leq k-1}_{E}\subseteq C([k-1])$ and, moreover, $\abs{w^{\leq k-1}_{E}\cap C(i)}=1$ holds for all $i\in [k-1]$. Then, by the construction of the election, every candidate in~$C(\eset)$ has~$\varphi$ margin contribution at most~$\omega(1)$ to~$w^{\leq k-1}_{E}$, every candidate in $C([k-1])\setminus w^{\leq k-1}_{E}$ has~$\varphi$ margin contribution~$\Delta_{\omega}(1)+\omega(1)=\omega(2)$ to~$w^{\leq k-1}_{E}$, and~$p$ has~$\varphi$ margin contribution~$2\omega(1)$ to~$w^{\leq k-1}_{E}$. As $\omega(1)>0$ and $\omega(2)<2\omega(1)$, the last candidate added into the winning $k$-committee of~seq$\varphi$ at~$E$ with respect to the tie-breaking order~$\rhd$ must be~$p$, i.e., $w^{k}_{E}=p$.

$(\Leftarrow)$ Assume that there exists $V'\subseteq V$ of at most $\ell=\kappa$ votes so that~$p$ is contained in the winning $k$-committee, denoted~$\w$, of~seq$\varphi$ at the election $E=(C, V\setminus V')$ with respect to~$\rhd$. Observe that~$V'$ contains neither of the two votes approving~$p$. Let
\[S=\{\vere\in \vset \setmid v(\vere)\in V'\}\]
and let
\[t=\abs{V'\cap V([k-1])}.\]
Therefore, $\abs{S}+t=\abs{V'}\leq \kappa$. For each $x\in \{0, 1, 2, 3\}$, let $I_x=\{i\in [k-1] \setmid \abs{V(i)\cap V'}=x\}$, and let $t_x=\abs{I_x}$. It holds that
\begin{equation}
\label{eq-ccc}
    t=\sum_{x\in [3]} x \cdot t_x.
\end{equation}
Without loss of generality, assume that~$p$ is the $j$-th candidate added into~$\w$ for some integer $j\in [k]$, i.e., $\w^j_{E}=p$. By the definition of~$\rhd$, the fact that $\omega(1)>0$ and $\omega(2)<\omega(1)$, and that the two votes approving~$p$ is excluded from~$V'$, we know that for every $j'<j$  the~$\varphi$ margin contribution of~$\w^{j'}_E$ to~$\w^{\leq j'-1}_E$ is at least $2\omega(1)$. Let $C(\eset')=\w^{\leq j-1}_{E}\cap C(\eset)$ where $\eset'\subseteq \eset$. The above discussion implies that~$A'$ forms a matching of $G-S$. More importantly, the matching~$A'$ is maximal in $G-S$, since otherwise there exists a candidate in $C(A)\setminus \w^{\leq j-1}_E$ whose~$\varphi$ margin contribution to~$\w^{\leq j-1}_E$ is~$2\omega(1)$ and then by the tie-breaking order~$p$ cannot be~$\w^{j-1}_E$, a contradiction. Let~$S'$ be the set of vertices saturated by~$A'$. So, we have that
\begin{equation}
\label{eq-aa}
    \abs{S'}=2\abs{A'}.
\end{equation} Obviously, $S\cup S'$ is a vertex cover of~$G$. To complete the proof, it suffices to show that $\abs{S}+\abs{S'}\leq \kappa$. To this end, let $B=\w^{\leq j-1}_E\cap C([k-1])$. Analogous to the above reasoning, we know that for each $c\in B$ which is the $j'$-th candidate added into~$\w$ for some $j'\leq j$, the~$\varphi$ margin contribution of~$c$ to $\w^{\leq j'-1}_E$ is at least $2\omega(1)$. This implies that for each $c\in B\cap C(i)$ where $i\in [k-1]$, neither of the two votes in~$V(i)$ approving~$c$ is contained in~$V'$. Moreover, for each $i\in I_0\cup I_1$, there exists at least one candidate in~$C(i)$ which is approved by at least two votes in $V(i)\setminus V'$. So, the~$\varphi$ margin contribution of this candidate to every committee disjoint from~$C(i)$ is $2\omega(1)$. As~$p$ is the last candidate in the tie-breaking order, it holds that for every $i\in I_0\cup I_1$, $\w^{\leq j-1}_E$ contains one candidate from~$C(i)$. Summarizing these discussions, we obtain
$\abs{C(A')}+\abs{B}=\abs{A'}+(k-1-t_2-t_3)$. Then, from $C(A')\cup B=\w^{\leq j-1}_E$ and $\abs{\w^{\leq j-1}_E}=j-1\leq k-1$, we obtain
\begin{equation}
\label{eq-bb}
    \abs{A'}\leq t_2+t_3\leq \frac{t}{2}\leq \frac{\kappa-\abs{S}}{2},
\end{equation}
where $t_2+t_3\leq \frac{t}{2}$ is by~\eqref{eq-ccc}.
Putting~\eqref{eq-aa} and \eqref{eq-bb} together yields $\abs{S}+\abs{S'}\leq \kappa$. The proof is completed.
\end{proof}

Now we study constructive control by adding candidates. As single winner AV is immune to CCAC~\cite{DBLP:journals/ai/HemaspaandraHR07}, when $k=1$ sequential $\omega$-Thiele's rules are immune to CCAC (assuming ties broken by a predefined linear order). However, in general this is not the case, as illustrated in Example~\ref{ex-a}.

\begin{example}[sequential $\omega$-Thiele's rules are susceptible to CCAC]
\label{ex-a}
Consider an election~$E$ with the following four candidates and three votes.
\begin{center}
\begin{tabular}{c cccc}\hline
       & $a$       & $b$         &~$p$            & $c$ \\ \hline

$v_1$  & $\checkmark$  &             &                & $\checkmark$ \\

$v_2$  & &     $\checkmark$        &                & $\checkmark$ \\

$v_3$  &  &             & $\checkmark$        & \\ \hline
\end{tabular}
\end{center}
Here, a check mark means that the corresponding voter approves the corresponding candidate.
Let the tie-breaking order be $a\rhd b\rhd p\rhd c$. The winning $2$-committee of seqPAV at the election~$E$ restricted to $\{a, b, p\}$ is $\{a, b\}$. However, the winning $2$-committee of seqPAV at~$E$ is $\{c, p\}$.
\end{example}

In fact, Example~\ref{ex-a} shows that for all $\omega$-Thiele's rules~$\varphi$ such that $\omega(2)<2\omega(1)$, seq$\varphi$ is susceptible to CCAC even when $k=2$. The above example can be easily extended to show the susceptibility of all  $\omega$-Thiele's rules~$\varphi$ such that $\omega(2)<2\omega(1)$ to {\prob{CCAC}} for all $k\geq 2$.

Regarding the complexity of {\probb{CCAC}{seq$\varphi$}}, we have the following theorem.

\begin{theorem}
\label{thm-ccac-seqThiele-wbh}
For every $\omega$-Thiele's rule~$\varphi$ such that $\Delta_{\omega}(i)<\omega(1)$ for all positive integers~$i$, {\probb{CCAC}{\memph{seq}$\varphi$}} is {\wbh} with respect to~$k+\ell$, the size of a winning committee plus the number of added candidates. Moreover, this holds even if there is only one distinguished candidate.
\end{theorem}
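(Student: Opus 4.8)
The plan is to reduce from \prob{Red-Blue Dominating Set} (RBDS), which the excerpt records as {\wbc} with respect to~$\kappa$. Given an instance $(G,\kappa)$ with $G=(R\cup B,\eset)$, I would set $\ell=\kappa$ and $k=\kappa+1$, so that the combined parameter $k+\ell=2\kappa+1$ is bounded by a function of~$\kappa$ and the map is a genuine parameterized reduction. I create one distinguished candidate~$p$ (registered, $J=\{p\}$), one \emph{competitor}~$q_r$ (registered) for each red vertex $r\in R$, and one \emph{addable}~$d_b$ (unregistered) for each blue vertex $b\in B$. The votes would be: one vote approving only~$p$; for each $r\in R$ a vote~$v_r$ approving~$q_r$ together with all~$d_b$ such that $b\in\neighbor{G}{r}$; and for each $b\in B$ two private votes approving only~$d_b$. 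The tie-breaking order~$\rhd$ places every~$q_r$ before~$p$ (the~$d_b$ may go first and dummy fillers last). If some $|C|<k$ I pad~$C$ with candidates approved by no vote, and WLOG assume no red vertex is isolated (otherwise RBDS is a trivial no-instance).

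The mechanism rests on the hypothesis $\Delta_{\omega}(i)<\omega(1)$ for $i\geq 1$ (which also forces $\omega(1)>0$): the first approval a vote contributes to a candidate is worth $\Delta_{\omega}(0)=\omega(1)$, while every later approval within the same vote is worth strictly less. Since~$q_r$ is approved only by~$v_r$, its margin contribution to the current committee~$\w$ equals $\Delta_{\omega}(|\w\cap v_r|)$, which is~$\omega(1)$ exactly when~$v_r$ contains no committee member (vertex~$r$ ``uncovered'') and strictly below~$\omega(1)$ once some selected~$d_b$ with $b\in\neighbor{G}{r}$ lies in~$\w$ (vertex~$r$ ``covered''). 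The candidate~$p$ is approved only by its private vote, so its margin contribution stays~$\omega(1)$ throughout, while each~$d_b$ retains margin contribution at least $2\omega(1)>\omega(1)$ from its two private votes as long as it is unselected. Hence every added~$d_b$ is chosen strictly before~$p$ and before any~$q_r$, so adding $t\leq\kappa$ blue candidates occupies exactly the first~$t$ slots.

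For the forward direction, a dominating set~$B'$ with $|B'|=\kappa$ gives $D'=\{d_b:b\in B'\}$: the first~$\kappa$ slots are filled by these~$d_b$, every~$v_r$ then contains a committee member, so every~$q_r$ drops below~$\omega(1)$, and at slot $k=\kappa+1$ the candidate~$p$ is the unique maximizer and enters. For the backward direction, suppose adding $D'$ (say $t=|D'|\leq\kappa$ blues covering a red set $\mathrm{Cov}$) puts~$p$ in the committee. After the~$t$ forced~$d_b$-selections, every uncovered~$q_r$ still has margin contribution~$\omega(1)$ and precedes~$p$ in~$\rhd$, so all such~$q_r$ are selected before~$p$; thus~$p$ reaches the committee only if $t+(|R|-|\mathrm{Cov}|)\leq\kappa$, i.e.\ $|B'|$ plus the number of uncovered red vertices is at most~$\kappa$. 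Adjoining one (existing) blue neighbor for each uncovered red vertex then yields at most~$\kappa$ blue vertices dominating all of~$R$, witnessing a yes-instance of RBDS, which establishes the equivalence.

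The step I expect to be the main obstacle is precisely this backward analysis: the construction a priori only certifies a \emph{partial} cover (covering all but $\kappa-t$ red vertices with $t$ blues), and I must argue this still forces a genuine size-$\kappa$ dominating set. The ``one blue neighbor per uncovered red vertex'' patching closes the gap, but it leans on the no-isolated-red assumption and on the exact budget $k=\kappa+1$, $\ell=\kappa$; checking that neither too few nor superfluous additions can sneak~$p$ in through slot-filling is where the care lies. The remaining verifications — that every added~$d_b$ is genuinely selected first, that covered competitors never tie~$p$ up to the favourable tie-break, and that padding preserves $k\leq|C|$ — are routine.
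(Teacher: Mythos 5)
Your reduction is correct and is essentially the paper's own: the paper likewise reduces from {\prob{RBDS}} with $\ell=\kappa$ and $k=\kappa+1$, a single private vote for~$p$, one vote per red vertex approving its own candidate together with the candidates of its blue neighbours, a tie-breaking order placing blue candidates before red candidates before~$p$, and the identical backward-direction argument that the selected blues plus the uncovered reds fit into $k-1$ slots, patched by one arbitrary blue neighbour per uncovered red vertex. The only substantive difference is your two private votes per blue candidate, which force every added blue into the committee up front and let you replace the paper's forward-direction case analysis (a covered red loses strictly to~$p$ since $\Delta_{\omega}(i)<\omega(1)$; an uncovered red loses by tie-breaking to some still-unselected blue that dominates it) with a simpler slot-counting argument.
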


\begin{proof}
Let~$\varphi$ be an $\omega$-Thiele's rule~$\varphi$ such that $\Delta_{\omega}(i)<\omega(1)$ for all positive integers~$i$.
We prove the theorem via a reduction from {\prob{RBDS}}. Let $(G, \kappa)$ be an instance of {\prob{RBDS}}, where~$G$ is a bipartite graph with the bipartition $(R, B)$. Without loss of generality, we assume that~$G$ does not contain any isolated vertices, and assume that $\abs{R}\geq \kappa+1$. These assumptions do not change the {\wbhns} of {\prob{RBDS}}.\footnote{The assumption that~$G$ does not contain any isolated vertices is clear. Then, if $\abs{R}\leq \kappa$, the instance is a trivial {\yesins} because by selecting, for each red vertex any arbitrary one blue vertex dominating the red vertex, we obtain a subset of at most~$\kappa$ blue vertices which dominate~$R$. Apparently, any $\kappa$-set of blue vertices containing this subset also dominates~$R$.}
We construct an instance of {\probb{CCAC}{seq$\varphi$}} as follows.
For each $\vere\in R\cup B$, we create one candidate~$c(\vere)$. Let $C(R)=\{c(r) \setmid r\in R\}$. In the following analysis, for a given $B'\subseteq B$, we use~$C(B')$ to denote the set of candidates corresponding to~$B'$, i.e., $C(B')=\{c(b) \setmid b\in B'\}$.
In addition, we create one candidate~$p$. Let $C=\{p\}\cup C(R)$, $D=C(B)$, and $J=\{p\}$. We create the following votes:
\begin{itemize}
\item one vote approving only the candidate~$p$;
\item for each $r\in R$, one vote $v(r)=\{c(r)\}\cup C(\neighbor{G}{r})$ approving the candidate corresponding to~$r$ and all candidates corresponding to the neighbors of~$r$ in~$G$.
\end{itemize}
Let $V$ be the multiset of the above created $1+\abs{R}$ votes. The tie-breaking order is $\overrightarrow{D}\rhd \overrightarrow{C}\rhd p$. Finally, let $\ell=\kappa$ and let $k=\kappa+1$. The instance of {\probb{CCAC}{seq$\varphi$}} is $((C\cup D, V), k, J, \ell)$ which can be constructed in polynomial time.
In the following, we prove the correctness of the reduction.

$(\Rightarrow)$ Assume that there exists $B'\subseteq B$ of~$\kappa$ vertices which dominate~$R$.
Let $E=(C\cup C(B'), V)$.  Recall that~$G$ does not contain any isolated vertices. Then, by the tie-breaking order~$\rhd$ and the fact that every~$c(r)$ where~$r\in R$ is approved by only one vote in~$V$, we know that the first candidate added into~$\w$ according to seq$\varphi$ must be from~$C(B')$, i.e., $\w^1_{E}\in C(B')$. Now, as $\abs{C(B')}=\kappa=k-1$, to prove that~$p$ is contained in the winning $k$-committee of seq$\varphi$ at~$E$, it suffices to prove that for every $i\in [k-1]$, if $\w^{\leq i}_{E}\subseteq C(B')$, then $\w^{i+1}_{E}\in C(B')\cup \{p\}$. In other words,  before any red-candidate could enter the winning $k$-committee of seq$\varphi$ at~$E$, the distinguished candidate~$p$ is already in the winning committee.
To this end, let us assume that $\w^{\leq i}_{E}\subseteq C(B')$ for some $i\in [k-1]$. Additionally, let $B'(i)=\{b\in B' \setmid c(b)\in \w^{\leq i}_E\}$ be the set of vertices in~$B'$ whose corresponding candidates are contained in~$\w^{\leq i}_E$.
For each $r\in R$, we use~$x_r$ to denote the number of vertices in~$B'(i)$ dominating~$r$, i.e., $x_r=\abs{\neighbor{G}{r}\cap B'(i)}$.
Let $r^{\star}\in R$ be any arbitrary candidate in~$R$. The~$\varphi$ margin contribution of~$r^{\star}$ to~$\w^{\leq i}_{E}$ is then~$\Delta_{\omega}(x_{r^{\star}})$. We prove below that there exists at least one candidate from $C(B')\setminus \w^{\leq i}_{E}$ whose~$\varphi$ margin contribution to~$\w^{\leq i}_{E}$ is no smaller than that of~$r^{\star}$, or~$p$ has a strictly larger margin contribution to~$\w^{\leq i}_{E}$ than that of~$r^{\star}$. Our proof is done by distinguishing between the following two cases.
\begin{itemize}
\item {\bf{Case~1:}} $r^{\star}$ is dominated by some vertex in~$B'(i)$

In this case,  $x_{r^{\star}}\geq 1$.
Then, by the stipulation of the theorem, we know that $\Delta_{\omega}(x_{r^{\star}})<\omega(1)$.
However, the~$\varphi$ margin contribution of~$p$ to~$\w^{\leq i}_{E}$ is~$\omega(1)$.

\item {\bf{Case~2:}}  $r^{\star}$ is not dominated by any vertex in~$B'(i)$

As~$B'$ dominates~$R$, there exists $b\in B'\setminus B'(i)$ which dominates~$r^{\star}$. Then, the~$\varphi$ margin contribution of~$c(b)$ to~$\w^{\leq i}_{E}$ is $\sum_{r\in R, b\in \neighbor{G}{r}}\Delta_{\omega}(x_r)$, which is no smaller than that of~$r^{\star}$ given $b\in \neighbor{G}{r^{\star}}$.
\end{itemize}
Now we know that $\w^{i+1}_{E}\in C(B')\cup \{p\}$. As discussed above, this implies that~$p$ is contained in the winning $k$-committee of seq$\varphi$ at~$E$.

$(\Leftarrow)$ Suppose that there exists $B'\subseteq B$ so that $\abs{B'}\leq \ell=\kappa$ and~$p$ is contained in the winning $k$-committee of~seq$\varphi$, denoted~$w$, at $E=(C\cup C(B'), V)$. We construct a subset $\widetilde{B}\subseteq B$ of at most~$\kappa$ vertices which dominate~$R$ as follows. Let~$B'_w=C(B')\cap w$ be the set of blue vertices corresponding to blue-candidates contained in~$w$. Let~$R'$ be the set of red vertices dominated by~$B'_w$, i.e., $R'=\neighbor{G}{B'_w}$.
\begin{observation}
\label{ob-1}
Let~$C'$ be a committee in~$E$ so that $p\not\in C'$ and $C'\setminus C(R\setminus R')\neq \emptyset$.
  Then, the~$\varphi$ margin contribution of every candidate from $C(R\setminus R')\setminus C'$ to~$C'$, and that of~$p$ to~$C'$ are~$\omega(1)$.
\end{observation}
By Observation~\ref{ob-1} and the definition of the tie-breaking order, we know that before~$p$ is added into the winning $k$-committee~$w$, all candidates from $C(R\setminus R')$ are already added into~$w$. It follows that $\abs{B'_w}+\abs{R\setminus R'}\leq k-1=\kappa$. As~$G$ does not contain any isolated vertices, for any $r\in R\setminus R'$, we fix an arbitrary neighbor of~$r$ in~$B$, and use~$b(r)$ to denote this vertex. Let $\widetilde{B}=B'_w\cup \{b(r) \setmid r\in R\setminus R'\}$. Clearly, $\abs{\widetilde{B}}\leq \abs{B'_w}+\abs{R\setminus R'}$, and~$\widetilde{B}$ dominates~$R$. It follows that the {\prob{RBDS}} instance is a {\yesins}.
\end{proof}



%

{For the operation of deleting candidates, we have the following hardness result.}

\begin{theorem}
\label{thm-ccdc-seqpav-seqabccv-np-hard}
For every $\omega$-Thiele's rule $\varphi$ such that $\Delta_{\omega}(i)<\omega(1)$ for all $i\in [3]$, {\probb{CCDC}{\memph{seq}$\varphi$}} is {\nph}. Moreover, this holds even if there is only one distinguished candidate, every vote approves at most four candidates, and every candidate is approved by at most three votes.
\end{theorem}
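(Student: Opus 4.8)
The plan is to reduce from \prob{RX3C}, whose regular structure---every set has exactly three elements and every element occurs in exactly three sets---matches the promised restrictions that every vote approves at most four candidates and every candidate is approved by at most three votes. Given an instance $(\xs,\xc,\kappa)$, I would create for every element $\xse\in\xs$ an \emph{element-candidate} $c(\xse)$, for every set $\xce\in\xc$ a \emph{set-candidate} $c(\xce)$, and the distinguished candidate $p$, setting $J=\{p\}$ and writing $C(\xs)=\{c(\xse)\setmid \xse\in\xs\}$, $C(\xc)=\{c(\xce)\setmid \xce\in\xc\}$. The votes are one private vote approving only $p$, and for each $\xse\in\xs$ one vote $v(\xse)=\{c(\xse)\}\cup\{c(\xce)\setmid \xse\in\xce\}$ of size four; thus $c(\xse)$ is approved once and $c(\xce)$ exactly three times. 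I set $k=\kappa+1$, $\ell=2\kappa$, and break ties by $\overrightarrow{C(\xc)}\rhd\overrightarrow{C(\xs)}\rhd p$, so that $p$ comes last. A solution is meant to delete exactly the $2\kappa$ set-candidates lying outside an exact cover.

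For the $(\Rightarrow)$ direction, given an exact cover $\xc'\subseteq\xc$ with $\abs{\xc'}=\kappa$, I delete $\{c(\xce)\setmid \xce\in\xc\setminus\xc'\}$. Since the sets in $\xc'$ are pairwise disjoint, the $\kappa$ surviving set-candidates share no vote, so each retains $\varphi$ margin contribution $3\omega(1)$ until selected, and seq$\varphi$ picks all of them first. Afterwards every element is covered exactly once, so (using $\Delta_{\omega}(1)<\omega(1)$) every surviving element-candidate has margin contribution $\Delta_{\omega}(1)<\omega(1)$, whereas $p$ still has margin contribution $\omega(1)$ from its private vote; hence $p$ is chosen as the final $(\kappa+1)$-st committee member, and the instance is a yes-instance.

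For $(\Leftarrow)$ I would fix a successful deletion set $C'$ with $\abs{C'}\le 2\kappa$ and $p\notin C'$, and let $w^-$ be the committee seq$\varphi$ has built just before $p$ enters. Because $p$ has margin contribution $\omega(1)$ and is last in $\rhd$, it can enter only once every surviving candidate outside $w^-$ has margin contribution strictly below $\omega(1)$; since $\abs{w^-}\le \kappa$, I would translate this, via $\Delta_{\omega}(0)=\omega(1)$ and $\Delta_{\omega}(i)<\omega(1)$ for $i\in[3]$, into coverage conditions: every surviving $c(\xse)$ has some $c(\xce)\in w^-$ with $\xse\in\xce$, and every surviving set-candidate has all three of its elements covered by $w^-$. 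A counting argument over the at most $\kappa$ members of $w^-$ (each set-candidate covering at most three elements) together with the budget $\abs{C'}\le 2\kappa$ should force $w^-$ and $C'$ to realise an exact cover: the $3\kappa$ elements can be covered by at most $\kappa$ size-three sets only if those sets partition $\xs$.

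The hard part will be this converse counting, for two reasons. First, the external agent is free not to delete set-candidates: it could spend budget deleting element-candidates (pre-emptively removing blockers of $p$) or let element-candidates enter $w^-$, so I must show through the incidence counts---each element in exactly three sets, each surviving set-candidate needing all its elements covered---that such moves cannot beat deleting the $2\kappa$ complementary set-candidates, and that $p$ wins only when all $3\kappa$ elements are covered. Second, the whole argument must be uniform in the (otherwise unconstrained) shape of $\omega$: I only know $\Delta_{\omega}(i)<\omega(1)$ for $i\in[3]$, so I cannot assume that a fully covered set-candidate is suppressed below $p$'s score, and I expect $\Delta_{\omega}(2)$ and $\Delta_{\omega}(3)$ to be needed precisely to handle multiply-covered elements arising from ``cheating'' deletions. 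Controlling the greedy flooding/suppression dynamics and thereby pinning down $w^-$ against this freedom is where the real work lies.
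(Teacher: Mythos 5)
Your construction is the paper's construction with one crucial ingredient removed: besides the private vote for~$p$ and the votes~$v(\xse)$, the paper also creates, for \emph{every} $\xce\in\xc$, a private vote $v(\xce)=\{c(\xce)\}$ approving only that set-candidate. These votes are not cosmetic. They pin the $\varphi$ margin contribution of every surviving set-candidate at $\geq\omega(1)$ with respect to every committee not containing it, no matter how heavily its three element-votes have already been hit; since set-candidates precede the element-candidates and~$p$ in~$\rhd$, neither~$p$ nor any element-candidate can ever be selected while some surviving set-candidate is still outside the committee. With $k=\kappa+1$ this forces any successful deletion set to consist of exactly $2\kappa$ set-candidates and forces the committee to be $(C(\xc)\setminus C')\cup\{p\}$, after which the covering argument you sketch goes through. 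In your construction the converse direction is not merely ``the hard part'' you defer: it is false for some rules in the theorem's class, so no counting argument can close the gap. The structural reason is the one you half-identify yourself: an element whose candidate \emph{and} all three containing set-candidates are deleted imposes no covering requirement at all, so your claim that all $3\kappa$ elements must be covered by at most~$\kappa$ committee members fails, and with it the ``only a partition can do this'' conclusion.

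Concretely, take $\omega(1)=1$ and $\Delta_{\omega}(i)=1/10$ for all $i\geq 1$; this satisfies $\Delta_{\omega}(i)<\omega(1)$ for $i\in[3]$. In your election a set-candidate's margin is a sum of three terms, one per incident vote, so once all three of its votes are hit by the current committee its margin drops to at most $3/10<\omega(1)$ and it is permanently suppressed below~$p$. Now let the {\prob{RX3C}} instance be the disjoint union of (a) three copies of the system on four elements whose sets are all four $3$-subsets ($12$ elements, $12$ sets; no exact cover since $4$ is not divisible by $3$), and (b) three copies of the $3\times 3$ ``grid'' system on nine elements whose sets are the three rows, three columns and three (broken) diagonals. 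This is a legal instance with $3\kappa=39$, $\ell=2\kappa=26$, $k=14$, and it has no exact cover, since any exact cover would restrict to one on each component of~(a). The chair deletes the $12$ element-candidates and $12$ set-candidates of part~(a) ($24\leq\ell$ deletions), which empties those votes. On part~(b), seq$\varphi$ then picks three pairwise disjoint set-candidates per grid (margin $3$ each, nine picks in total), at which point every remaining set- or element-candidate has margin at most $3/10$, so~$p$, with margin~$1$, is picked tenth, well within $k=14$. Control succeeds although the {\prob{RX3C}} instance is a {\noins}, so your map is not a valid reduction for this~$\omega$. The paper's extra votes $v(\xce)$ are exactly what rule out this kind of ``delete a whole exact-cover-free sub-instance and cover the rest greedily with overlaps'' cheat.
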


\begin{proof}
We prove the theorem by a reduction from {\prob{RX3C}}. The reduction is similar to the one in the proof of Theorem~\ref{thm-ccac-seqThiele-wbh}. Note that {\prob{RX3C}} is a special case of {\prob{RBDS}}. In particular, Let $(\xs, \xc)$ be an instance of {\prob{RX3C}}, where $\abs{\xs}=\abs{\xc}=3\kappa$ for some positive integer~$\kappa$. Then, if we consider~$\xs$ as red vertices and consider~$\xc$ as blue vertices so that the neighbors of each $\xce\in \xc$ are exactly the three elements in~$\xce$, we obtain a special instance of {\prob{RBDS}}.

We create an instance of {\probb{CCDC}{seq$\varphi$}} as follows. For each  $x\in \xs\cup \xc$, we create one candidate~$c(x)$. In the following analysis, for a given $\xc'\subseteq \xc$ (resp.~$\xs'\subseteq \xs$), we use~$C(\xc')$ (resp.\ $C(\xs')$) to denote the set of candidates corresponding to~$\xc'$ (resp.~$\xs'$), i.e., $C(\xc')=\{c(\xce) \setmid \xce\in \xc'\}$ (resp.\ $C(\xs')=\{c(\xse) \setmid \xse\in \xs'\}$). In addition, we create one candidate~$p$. Let $J=\{p\}$ and let $C=C(\xs)\cup C(\xc)\cup J$. We create the following votes:
\begin{itemize}
    \item one vote approving only the candidate~$p$;
    \item for each $\xse \in \xs$, one vote $v(\xse)=\{c(\xse)\}\cup \{c(\xce)\setmid \xse\in \xce\in \xc\}$; and
    \item for each $\xce\in \xc$, one vote $v(\xce)=\{c(\xce)\}$ approving only~$c(\xce)$.
\end{itemize}
Let $k=\kappa+1$ and $\ell=2\kappa$. The tie-breaking order is $\overrightarrow{C(\xc)}\rhd \overrightarrow{C(\xs)}\rhd p$. Observe that the election $(C, V)$ is in fact obtained from the one constructed in the way described in the proof of Theorem~\ref{thm-ccac-seqThiele-wbh} (based on the special {\prob{RBDS}} instance corresponding to $(\xs, \xc)$) by adding~$\abs{\xc}$ new votes, one for each $\xce\in \xc$ that approves only $c(\xce)$. These votes are to ensure that to make~$p$ be contained in the winning $k$-committee by deleting at most~$\ell$ candidates, exactly $\ell=2\kappa$ candidates from~$C(\xc)$ need to be deleted.
 The instance of {\probb{CCDC}{seq$\varphi$}} is $((C, V), k, J, \ell)$ which clearly can be constructed in polynomial time.
It remains to prove the correctness.

$(\Rightarrow)$ By following the argument for the~$(\Rightarrow)$ direction in the proof of Theorem~\ref{thm-ccac-seqThiele-wbh}, one sees that if~$\xc$ contains an exact set cover~$\xc'$ of cardinality~$\kappa$, then~$p$ is contained in the winning $k$-committee of seq$\varphi$ at $E=(C(A)\cup C(\xc')\cup \{p\}, V)$. More precisely, the winning $k$-committee of seq$\varphi$ at~$E$ is exactly $C(\xc')\cup \{p\}$, with first the candidates in~$C(\xc')$ being added into the winning committee one-by-one according to the tie-breaking order, and then~$p$ being the last added into the winning committee. To see that the first~$\kappa$ candidates added into the winning committee are exactly those from~$C(\xc')$, observe first that the first candidate added into the winning committee is from~$C(\xc')$. Then, assuming $\w^{\leq i}_E\subseteq C(\xc')$ for some $i\leq \kappa-1$, we show that $\w^{i+1}_E\in C(\xc')$. To this end, observe just that the~$\varphi$ margin contribution of each candidate~$c(\xse)$ where~$\xse\in \xs$ to $\w^{\leq i-1}_E$ is at most~$\max_{j\in [3]\cup \{0\}} \triangle_{\omega(j)}$, which is no greater than~$\omega(1)$. However, the~$\varphi$ margin contribution of every candidate in $C(\xc')\setminus w^{\leq i-1}_E$ to $\w^{\leq i-1}_E$ is at least~$\omega(1)$, and that of~$p$ is $\omega(1)$. By the tie-breaking order, it holds $\w^{i+1}_E\in C(\xc')\setminus \w^{\leq i-1}_E$. To see that~$p$ is the last candidate added into the winning committee, recall that~$\xc'$ is an exact set cover of~$\xs$. As a consequence, every~$c(\xse)$ where $\xse\in \xs$ has~$\varphi$ margin contribution to~$C(\xc')$ at most $\triangle_{\omega}(1)$. This is smaller than that of~$p$ which is~$\omega(1)$.

$(\Leftarrow)$ 
Assume that there exists $C'\subseteq C\setminus \{p\}$ so that $\abs{C'}\leq \ell=2\kappa$ and~$p$ is contained in the winning $k$-committee, denoted~$\w$, of~seq$\varphi$ at $E=(C\setminus C', V)$. By the definition of the tie-breaking order and the votes created for~$\xc$, we know that $C'\subseteq C(\xc)$, and moreover $\abs{C'}=\ell=2\kappa$. The reason is that, if for the sake of contradiction this is not the case, then there will remain at least $\kappa+1$ candidates in $C(\xc)\setminus C'$. As every candidate $c(\xce')\in C(\xc)\setminus C'$ is approved by a unique vote~$v(\xce)=\{c(\xce)\}$, the~$\varphi$ margin contribution of~$c(\xce)$ to any committee without $c(\xce)$ and~$p$ is no smaller than that of $p$ to this committee. Therefore, $p\in \w$ implies $(C(\xc)\setminus C')\subseteq \w$. However, this contradicts $\abs{\w}=k=\kappa+1$ and $\abs{C(\xc)\setminus C'}\geq \kappa+1$.
By replacing~$p$ with~$c(\xse)$ for $\xse\in \xs$ in the above argument, we know that a candidate from~$C(\xs)$ can be contained in~$\w$ only if all candidates in $C(\xc)\setminus C'$ are contained in~$\w$. This implies that the first~$\kappa$ candidates added into~$\w$ are exactly those from~$C(\xc)\setminus C'$, i.e., $C(\xc)\setminus C'=\w\setminus \{p\}$.

Let~$\xc'=\{\xce\in \xc \setmid c(\xce)\in C(\xc)\setminus C'\}$. We claim that~$\xc'$ must cover~$\xs$. In fact, if this is not the case, there exists~$c(\xse)\in C(\xs)$ not covered by~$\xc'$. Then, the~$\varphi$ margin contribution of $c(\xse)$ to $\w\setminus \{p\}$ is~$\omega(1)$, equaling that of~$p$ to $\w\setminus \{p\}$. This implies that~$p$ cannot be the last candidate added into~$\w$, a contradiction. As $\abs{\xc'}=\abs{C(\xc)\setminus C'}=\abs{C(\xc)}-2\kappa=\kappa$ and $\xc'$ covers $\xs$, the instance of {\prob{RX3C}} is a {\yesins}.
%
%
\end{proof}

{Note that Theorems~\ref{ccav-seqpav-nph}--\ref{thm-ccdc-seqpav-seqabccv-np-hard} apply to seqABCCV and seqPAV. }

\section{Destructive Control}

In this section, we investigate the complexity of destructive control by adding/deleting votes/candidates.
For ABCCV, PAV, and MAV, we show that even checking if a particular candidate is not contained in any winning $k$-committees is computationally hard.

\begin{theorem}
\label{thm-abccv-very-speical-case-np-hard}
{\prob{Nonwinning Testing}} for {\memph{ABCCV}} and {\prob{Nonwinning Testing}} for {\memph{PAV}} are {\wah} with respect to~$k$. Moreover, this holds even when every vote approves at most two candidates.
\end{theorem}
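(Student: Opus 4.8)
The plan is to exploit the hypothesis that every vote approves at most two candidates: such an election is essentially a graph whose vertices are the candidates and whose edges are the two‑candidate votes. For a committee $\w$, write $e(\w)$ for the number of votes whose two approved candidates both lie in $\w$ (internal edges) and $\partial(\w)$ for the number whose approved candidates meet $\w$ in exactly one candidate. Then the ABCCV score of $\w$, being the number of satisfied voters, equals the number of covered edges $\partial(\w)+e(\w)$, while the PAV score equals $\partial(\w)+\frac32 e(\w)=\sum_{c\in \w}\deg(c)-\frac12 e(\w)$. On a $d$‑regular instance the degree term is the constant $d\abs{\w}$, so for \emph{both} rules a winning $k$‑committee is exactly a $k$‑set of vertices minimizing the number of internal edges $e(\w)$; in particular a committee attains the maximum conceivable score iff it is an independent set of size $k$. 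This uniformity is what allows a single construction to serve ABCCV and PAV simultaneously, since every relevant score is a function of $e(\cdot)$.

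First I would reduce from {\prob{Independent Set}} (equivalently {\prob{Clique}} via complementation, and, for ABCCV alone, directly from {\prob{MPVC}} since ABCCV score is exactly the number of covered edges), which is {\wah} with respect to the solution size $\kappa$ even on regular graphs. Encode the $d$‑regular graph $G=(N,A)$ by one candidate per vertex and one two‑candidate vote per edge, and set the committee size $k=\Theta(\kappa)$ so that the parameter is $k$. The aim is then to append a single distinguished candidate $p$ (with $J=\{p\}$) together with a small gadget so that the overall optimum is attained by a $p$‑free committee precisely when $G$ contains an independent set of the target size, and is attained (or tied) by a committee containing $p$ otherwise. If this is achieved, then $p$ lies in no winning $k$‑committee exactly when the independent set exists, giving the required equivalence; because the comparison is phrased entirely through $e(\cdot)$, the very same instance certifies the claim for PAV as well, and membership in {\np} yields {\npc} as a by‑product.

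The crux — and the step I expect to be the main obstacle — is the design of the gadget on $p$. The naive choice of giving $p$ private single‑candidate votes \emph{fails}: with two‑candidate votes every added candidate contributes additively (its covered edges are disjoint from the graph's), so including $p$ merely trades one vertex‑slot for a fixed bonus and the test degenerates into a \emph{marginal} comparison ($\mu_{k}$ versus $\mu_{k-1}$, where $\mu_j$ is the minimum number of internal edges over $j$‑sets) rather than the \emph{absolute} question of whether an independent set of size $\kappa$ exists. Breaking this additivity is hard precisely because a two‑candidate vote is satisfied by either of its approved candidates, so one cannot build ``useful only together with $p$'' voters. I would therefore engineer $p$'s neighborhood directly inside the graph — forcing every committee that contains $p$ to incur at least one internal edge, while leaving a genuinely independent $p$‑free committee available exactly when $G$ has the desired independent set — and keep the whole instance regular so that ABCCV and PAV provably coincide. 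Pinning the gadget's contribution to the single threshold value, and ruling out that $p$ can ``stack'' with good graph vertices to reach the optimum, is the delicate part; once the gadget is calibrated, verifying the two implications is routine counting of covered edges.
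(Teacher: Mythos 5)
Your rejection of the ``naive'' gadget---private approval votes for~$p$---is precisely where the proposal goes wrong, because that gadget is exactly what the paper uses, and it works. You are right that with $d-1$ private votes for~$p$ and $k=\kappa$ the test becomes the marginal comparison $\mu_k$ versus $\mu_{k-1}$ (where $\mu_j$ denotes the minimum number of internal edges over $j$-subsets of vertices): $p$ is excluded from every winning ABCCV committee if and only if $kd-\mu_k > kd-1-\mu_{k-1}$, i.e., $\mu_k\le\mu_{k-1}$. But this marginal test is \emph{not} weaker than the absolute one. On any graph, if $\mu_k>0$, then deleting an endpoint of an internal edge from a minimizing $k$-set yields a $(k-1)$-set with strictly fewer internal edges, so $\mu_{k-1}\le\mu_k-1$; hence $\mu_k\le\mu_{k-1}$ forces $\mu_k=0$, which is exactly the existence of an independent set of size~$k$. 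The paper realizes this as an exchange argument: in the backward direction, if a winning committee contained two adjacent vertices, swapping one of them for~$p$ loses at most $d-1$ satisfied voters (the shared edge vote stays satisfied) and gains the $d-1$ private voters of~$p$, so $p$ would lie in some winning committee---a contradiction. For PAV the same scheme is used after a recalibration that your ``same instance serves both rules'' claim overlooks: each edge vote is duplicated and $p$ receives $2d-1$ private votes. This is genuinely needed; with single edge votes and $d-1$ private votes the PAV halves turn the test into $\mu_k\le\mu_{k-1}+1$, which is no longer equivalent to $\mu_k=0$ (take $C_5$ with $k=3$: $\mu_3=1$, $\mu_2=0$, so $p$ would be excluded although no independent set of size~$3$ exists).

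The alternative you commit to instead---embedding $p$ into the graph and keeping the whole instance regular---is left unspecified at exactly its critical point, and it is unclear it can be made to work. You would need simultaneously that every $k$-committee containing~$p$ has an internal edge, and that whenever $G$ has no independent set of size~$\kappa$, some committee containing~$p$ \emph{ties} the optimum $\mu_k$. The first condition forces committees containing~$p$ to cost at least~$1$, while $\mu_k$ may itself equal~$1$ for graphs without a size-$\kappa$ independent set, so the gadget would have to guarantee that $p$'s best committee always achieves exactly $\mu_k$---a global property of all inputs $G$, for which no exchange argument is available once $p$'s votes are ordinary edge votes (an edge vote is satisfied by either endpoint, so swapping a graph vertex for~$p$ can lose more than it gains). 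In short: the construction you dismiss is correct and is the paper's proof, the reason you give for dismissing it is mistaken, and the construction you propose in its place is missing its key component.
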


\begin{proof}
    We prove the theorem by reductions from {\prob{Independent Set}} restricted to regular graphs. Let $(G, \kappa)$ be an instance of {\prob{Independent Set}} where~$G$ is a regular graph. Let~$d$ be the degree of vertices of~$G$. We show the hardness for ABCCV and PAV separately as follows.

    \begin{itemize}
        \item ABCCV

        For each vertex of~$G$ we create one candidate denoted by the same symbol for simplicity. In addition, we create one candidate~$p$. Let $C=\vvset{G}\cup \{p\}$, and let $J=\{p\}$. We create the following votes:
        \begin{itemize}
            \item $d-1$ votes each approving exactly the distinguished candidate~$p$; and
            \item for every edge $\edee=\edge{\vere}{\vere'}$ in~$G$,  one vote $v(\edee)=\{\vere,\vere'\}$.
        \end{itemize}
        Let~$V$ denote the multiset of the above created votes. Clearly, every vote approves at most two candidates, and every candidate is approved by at most~$d$ votes. Finally, let $k=\kappa$. The instance of {\prob{Nonwinning Testing}} for ABCCV is $((C, V), k, p)$ which can be constructed in polynomial time. We prove the correctness of the reduction as follows.

        $(\Rightarrow)$ Assume that~$G$ contains an independent set~$w$ of size~$\kappa$. By the construction of the votes, every $c\in w$ is approved by exactly~$d$ votes which do not approve any other candidates from~$w$. So, the committee~$w$ has the maximum possible ABCCV score $\kappa\cdot d$. As~$p$ is approved by exactly $d-1$ votes which approve only~$p$, we know that~$p$ cannot be in any winning $k$-committee of~$(C, V)$.

        $(\Leftarrow)$ Assume that~$p$ is not contained in any winning $k$-committee of~$(C, V)$. Without loss of generality, let~$\vset'\subseteq \vset$ be a winning $k$-committee of~$(C, V)$. We show that~$\vset'$ is an independent set of~$G$. For the sake of contradiction, assume that this is not the case. Then, there exist $\vere, \vere'\in \vset'$ so that $\vere$ and $\vere'$ are adjacent in~$G$. By the construction of the votes, the vote corresponding to the edge between $\vere$ and $\vere'$ approve both $\vere$ and $\vere'$. This implies that the committee $\vset'\setminus \{\vere\}\cup \{p\}$ has at least the same ABCCV score as $\vset'$. However, this contradicts that $p$ is not contained in any winning $k$-committee of~$(C, V)$. So, we can conclude that $\vset'$ is an independent set of~$G$, and hence the {\prob{Independent Set}} instance is a {\yesins}.
    \end{itemize}

    \begin{itemize}
        \item PAV

        The set $C$ and $J$ are defined the same as in the above reduction for ABCCV. We create the following votes:
        \begin{itemize}
            \item $2d-1$ votes each approving exactly the distinguished candidate~$p$; and
            \item for every edge $\edee=\edge{\vere}{\vere'}$ in~$G$, two votes $v(\edee)$ and $v'(\edee)$ each approving exactly~$\vere$ and~$\vere'$.
        \end{itemize}
        Let~$V$ denote the multiset of the above created votes. Clearly, every vote approves at most two candidates, and every candidate is approved by at most~$2d$ votes. Finally, let $k=\kappa$. The instance of {\prob{Nonwinning Testing}} for ABCCV is $((C, V), k, p)$ which can be constructed in polynomial time. The correctness of the reduction is similar to the arguments in the above proof for ABCCV.
%
%
    \end{itemize}
\end{proof}

For MAV, the problem is more computationally demanding.

\begin{theorem}
\label{thm-mav-very-speical-case-np-hard}
{\prob{Nonwinning Testing}} for {\memph{MAV}} is {\wbh} with respect to~$k$.
\end{theorem}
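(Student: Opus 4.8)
The plan is to reduce from \prob{RBDS}, which the excerpt records as \wbh\ with respect to~$\kappa$, but to build into the gadget a \emph{rigid} committee of size exactly~$\kappa$; this rigidity is what will let me control whether $p$ can appear. Starting from an \prob{RBDS} instance $(G=(R\uplus B,A),\kappa)$, I would replicate the blue side into $\kappa$ groups: for each $i\in[\kappa]$ and each $b\in B$ create a candidate $c_i(b)$, and let the $i$-th block be $C_i=\{c_i(b):b\in B\}$, every copy keeping the same adjacencies to $R$. I add the distinguished candidate~$p$, set $J=\{p\}$ and $k=\kappa$. The intended meaning is that a committee should pick exactly one candidate from each block (a transversal), and that this transversal should dominate~$R$; a transversal dominates~$R$ iff the underlying (at most $\kappa$) blue vertices do, so the gadget encodes \prob{RBDS} faithfully.

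I would then install two kinds of votes. A \emph{group vote} $v_i$ approves exactly the block $C_i$, and a \emph{domination vote} $v(r)$, for each $r\in R$, approves all the candidates of $r$'s blue neighbours, i.e.\ $\{c_i(b):i\in[\kappa],\,b\in \neighbor{G}{r}\}$. After padding every vote to a common size~$s$ — the group votes only by extra \emph{real} candidates added to each block that are approved by no domination vote, the domination votes by \emph{private} dummy candidates — each vote has size~$s$, so for a committee~$w$ with $\abs{w}=k$ the MAV score is exactly $k+s-2\min_{v}\abs{w\cap v}$. Minimising the maximum Hamming distance therefore coincides with maximising $\min_v\abs{w\cap v}$, and a committee reaches the value-$1$ threshold (score $k+s-2$) iff it meets every group vote and every domination vote at least once.

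The correspondence now falls out. To meet all $\kappa$ group votes a committee must spend one of its $k=\kappa$ slots on a genuine block candidate of \emph{each} $C_i$; hence a committee of value $\ge 1$ is exactly a block transversal that in addition dominates~$R$, and it necessarily fills all $\kappa$ slots, leaving no room for~$p$ (or for any padding dummy). Consequently, if the instance is a \yesins, the winning $k$-committees are precisely these dominating transversals, every one of which excludes~$p$, so $p$ lies in no winning committee; if it is a \noins, no committee can reach value~$1$ (a transversal fails to dominate some~$r$, a non-transversal misses some group vote), so every committee has value~$0$ and score $k+s$, all committees tie, and $p$ sits in some winning committee. This yields \prob{RBDS}\ \yesins\ $\Leftrightarrow$ ``$p$ is in no winning $k$-committee,'' which is exactly the \prob{Nonwinning Testing} question, giving \wbh\ with respect to~$k$.

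The step I expect to be the main obstacle is making this equivalence \emph{exact}, and it is precisely where the grouping earns its keep. A naive reduction with only domination votes and $p$ approved by nobody would fail: a dominating set of size strictly less than~$\kappa$ leaves a free slot into which $p$ can be placed without lowering $\min_v\abs{w\cap v}$, so $p$ could creep into a winning committee even on a \yesins. The $\kappa$ group votes close this loophole by forcing all $\kappa$ slots to be occupied by distinct real block representatives, so no spare slot exists whenever value~$1$ is attainable. The residual care — which I would carry out explicitly — is to verify that padding never helps: group votes are padded only with real candidates, so they can still be met only by a block member, while the private dummies padding the domination votes are useless because any value-$\ge 1$ committee is already a full transversal with no free slot. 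Once these points are settled, the reduction is correct and clearly polynomial.
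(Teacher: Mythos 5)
Your proposal is correct, and it takes a genuinely different route from the paper's proof. Both arguments reduce from {\prob{RBDS}} with $k=\kappa$, but the gadgets differ. The paper keeps one candidate per blue vertex, adds a single ``anchor'' vote approving all blue candidates plus a padding set~$X$ of $2\kappa+d$ candidates, and pads each red vote~$v(r)$ with a \emph{private} set~$C(r)$ of $\abs{B}+2$ candidates; it also assumes all red vertices have the same degree~$d$. Its correctness rests on asymmetric Hamming-distance counting---the anchor vote makes every committee containing~$p$ strictly worse than a dominating blue committee---and its backward direction needs an exchange argument (if a winning committee had score as large as $\abs{B}+d+\kappa+2$, then $p$ together with $\kappa-1$ blue candidates would also be winning, a contradiction). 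You instead replicate the blue side into~$\kappa$ blocks guarded by~$\kappa$ pairwise disjoint group votes and, crucially, normalize all votes to a common size~$s$, so that the MAV score of any $k$-committee~$w$ equals $k+s-2\min_{v}\abs{w\cap v}$ and MAV minimization becomes ``maximize the minimum coverage'' with threshold one. This is what each approach buys: your normalization eliminates all delicate distance counting, needs no degree-regularity assumption, and makes the negative direction trivial (in a {\noins} every $k$-committee ties at score $k+s$, so~$p$ lies in some winning committee), while the transversal structure closes the spare-slot loophole that the paper's anchor vote closes; the price is a factor-$\kappa$ blow-up in the candidate set, harmless for a parameterized reduction, whereas the paper's gadget is leaner in candidates but its case analysis is more intricate. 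Two details you should make explicit in a write-up: a concrete choice of~$s$ (e.g., $s=\kappa\cdot\abs{B}$ suffices, padding each group vote with $s-\abs{B}$ extra block candidates and each~$v(r)$ with $s-\kappa\cdot\degree{G}{r}$ private dummies), and the observation that a dominating set of size at most~$\kappa$ extends to one of size exactly~$\kappa$ because $\kappa\leq\abs{B}$, which is what lets transversals that use block-padding candidates still certify a {\yesins}.
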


\begin{proof}
    We prove the theorem via a reduction from {\prob{RBDS}}. Let $(G, \kappa)$ be an instance of {\prob{RBDS}} where~$G$ is a bipartite graph with the bipartition $(R, B)$. We assume that~$G$ does not contain any isolated vertices, and all red vertices have the same degree~$d$. This assumption does not change the {\wbhns} of the problem.\footnote{The assumption that~$G$ contains no isolated vertices is clear. If red vertices do not have the same degree, let~$d$ be the maximum degree of red vertices in~$G$. Then, for each $r\in R$ whose degree is at most $d-1$, we create $d-\degree{G}{r}$ blue vertices each of which is only adjacent to~$r$. It is easy to verify that the original instance and the new instance are equivalent.}. We create the following candidates.
\begin{itemize}
    \item For each blue vertex we create one candidate denoted by the same symbol for simplicity.
    \item For each red vertex~$r$, we create a set $C(r)$ of~$\abs{B}+2$ candidates. Let $C(R)=\bigcup_{r\in R}C(r)$ be the set of the $\abs{R}\cdot (\abs{B}+2)$ candidates created for all red vertices.
    \item In addition, we create a candidate~$p$.
    \item Finally, we create a set~$X$ of~$2\kappa+d$ candidates.
\end{itemize}
Let $C=B\cup C(R)\cup X\cup \{p\}$ be the set of the above created candidates. We create the following votes:
    \begin{itemize}
        \item one vote~$v$ which approves exactly the $\abs{B}+2\kappa+d$ candidates in $B\cup X$;
        \item for every red vertex $r\in R$, one vote $v(r)=\neighbor{G}{r}\cup C(r)$ approving exactly the neighbors of~$r$ in~$G$ and the $\abs{B}+2$ candidates created for $r$.
    \end{itemize}
    Let $V$ denote the multiset of the above created $1+\abs{R}$ votes.
    Let $k=\kappa$. The instance of {\prob{Nonwinning Testing}} for {MAV} is~$((C, V), k, p)$ which can be constructed in polynomial time. We prove the correctness of the reduction as follows.

    $(\Rightarrow)$ Assume that there is a subset $B'\subseteq B$ such that $\abs{B'}=\kappa$ and~$B'$ dominates~$R$. By the construction of votes, the Hamming distance between $B'$ and every vote $v(r)$ is at most $(d-1+\kappa-1)+(\abs{B}+2)=\abs{B}+d+\kappa$, and the Hamming distance between $B'$ and the vote $v$ is exactly $\abs{B}+2\kappa+d-\kappa=\abs{B}+d+\kappa$, which is the minimum possible Hamming distance between a $k$-committee and the vote $v$, since $v$ contains all candidates from~$B'$. Then, as~$p$ is not approved by~$v$, the Hamming distance between any $k$-committee containing~$p$ and the vote $v$ is strictly larger than $\abs{B}+d+\kappa$. Therefore,~$p$ cannot be in any winning $k$-committee of $(C, V)$.

    $(\Leftarrow)$ Assume that $p$ is not contained in any winning $k$-committee of $(C, V)$. Let~$w$ be a winning $k$-committee of $(C, V)$, and let $B'=B\cap w$. We show below that~$B'$ dominates~$R$. Assume, for the sake of contradiction, that this is not the case. Let~$r\in R$ be a red vertex not dominated by~$B'$. Note that the MAV score of any $k$-committee contained in~$B$ is at most $\abs{B}+d+\kappa+2$. Therefore, the MAV score of~$w$ is at most $\abs{B}+d+\kappa+2$. However, if the MAV score of~$w$ is exactly $\abs{B}+d+\kappa+2$, it is easy to verify that any $k$-committee containing $k-1$ candidates from~$B$ and~$p$ also has MAV score $\abs{B}+d+\kappa+2$, contradicting that~$p$ is not contained in any winning $k$-committee of $(C, V)$. As a result, we know that the MAV score of~$w$ can be at most $\abs{B}+d+\kappa+1$. Then, as~$r$ is not dominated by~$B'$, by the construction of the votes, we know that~$w$ contains at least one candidate from~$C(r)$ approved by~$v(r)$. However, in this case the Hamming distance between~$w$ and the vote~$v$ is at least $(\abs{B}+2\kappa+d)-(\kappa-1)+\abs{w\cap C(r)}\geq \abs{B}+\kappa+d+2$, contradicting with the MAV score of~$w$ in $(C, V)$. Hence, we know that $B'$ dominates $R$. Then, as $\abs{B'}\leq \abs{w}=k=\kappa$, we know that the {\prob{RBDS}} instance is a {\yesins.}
\end{proof}

\onlyfull{
\begin{theorem}
\label{thm-abccv-very-speical-case-np-hard}
{\prob{Nonwinning Testing}} for ABCCV, PAV, and MAV are {\nph}. Moreover, this holds even with the following restrictions respectively .
\begin{itemize}
    \item ABCCV and MAV: every vote approves at most three candidates and every candidate is approved by at most three votes.
    \item PAV:   every vote approves at most three candidates, and every candidate is approved by at most four votes.
\end{itemize}
\end{theorem}

\begin{proof}
We prove the theorem by three separate but similar reductions from {\prob{RX3C}}. Let $(\xs, \xc)$ be an {\prob{RX3C}} instance such that $\abs{\xs}=\abs{\xc}=3\xsize$ for some positive integer~$\xsize$.
\begin{itemize}
    \item ABCCV

    For each $\xce\in \xc$, we create one candidate~$c(\xce)$. Let $C(\xc)=\{c(\xce) \setmid \xce\in \xc\}$ be the set of candidates corresponding to~$\xc$.
    In addition, we create one candidate~$p$. Let $C=\{p\}\cup C(\xc)$. Let $k=\kappa$. We create the following $2+3\kappa$ votes in~$V$:
\begin{itemize}
\item two votes approving exactly~$p$;
\item for each $\xse\in \xs$, one vote $v(\xse)=\{c(\xce) \mymid \xse\in \xce\in \xc\}$.
\end{itemize}
The instance of {\prob{Nonwinning Testing}} for ABCCV is $(C, V, p, k)$, which can be constructed in polynomial time.

It remains to prove the correctness of the above reduction.
If the instance of {\prob{RX3C}} is a {\yesins}, every winning $k$-committee of ABCCV at $(C, V)$ consists of~$\kappa$ candidates from~$C(\xc)$ whose corresponding subcollection of~$\xc$ is an exact set cover of~$\xs$. In particular, every winning $k$-committee has ABCCV score~$3\kappa$. Otherwise, there is at least one winning $k$-committee containing~$p$. To check this, let~$\w$ be a winning $k$-committee of  ABCCV at $(C, V)$. If $p\in \w$, we are done. Otherwise, there are $c(\xce), c(\xce')\in \w$ where $\xce, \xce'\in \xc$ and $\xce\cap\xce'\neq\emptyset$. It is easy to see that $\w\setminus \{c(\xce)\}\cup \{p\}$ is also a winning $k$-committee of ABCCV at $(C, V)$.

\item MAV

We create an instance of {\prob{Nonwinning Testing}} for MAV  obtained from the above one for ABCCV by removing one vote approving the distinguished candidate. The correctness of the reduction is as follows.

$(\Rightarrow)$ Suppose that there is an exact set cover $\xc'\subset \xc$ of~$\xs$. Let $\w=\{c(\xce) \mymid \xce\in \xc'\}$ be the subset of candidates corresponding to~$\xc'$. It is easy to check that~$\w$ is a winning $k$-committee with MAV score $\kappa+1$. For the sake of contradiction, assume that there is another winning $k$-committee $\w'\subseteq C$ such that $p\in \w'$. Note that~$w'$ contains exactly~$\xsize-1$ candidates corresponding to~$\xc$. Therefore, the subcollection of~$\xc$ corresponding to $\w'\setminus \{p\}$ cannot be an exact set cover of $\xs$. This means that there exists at least one vote $v(\xse)$ where $\xse\in \xs$ so that none of its approved three candidates is contained in~$\w'$, implying that the MAV score of~$\w'$ is at least~$\kappa+3$. However, this contradicts that~$\w'$ is a winning $k$-committee of MAV at  $(C, V)$.

$(\Leftarrow)$ Assume that~$\xc$ does not contain any exact set cover of~$\xs$. We show that there exists at least one winning $k$-committee of  MAV at $(C, V)$ containing~$p$. Let~$\w$ be a winning $k$-committee of MAV at~$(C, V)$. If $p\in \w$, we are done. So, let us assume that~$p\not\in \w$. As the subcollection of~$\xc$ corresponding to~$\w$ is not an exact set cover of~$\xs$, there exists at least one vote~$v(\xse)$ where $\xse\in \xs$ such that none of its three approved candidates~$c(\xce)$ where $\xse\in \xce\in \xc$ is in~$\w$. Hence, the MAV score of~$\w$ in $(C, V)$ is exactly~$\kappa+3$, which is the maximum possible Hamming distance between a $k$-committee and votes approving at most three candidates. Therefore, in this case replacing any candidate in~$\w$ with~$p$ yields another wining $k$-committee of  MAV at $(C,V)$.

\item PAV

We create the same set~$C$ of candidates as in the reduction for ABCCV. In addition, we create the following votes:
\begin{itemize}
\item one vote approving exactly~$p$ and~$q$;
\item two votes each approving only~$p$;
\item three votes each approving only~$q$; and
\item for each $\xse\in \xs$, one vote $v(\xse)=\{c(\xce) \mymid \xse\in \xce\in \xc\}$.
\end{itemize}
Let~$V$ denote the multiset of the above constructed $6+3\kappa$ votes.
Observe that~$q$ is contained in all winning $k$-committees of PAV at $(C, V)$. Let $k=\kappa+1$.
The instance of {\prob{Nonwinning Testing for PAV}} is $(C, V, p, k)$.

Concerning the correctness of the reduction, it is easy to see that if the {\prob{RX3C}} instance~$(\xs, \xc)$ is a {\yesins}, every  winning $k$-committee of PAV at $(C, V)$ consists of the candidate~$q$ and~$\kappa$ candidates from~$C(\xc)$ whose corresponding subcollection of~$\xc$ is an exact set cover of~$\xs$. Particularly, every winning $k$-committee has PAV score $3\kappa+4$. Assume now that $(\xs, \xc)$ is a {\noins}. Let~$\w$ be a winning $k$-committee of PAV  at $(C, V)$.  If $p\in \w$, the instance of {\prob{Nonwinning Testing}} is a {\noins}; we are done. Otherwise, $\abs{C(\xc)\cap \w}=k-1=\kappa$, i.e.,~$\w$ contains~$\kappa$ candidates corresponding to~$\kappa$ elements in~$\xc$. As a consequence, there are $c(\xce), c(\xce') \in \w$ where $\xce, \xce'\in \xc$ {\st} $\xce\cap\xce'\neq\emptyset$. 
 Then, $\w\setminus \{c(\xce)\}\cup \{p\}$ is also a  winning $k$-committee of PAV at~$(C, V)$ (removing~$c(\xce)$ from~$\w$ decreases the PAV score by at most $5/2$ while adding~$p$ into~$\w$ increases the PAV score~by $5/2$), and hence the instance of {\prob{Nonwinning Testing}} is a {\noins}.
\end{itemize}
It is easy to see the restrictions stated in the theorem are satisfied in the respective reductions. This completes the proof of the theorem.
\end{proof}
}

\onlyfull{The above three theorems mean that all the four destructive control problems for PAV, ABCCV, and MAV are computationally hard to solve.

\begin{corollary}
For $\varphi\in \{\text{ABCCV}, \text{PAV}, \text{MAV}\}$ and $X\in \{{\memph{\prob{DCAV}}}, {\memph{\prob{DCDV}}}, {\memph{\prob{DCAC}}}, {\memph{\prob{DCDC}}}\}$, {\probb{X}{$\varphi$}} is {\nph}.
\end{corollary}
}

Theorem~\ref{thm-abccv-very-speical-case-np-hard} implies that all the four standard destructive control problems for ABCCV, PAV, and MAV are {\paranph} with respect to the solution size even when there is only one distinguished candidate.
We point out that with respect to~$k$ these problems are fixed-parameter intractable, following from the {\wbhns} (resp.\ {\wahns}) of the winner determination problem for ABCCV and MAV~\cite{DBLP:journals/jair/BetzlerSU13,DBLP:conf/atal/MisraNS15} (resp.\ PAV~\cite{AAMAS2015AzizGMMNW}) and trivial reductions from the winner determination problem to the destructive control problems. In the winner determination problem, we are given an election and a number~$s$, and determine if the election contains a $k$-committee of score smaller than~$s$ (MAV) or larger than~$s$ (ABCCV, PAV). Given an instance of winner determination, we add to the election a set of~$k$ distinguished candidates and add polynomially many new votes so that the $k$-committee containing the

In the following, we derive the complexity of destructive control problems for other rules. Hereinafter, let $m=\abs{C}$ and $n=\abs{V}$.

\subsection{Adding Voters}
{
As {\probb{CCAV}{AV}} is {\nph} and {\wah} with respect to the number of added votes when $k=1$~\cite{DBLP:journals/ai/HemaspaandraHR07,DBLP:journals/tcs/LiuFZL09}, the same holds for {\probb{DCAV}{AV}} when $\abs{J}=m-1$ and $k=1$.
We give a general reduction to show {\wbhns} of \prob{DCAV} for rules satisfying a certain property, for the case $\abs{J}=1$.

\begin{theorem}
\label{thm-dcav-many-rules-wbh}
For every {\abmv} rule~$\varphi$ that is $\alpha$-efficient and neutral,
{\probb{DCAV}{$\varphi$}} is {\wbh} with respect to~$\ell$, the number of added votes. This holds even when there is only one distinguished candidate, $m-k=\bigo{1}$, and there are no registered votes.
\end{theorem}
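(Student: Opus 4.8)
The plan is to reduce from \prob{RBDS}, which is \wbc\ with respect to~$\kappa$, taking the parameter to be $\ell=\kappa$. Given an instance $(G,\kappa)$ with bipartition $(R,B)$ (assume without loss of generality that $G$ has no isolated vertices and that $\kappa\le\abs{B}$), I would introduce one candidate $c(r)$ for each red vertex $r\in R$, together with a single distinguished candidate $p$, so that $C=\{c(r):r\in R\}\cup\{p\}$ and $J=\{p\}$. There are no registered votes. For each blue vertex $b\in B$ I create one unregistered vote $u(b)$ approving exactly $\{c(r):r\in\neighbor{G}{b}\}$, and let $U=\{u(b):b\in B\}$. Finally I set $k=\abs{R}$, so that $m-k=\abs{C}-k=1=\bigo{1}$, and $\ell=\kappa\le\abs{B}=\abs{U}$. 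All three claimed restrictions — one distinguished candidate, $m-k=\bigo{1}$, and no registered votes — hold by construction, and the reduction is clearly polynomial-time with the parameter preserved as $\ell=\kappa$.

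For the forward direction, suppose $B'\subseteq B$ with $\abs{B'}=\kappa$ dominates $R$. Adding $U'=\{u(b):b\in B'\}$ gives every $c(r)$ at least one approval while $p$ receives none, so the set of candidates with empty approval is exactly $C_0=\{p\}$ and $\abs{C\setminus C_0}=\abs{R}=k\ge k$. Hence $\alpha$-efficiency forces $p$ out of every winning $k$-committee, witnessing a \yesins\ of \probb{DCAV}{$\varphi$}.

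For the converse I would show that whenever the blue vertices $B'$ whose votes are added fail to dominate $R$, the candidate $p$ survives in some winning committee. If a red vertex $r$ is left undominated, then $c(r)$, like $p$, is approved by no vote in the resulting election $E=(C,U')$; consequently the transposition $f$ swapping $p$ and $c(r)$ and fixing all other candidates leaves $E$ unchanged (that is, $E_f=E$), because no vote contains either candidate. Now take any winning $k$-committee $w$ of $E$ (which exists, as $\varphi$ always selects at least one committee). If $p\in w$ we are done; otherwise, since $m-k=1$, the only $p$-free $k$-committee is $C\setminus\{p\}$, so $w=C\setminus\{p\}$, and neutrality yields that $f(w)=C\setminus\{c(r)\}$ is also winning — a committee containing $p$. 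Either way $p$ lies in a winning committee. Because a \noins\ of \prob{RBDS} admits no dominating set of size at most $\kappa$ (a smaller one could be padded up to size $\kappa$), every admissible $U'$ leaves some red vertex undominated, so $p$ is never excluded and the \probb{DCAV}{$\varphi$} instance is a \noins.

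The main obstacle is precisely this converse: the construction must be engineered so that a \emph{failure} to dominate genuinely places $p$ in a winning committee, for an arbitrary rule $\varphi$ about which we know only the two axioms. The two design choices that make this work are (i) keeping $m-k=1$, which pins the unique $p$-free committee to $C\setminus\{p\}$, and (ii) arranging that an undominated red vertex yields a candidate that is approval-equivalent to $p$, so that the candidate symmetry guaranteed by neutrality can be invoked to transport $p$ into the winning committee. The remaining bookkeeping — equality versus $\ge k$ in the $\alpha$-efficiency hypothesis, and padding a smaller dominating set up to size $\kappa$ so that a \noins\ excludes dominating sets of every size at most $\kappa$ — I expect to be routine but worth stating explicitly.
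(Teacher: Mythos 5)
Your proposal is correct and follows essentially the same route as the paper's own proof: the identical reduction from {\prob{RBDS}} (red vertices as candidates plus $p$, blue vertices as unregistered votes, $k=\abs{R}$, $\ell=\kappa$), with $\alpha$-efficiency handling the forward direction and neutrality applied to an undominated red vertex's candidate (which is approval-equivalent to $p$) handling the converse. Your write-up is in fact slightly more explicit than the paper's on two points the paper leaves implicit: the transposition argument justifying the neutrality step, and the padding of sub-$\kappa$ dominating sets in the no-instance case.
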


\begin{proof}
We prove the theorem via a reduction from {\prob{RBDS}}. Let~$\varphi$ be a neutral rule which satisfies $\alpha$-efficiency.
Let $(G, \kappa)$ be an instance of {\prob{RBDS}} where $G=(R\muplus B, \eset)$ is a bipartite graph where both~$R$ and~$B$ are independent sets. 
Let $C=R\cup \{p\}$ and $J=\{p\}$. In addition, let $V=\emptyset$, $\ell=\kappa$, and $k=\abs{R}$. For each blue vertex $b\in B$, we create one unregistered vote~$v(b)$ which approves all candidates corresponding to the  neighbors of~$b$ in $G$, i.e., $v(b)=\neighbor{G}{b}$. In the following discussion, for a given $B'\subseteq B$, we use~$U(B')$ to denote the votes created for~$B'$, i.e., $U(B')=\{v(b) \setmid b\in B'\}$. Let $((C, V\muplus U(B)), k, J, \ell)$ be the instance of {\probb{DCAV}{$\varphi$}}, which can be constructed in polynomial time. If there exists $B'\subseteq B$ of at of~$\ell$ vertices so that~$B'$ dominates~$R$, then after adding the unregistered votes corresponding to~$B'$, every red-candidate is approved by at least one vote but~$p$ is not approved by any vote. As $k=\abs{R}$ and~$\varphi$ is $\alpha$-efficient,~$p$ cannot be in any winning $k$-committee of~$\varphi$ at~$(C, U(B'))$. Now we prove for the other direction. Assume that~$B$ does not contain any subset of~$\kappa$ vertices that dominate~$R$. Let~$B'$ be any subset of~$B$ of cardinality at most~$\kappa=\ell$, and let $r\in R$ be a red vertex not dominated by any vertex in~$B'$. By the construction of the unregistered votes,~$r$ is not approved by any vote in~$U(B')$. If~$p$ is not in a winning $k$-committee~$\w$ of~$\varphi$ at~$(C, U(B'))$, it holds that $\w=R$. Then, as~$\varphi$ is neutral, $\w\setminus \{r\}\cup \{p\}$ constitutes another winning $k$-committee of~$\varphi$ at $(C, U(B'))$, implying that the instance of {\probb{DCAV}{$\varphi$}} is a {\noins}.

The above reduction applies to the case where $m-k=1$. By adding~$\bigo{1}$ dummy candidates who are not approved by any vote, it applies then to the case where $m-k=O(1)$.
\end{proof}

{

Recall that AV, SAV, NSAV, and PAV are all $\alpha$-efficient and neutral, but none of the other concrete rules considered in the paper is neutral. We have {\wahns} results  for many sequential Thiele's rules, stated in the following two theorems.

}
\begin{theorem}
\label{thm-dcav-seqpav-nph}
For every $\omega$-Thiele's rule~$\varphi$ such that $\triangle_{\omega}(i)>0$ for all nonnegative integers~$i$, {\probb{DCAV}{\memph{seq}$\varphi$}} is {\wah} with respect to~$\ell$, the number of added votes. This holds even if every candidate is approved by at most two votes, there is only one distinguished candidate, and there are no registered votes.
\end{theorem}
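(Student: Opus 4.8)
The plan is to establish {\wah} via a parameterized reduction from {\prob{MPVC}}, which the excerpt records as {\wah} with respect to its parameter~$\kappa$. The guiding observation is that, because $\triangle_{\omega}(i)>0$ for every $i\geq 0$ and there are no registered votes, the behaviour of seq$\varphi$ degenerates into a pure coverage condition. Indeed, the $\varphi$ margin contribution of a candidate~$c$ to any committee~$\w$ equals $\sum_{v\in U',\, c\in v}\triangle_{\omega}(\abs{v\cap \w})$, where~$U'$ is the set of added votes; this is strictly positive whenever~$c$ is approved by at least one added vote, and it is exactly~$0$ for~$p$ and for every candidate approved by no added vote (since~$p$ will be approved by no vote at all). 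Hence the candidates split, once and for all, into a ``positive'' block (those approved by some added vote) and a ``zero'' block, and every positive candidate is always preferred to every zero candidate throughout the sequential procedure.

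Given an {\prob{MPVC}} instance $(G=(\vset,\eset),\kappa,t)$, I would take the registered votes to be empty, create one candidate~$c(\edee)$ for each edge~$\edee\in\eset$ together with the distinguished candidate~$p$, set $J=\{p\}$, $\ell=\kappa$, and $k=t$. For each vertex~$\vere\in\vset$ I create a single unregistered vote~$v(\vere)$ approving exactly the edge-candidates incident to~$\vere$, i.e.\ $v(\vere)=\{c(\edee)\setmid \vere\in\edee\}$, and I fix a tie-breaking order~$\rhd$ that places~$p$ first. Each edge-candidate is then approved by exactly the two votes of its endpoints while~$p$ is approved by none, so every candidate is approved by at most two votes; there are no registered votes and a single distinguished candidate, matching all claimed restrictions. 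Selecting a set~$U'$ of unregistered votes corresponds to selecting a vertex set~$S\subseteq\vset$, and the set of approved candidates is precisely $\{c(\edee)\setmid \edee \text{ is covered by } S\}$, whose cardinality is the number of edges covered by~$S$.

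Correctness then follows from the block structure: the positive candidates are exactly the covered edge-candidates, and they are all inserted into the committee before~$p$; since~$p$ is first among the zero candidates, $p$ enters the winning committee precisely when fewer than~$k$ candidates are approved. Thus~$p$ is kept out of the (unique) winning $k$-committee if and only if at least $k=t$ edge-candidates are approved, i.e.\ if and only if the chosen $\le\ell=\kappa$ votes correspond to a vertex set covering at least~$t$ edges---exactly a yes-instance of {\prob{MPVC}}. Since $\ell=\kappa$, the reduction is parameter-preserving, yielding {\wah} with respect to~$\ell$.

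I expect the main obstacle to be the rigorous justification of the sequential dynamics rather than the combinatorics. One must verify that the positive/zero partition is genuinely invariant along the whole run of seq$\varphi$ (the margin of an approved candidate never drops to~$0$, because $\triangle_{\omega}$ is strictly positive), that no zero-margin candidate can ever overtake a remaining positive-margin candidate, and that the fixed tie-breaking order guarantees~$p$ is the first zero candidate chosen whenever a slot remains after the positive block is exhausted; one also checks that the instance always contains enough candidates to fill the committee, which holds since $\abs{C}=\abs{\eset}+1\geq t$. Once these dynamics are pinned down, the equivalence with edge coverage is immediate.
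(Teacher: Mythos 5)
Your proposal is correct and is essentially the paper's own proof: the same parameterized reduction from {\prob{MPVC}} with one candidate per edge plus the distinguished candidate~$p$, one unregistered vote per vertex, $k=t$, $\ell=\kappa$, no registered votes, and~$p$ first in the tie-breaking order. Your ``positive/zero block'' observation is exactly the argument the paper uses (covered edge-candidates have strictly positive margin contribution by $\triangle_{\omega}(i)>0$, so~$p$ is excluded iff at least~$t$ edge-candidates are approved), so nothing is missing.
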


\begin{proof}
We prove the theorem via a reduction from {\prob{MPVC}}.
Let $(G, \kappa, t)$ be an instance of {\prob{MPVC}}.
We create an instance of {\probb{DCAV}{seq$\varphi$}} as follows, where~$\varphi$ is an $\omega$-Thiele's rule such that $\triangle_{\omega}(i)>0$ for all nonnegative integers~$i$.
For every edge $\edge{\vere}{\vere'}\in \eset$, we create one candidate $c(\edge{u}{u'})$. In the following discussion, for a given $\eset'\subseteq\eset$, we use~$C(\eset')$ to denote the set of candidates created for edges in~$\eset'$, i.e., $C(\eset')=\{c(\edge{\vere}{\vere'}) \setmid \edge{\vere}{\vere'}\in \eset'\}$. In addition, we create one distinguished candidate~$p$. Let $J=\{p\}$ and let $C=J \cup C(\eset)$. There are no registered votes, i.e., $V=\emptyset$. The unregistered votes are created according to the vertices of~$G$. In particular, for every $\vere\in \vset$, we create one vote~$v(\vere)$ which approves all candidates corresponding to edges covered by~$\vere$, i.e., $v(\vere)=\{c(\edge{\vere}{\vere'}) \setmid \edge{\vere}{\vere'}\in \eset\}$. In what follows, for a given $\vset'\subseteq \vset$, we use~$U(\vset')$ to denote the set of the unregistered votes corresponding to~$\vset'$, i.e., $U(\vset')=\{v(\vere) \setmid \vere\in \vset'\}$. Finally, we set $k=t$ and $\ell=\kappa$. In the tie-breaking order, the distinguished candidate~$p$ is in the first position. The instance of {\probb{DCAV}{seq$\varphi$}} is $((C, V\muplus U(\vset)), k, J, \ell)$, which can be constructed in polynomial time. According to the tie-breaking order, we know that~$p$ is contained in the winning $k$-committee of seq$\varphi$ at $(C, V)$.
It remains to show the correctness of the reduction.

$(\Rightarrow)$ Assume that there exists $S\subseteq \vset$ of at most $\kappa$ vertices which cover at least~$t$ edges of~$G$. Let $E=(C, U(S))$. Let~$\eset'$ be the set of edges in~$\eset$ covered by~$S$. Then, in the election~$E$, every candidate corresponding to an edge in~$\eset'$ is approved by at least one vote and~$p$ is not approved by any vote. Given $k=t$, $\abs{\eset'}\geq t$, and $\triangle_{\omega}(i)>0$ for all $i\geq 0$, we know that~$p$ cannot be contained in the winning $k$-committee of seq$\varphi$ at~$E$.

$(\Leftarrow)$ Assume that there exists $S\subseteq \vset$ so that $\abs{S}\leq \ell=\kappa$ and the distinguished candidate~$p$ is not contained in the winning $k$-committee of seq$\varphi$ at $E=(C, U(S))$. As~$p$ is the first candidate in the tie-breaking order, we know that there are at least~$k=t$ candidates each of which is approved by at least one vote in~$U(S)$. This directly implies that~$S$ covers at least~$t$ edges in~$G$.
\end{proof}

{Note that seqPAV is a sequential $\omega$-Thiele's rule where $\triangle_{\omega}(i)>0$, but seqABCCV and seqAV are not such rules. We establish another general result which covers these two rules.
}

\begin{theorem}
\label{thm-dcav-seqabccv-wa-hard}
For each $\omega$-Thiele's rule~$\varphi$ such that $\omega(1)>0$ and $\omega(2)\leq 2\omega(1)$,
{\probb{DCAV}{{\memph{seq}$\varphi$}}} is {\wah} with respect to~$k$, the size of a winning committee. Moreover, this holds even if there is only one distinguished candidate, and every vote approves at most two candidates.
\end{theorem}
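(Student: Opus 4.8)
```latex
The plan is to prove Theorem~\ref{thm-dcav-seqabccv-wa-hard} by a reduction from {\prob{Independent Set}} restricted to regular graphs, which is {\wac} with respect to the solution size~$\kappa$. This mirrors the style of Theorem~\ref{thm-abccv-very-speical-case-np-hard}, where the parameter~$k$ of the voting problem is set equal to~$\kappa$, so that {\wahns} with respect to~$k$ follows directly. The guiding intuition is that the condition $\omega(2)\leq 2\omega(1)$ makes spreading approvals across many distinct candidates at least as attractive as piling them onto few candidates, so the sequential procedure will prefer to fill a committee with an ``independent'' collection of candidates; meanwhile the distinguished candidate~$p$ must be given a fixed reservoir of private approvals that keeps it on the boundary of winning.

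First I would set up the candidate set: for each vertex~$\vere$ of the $d$-regular graph~$G=(\vset,\eset)$ create a candidate (denoted by the same symbol), add the distinguished candidate~$p$, and let $C=\vset\cup\{p\}$ with $J=\{p\}$. All votes must approve at most two candidates, so the natural encoding is to put \emph{one vote per edge} $\edge{\vere}{\vere'}$ approving exactly the two endpoint-candidates, together with some number of \emph{private votes} approving only~$p$ (each such vote approves just one candidate, which is allowed). I set $k=\kappa$ and $\ell$ equal to the number of these private $p$-votes. The reduction gives the unregistered votes as the private $p$-votes, while the registered votes~$V$ consist of the edge-votes; since the external agent is \emph{adding} voters from~$U$, deleting~$p$ from contention is achieved by adding enough private $p$-votes to let other candidates edge ahead, or---more likely, given that adding votes for~$p$ would \emph{help}~$p$---the private votes should instead reinforce vertex-candidates. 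I would therefore put the per-edge votes into~$U$ and give~$p$ a fixed registered score, choosing the counts so that, in the empty-addition baseline,~$p$ is forced into the winning committee by the tie-breaking order, and adding edge-votes corresponding to an independent set pushes~$p$ out.

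The core correctness argument splits in the usual two directions. For the forward direction I would show that if $w\subseteq\vset$ is an independent set of size~$\kappa$, then adding the $\kappa$ (or appropriately many) edge-votes incident to~$w$ causes each vertex in~$w$ to accumulate enough sequential margin contribution---exactly~$d\cdot\omega(1)$ if the approvals land on distinct candidates, which independence guarantees---to be selected ahead of~$p$, filling all~$k$ slots and excluding~$p$. Here the inequality $\omega(2)\leq 2\omega(1)$ is what ensures that even when two added edge-votes happen to share a candidate the margin does not \emph{exceed} what two independent approvals would give, so a clique cannot ``cheat'' by concentrating weight; combined with $\omega(1)>0$ guaranteeing every approval strictly helps, this pins the optimal addition to independent sets. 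For the reverse direction I would assume~$p$ is kept out by adding at most~$\ell$ votes and argue that the committee of~$k$ vertex-candidates that beats~$p$ must be independent: if two chosen vertices were adjacent, the shared edge-vote would let one of them drop to margin~$\omega(2)\leq 2\omega(1)$ while a better-spread alternative involving~$p$ exists, contradicting that~$p$ is excluded.

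The main obstacle I anticipate is calibrating the private-$p$-vote count and the registered/unregistered split so that the sequential tie-breaking order does the intended work in \emph{both} directions simultaneously. Because seq$\varphi$ adds candidates one at a time and $\omega$ may be strictly concave (as for seqABCCV, where $\omega(i)=1$ for all $i\ge1$, so $\omega(2)=\omega(1)$ and the inequality is tight), I must track margin contributions across \emph{every} iteration, not just the final committee score, and rule out pathological interleavings where a vertex-candidate and~$p$ trade places mid-procedure. The tight case $\omega(2)=2\omega(1)$ (which is \emph{not} reached by seqABCCV but is the boundary the theorem allows) is the delicate one, since then adjacent and independent selections yield equal committee scores and the argument must lean entirely on the sequential dynamics and the tie-breaking order rather than on a strict score gap; getting the tie-breaking order to consistently favor the configuration that excludes~$p$ exactly when an independent set exists is where the real care is needed.
```
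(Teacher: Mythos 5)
Your construction for the generic case is, after your back-and-forth, essentially the paper's: $d$ registered votes approving only~$p$, one unregistered vote per edge of a $d$-regular graph, $k=\kappa$, budget $\ell=\kappa\cdot d$, and~$p$ placed last in the tie-breaking order. That reduction from {\prob{Independent Set}} is exactly what the paper uses --- but only for the case $\omega(2)<2\omega(1)$. The genuine gap is your treatment of the boundary case $\omega(2)=2\omega(1)$, which is not a fringe technicality (it is precisely seqAV, $\omega(i)=i$, and the theorem is stated to cover it). You propose to rescue the same reduction there by ``leaning on the sequential dynamics and the tie-breaking order,'' but no choice of tie-breaking order can make it work. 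When $\Delta_{\omega}(1)=\omega(1)$, a vote approving two candidates contributes exactly $\omega(1)$ to a candidate's margin contribution \emph{whether or not} its other approved candidate is already in the committee; freshness of votes is completely invisible to seq$\varphi$ in an election where every vote approves at most two candidates. Consequently the controller can pick \emph{any} $\kappa$ vertices~$S$, add the at most $\kappa\cdot d$ edge-votes incident to~$S$, and every member of~$S$ then has margin contribution exactly $d\cdot\omega(1)$ at every step of the procedure, ties with~$p$, and precedes~$p$ in the tie-breaking order; hence~$p$ is excluded regardless of whether~$G$ has an independent set of size~$\kappa$. The constructed instance is always a {\yesins}, so the reduction is unsound at the boundary.

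This is why the paper splits the proof into two cases and, for $\omega(2)=2\omega(1)>0$, switches to a reduction from {\prob{Clique}} restricted to regular graphs with the budget lowered to $\ell=\kappa\cdot d-\frac{\kappa(\kappa-1)}{2}$. The logic there is the opposite of your intuition that $\omega(2)\leq 2\omega(1)$ prevents a clique from ``cheating by concentrating weight'': at the boundary, shared votes are worth exactly as much as fresh ones, so cliques \emph{do} cheat --- a clique is the vote-cheapest way to give $\kappa$ candidates full margin $d\cdot\omega(1)$, since each internal edge-vote serves both endpoints --- and the reduced budget $\kappa d-\binom{\kappa}{2}$ is attainable only by a clique. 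The paper exploits this rather than fighting it. A secondary (fixable, but real) weakness: even in the strict case, your forward direction asserts that independence of~$w$ alone forces~$p$ out, but a vertex \emph{outside} the chosen independent set whose entire neighborhood lies inside it also attains full margin $d\cdot\omega(1)$ and may be selected first, lowering the margins of members of the independent set below~$p$'s; a careful argument must account for all full-margin candidates across every iteration, not only the members of the intended independent set.
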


\begin{proof}
Let~$\varphi$ be an $\omega$-Thiele's rule. Observe that by the definition of $\omega$-Thiele's rules, $\omega(2)<2\omega(1)$ implies already $\omega(1)>0$. In the following, we first prove the {\wahns}  for $\omega$-Thiele's rules such that $\omega(2)<2\omega(1)$, and then we prove for rules such that $\omega(2)=2\omega(1)>0$.

\begin{itemize}
\item $\omega(2)<2\omega(1)$.
\end{itemize}

We establish a reduction from {\prob{Independent Set}} restricted to regular graphs to {\probb{DCAV}{seq$\varphi$}} as follows. Let $(G, \kappa)$ be an instance of {\prob{Independent Set}} where~$G=(\vset, \eset)$ is a regular graph. Let~$d$ be the degree of vertices in~$G$. We create an instance {\probb{DCAV}{seq$\varphi$}} as follows. For every $\vere\in \vset$, we create one candidate denoted by the same symbol for simplicity. In addition, we create one candidate~$p$. Let $J=\{p\}$ and let $C=\vset\cup J$. The tie-breaking order is $\overrightarrow{\vset}\rhd p$. We create a multiset~$V$ of~$d$ registered votes each of which approves only~$p$. 
Unregistered votes are constructed according to edges in~$G$. Precisely, for each edge $\edge{\vere}{\vere'}\in \eset$, we create one vote $v(\edge{\vere}{\vere'})=\{\vere, \vere'\}$ which approves the two vertex-candidates corresponding to its two endpoints. In the following discussion, for a given  $\eset'\subseteq \eset$, we use~$U(\eset')$ to denote the set of unregistered votes corresponding to~$\eset'$, i.e., $U(\eset')=\{v(\edge{\vere}{\vere'}) \setmid \edge{\vere}{\vere'}\in \eset'\}$.  Note that as every vertex is of degree~$d$, every candidate in $C\setminus \{p\}$ is approved by exactly~$d$ votes in~$U(\eset)$. Finally, we set $k=\kappa$ and $\ell=\kappa\cdot d$. The instance of {\probb{DCAV}{seq$\varphi$}} is $((C, V\muplus U(\eset)), k, J, \ell)$, which can be constructed in polynomial-time. Due to~$\omega(1)>0$, we know that~$p$ is contained in the winning $k$-committee of seq$\varphi$ at~$(C, V)$. In the following, we prove that the instance of {\prob{Independent Set}} is a {\yesins} if and only if the constructed instance of {\probb{DCAV}{seq$\varphi$}} is a {\yesins}.

$(\Rightarrow)$ Suppose that~$G$ admits an independent set $\vset'\subseteq \vset$ of size~$\kappa$. Let $\partial(\vset')=\{\edge{\vere}{\vere'}\in \eset \setmid \{\vere, \vere'\}\cap \vset'\neq\emptyset\}$ be the set of edges covered by~$\vset'$. As~$G$ is $d$-regular and~$\vset'$ is an independent set, it holds that $\abs{\partial(\vset')}=\kappa\cdot d=\ell$.
Let $E=(C, V\cup U(\partial(\vset')))$, and let~$\w$ denote the winning $k$-committee of seq$\varphi$ at~$E$. It is easy to verify that~$p$ is not contained~$\w$. In fact, by the construction of the unregistered votes and the fact that~$\vset'$ is an independent set, every candidate $\vere\in \vset'$ is approved by exactly~$d$ votes in~$U(\partial(\vset'))$ and, moreover none of these~$d$ votes approves anyone else in~$\vset'$. Though~$p$ is also approved by~$d$ votes in~$E$, it is ranked in the last position in the tie-breaking order. Therefore,~$p$ can be contained in~$\w$ only if all of~$\vset'$ are contained in~$\w$. However, as $\abs{\vset'}=\kappa=k$, it must be that~$p\not\in \w$.

$(\Leftarrow)$ Suppose that there exists $\eset'\subseteq \eset$  so that $\abs{\eset'}\leq \ell$ and~$p$ is not contained in the winning $k$-committee, denoted~$\w$, of seq$\varphi$ at $E=(C, V\cup U(\eset'))$. We prove below that~$\w$ is an independent set in~$G$. To this end, let us consider the order~$\w^1_E$,~$\w^2_E$, $\dots$,~$\w^k_E$ of~$\w$. To show that~$\w$ is an independent set in~$G$, it suffices to show that $w^i_E$ is not adjacent to anyone from $\w^{\leq i-1}_E$ in the graph~$G$ for every $i\in [k]\setminus \{1\}$. Let~$i$ be an integer in $[k]\setminus \{1\}$. As there are~$d$ votes in~$E$ approving only~$p$, and there are no other votes approving~$p$, the~$\varphi$ margin contribution of~$p$ to~$\w^{\leq i-1}_E$ is $d\cdot \omega(1)$. Then, as $p\not\in \w$ and $\omega(1)>0$, the~$\varphi$ margin contribution of~$\w^i_E$ to~$\w^{\leq i-1}_E$ is at least~$d \cdot \omega(1)$. As $\w^i_E$ is approved by exactly~$d$ votes in~$U(\eset)$ (corresponding to the~$d$ edges covered by~$\w^i_E$) and is not approved by any vote from~$V$, we know that all the~$d$ votes approving~$\w^i_E$ are contained in~$U(\eset')$ and, more importantly, by $\omega(2)<2\omega(1)$ it holds that none of these~$d$ votes approves any candidates in~$\w^{\leq i-1}_E$ (otherwise the~$\varphi$ margin contribution of~$\w^i_E$ to~$\w^{\leq i-1}_E$ is strictly smaller than $d\cdot \omega(1)$). This directly implies that~$\w^i_E$ is not adjacent to any of~$\w^{\leq i-1}_E$ in the graph~$G$. As this holds for all $i\in [k]\setminus \{1\}$, we conclude that~$\w$ forms an independent set in~$G$.

\begin{itemize}
\item $\omega(2)=2\omega(1)>0$.
\end{itemize}

We establish a reduction from {\prob{Clique}} restricted to regular graphs to {\probb{CDAV}{seq$\varphi$}}. The reduction is similar to the one just studied above. In particular, given an instance $(G, \kappa)$ of {\prob{Clique}} where~$G=(\vset,\eset)$ is a $d$-regular graph,
we create the same instance $((C, V\muplus U(\eset)), k, J, \ell)$ as constructed above with only the difference that we set $\ell=\kappa \cdot d - \frac{\kappa\cdot (\kappa-1)}{2}$.  We assume that $d\geq \kappa-1$ since otherwise the {\prob{Clique}} instance is a trivial {\noins}.  It remains to prove for the correctness of the reduction.

$(\Rightarrow)$ Suppose that~$G$ admits a clique $\vset'\subseteq \vset$ of size~$\kappa$. Let $\partial(\vset')$ be the set of edges covered by~$\vset'$ as defined in the above reduction.
As~$G$ is $d$-regular and~$\vset'$ is a clique, it holds that $\abs{\partial(\vset')}=\kappa \cdot d - \frac{\kappa\cdot (\kappa-1)}{2}=\ell$.
Let $E=(C, V\cup U(\partial(\vset')))$, and let~$\w$ denote the winning $k$-committee of seq$\varphi$ at~$E$.
We show below that $p\not\in \w$. To this end, let us consider the sequence $\w^1_E$, $\w^2_E$, $\dots$, $\w^k_{E}$ of~$\w$.
By the construction of the unregistered votes, every candidate $\vere\in \vset'$ is approved by exactly~$d$ votes in~$U(\partial(\vset'))$.
Though~$p$ is also approved by~$d$ votes in~$E$,~$p$ is ranked last in the tie-breaking order.
Therefore, it holds that $\w^1_E\in \vset'$. Assuming $\w^{\leq i}\subseteq \vset'$ for $i\leq k-1$, we prove that $\w^{i+1}_E\in \vset'$.
To see this, recall $\triangle_{\omega}(1)=\omega(2)-\omega(1)=\omega(1)>0$. For every $\vere\in \vset'$, let~$x_{\vere}$ be the number of neighbors of~$\vere$ contained in~$\w^{\leq i}_E$, i.e., $x_{\vere}=\abs{\neighbor{G}{\vere}\cap \w^{\leq i}_E}$.
Then, the~$\varphi$ margin contribution of every $\vere\in \vset'\setminus \w^{\leq i}_E$ to~$\w^{\leq i}_E$ is
$\triangle_{\omega}(1) \cdot x_{\vere} + \omega(1)\cdot (d-x_{\vere})=d\cdot \omega(1)$, which equals that of~$p$ to~$\w^{\leq i}_E$. As~$p$ is after~$\vere$ in the tie-breaking order, we know that~$p$ cannot be~$\w^{i+1}_E$. We can deduce then that $p\not\in \w$.

$(\Leftarrow)$ Assume that there exists~$\eset'\subseteq \eset$ such that $\abs{\eset'}\leq \ell$ and~$p$ is not contained in the winning $k$-committee, denoted~$\w$, of seq$\varphi$ at the election $E=(C, V\cup U(\eset'))$. Hence, it holds~$\w\subseteq \vset$. We claim first that every candidate in~$\w$ is approved by~$d$ votes from~$U(\eset')$. To prove the claim, for the sake of contradiction, assume that~$\w$ contains a candidate which is approved by less than~$d$ votes from~$U(\eset')$. Let~$i$ be the smallest integer~$i$ so that~$\w^i_E$ is approved by at most $d-1$ votes from~$U(\eset')$. Then, as $\triangle_{\omega}(1)=\omega(1)$, the~$\varphi$ margin contribution of~$\w^i_E$ to $\w^{\leq i-1}_E$ can be at most $(d-1) \cdot \omega(1)$. However, the~$\varphi$ margin contribution of~$p$ to~$\w^{\leq i}_E$ is $d\cdot \omega(1)$, and by $\omega(1)>0$ this contradicts with~$\w^i_E\in \vset$. This completes the proof of the claim.
The claim implies that for every $\vere\in\w$, all the~$d$ edges covered by~$\vere$ are contained in~$\eset'$. In other words, $\partial(\w)\subseteq \eset'$. As a consequence, $\abs{\partial(\w)}\leq \ell=\kappa \cdot d - \frac{\kappa\cdot (\kappa-1)}{2}$. As~$G$ is a $d$-regular graph, this can be the case only when $\abs{\partial(\w)}= \ell$ and~$\w$ forms a clique. In fact, for the sake of contradiction, assume that~$\w$ is not a clique. Let~$t$ denote the number of missing edges between vertices in~$\w$. So, it holds that $t\geq 1$. As~$G$ is $d$-regular, we have that $\abs{\partial({\vset'})}=\kappa\cdot d-(\frac{\kappa \cdot (\kappa-1)}{2}-t)>\ell$, a contradiction.
\end{proof}

\onlyfull
{
As {\prob{MPVC}} remains {\nph} even when restricted to graphs of maximum degree three and when~$\abs{\eset}-t$ is a constant, the above proof gives the following result.

\begin{corollary}
For every $\omega$-Thiele's rule~$\varphi$ such that $\omega(1)>0$, {\probb{DCAV}{\memph{seq}$\varphi$}} is {\paranph} with respect to the parameter~$m-k$, where~$m$ is the number of candidates.
Moreover, this holds even when every voter approves at most three candidates, every candidates is approved by at most two voters, and there is only one distinguished candidate.
\end{corollary}
}

As a natural parameter, it is important to see if for $\varphi\in \{\text{AV}, \text{SAV}, \text{NSAV}\}$, {\probb{DCAV}{$\varphi$}} remains fixed-parameter intractable with respect to~$k$. We show below that the problem is in fact {\wah} even when combined parameterized by~$k$,~$\ell$, and the number of registered votes, for all AV-uniform rules~$\varphi$. Recall that AV, SAV, and NSAV are AV-uniform.

\begin{theorem}
\label{thm-dcav-many-rules-wbh-k-ell}
For each AV-uniform {\abmv} rule~$\varphi$,
{\probb{DCAV}{$\varphi$}} is {\wah} with respect to~$k+\ell+n_{\memph{rg}}$, where~$n_{\memph{rg}}$ denotes the number of registered votes. This holds even when there is only one distinguished candidate and every vote approves two candidates.
\end{theorem}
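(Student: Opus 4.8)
The plan is to reduce from \prob{Clique}, which is \wah\ with respect to the clique size~$\kappa$, exploiting the fact that once every vote approves exactly two candidates, AV-uniformity forces~$\varphi$ to behave exactly like AV on every relevant (sub)election: for all $U'\subseteq U$ we get $\varphi(C,V\muplus U',k)=\mathrm{AV}(C,V\muplus U',k)$. So it suffices to build a two-approval AV instance in which the distinguished candidate~$p$ can be pushed out of all winning $k$-committees by adding a parameter-bounded number of votes precisely when~$G$ has a $\kappa$-clique.

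The construction is as follows. Given $(G=(N,A),\kappa)$ with $\kappa\ge 3$ (smaller $\kappa$ is trivial) and $\abs{A}\ge\binom{\kappa}{2}$, create one candidate per vertex of~$G$, the distinguished candidate~$p$, and $\kappa-2$ dummy candidates $q_1,\dots,q_{\kappa-2}$. The registered votes are the $\kappa-2$ two-approval votes $\{p,q_i\}$, so $n_{\mathrm{rg}}=\kappa-2$, which gives~$p$ AV score $\kappa-2$, each~$q_i$ score~$1$, and every vertex-candidate score~$0$. For every edge of~$G$ create one unregistered two-approval vote approving its two endpoints. Set $J=\{p\}$, $k=\kappa$, and $\ell=\binom{\kappa}{2}$. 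All votes approve exactly two candidates, and the three parameters $k$, $\ell$, $n_{\mathrm{rg}}$ are all $O(\kappa^2)$, as required.

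The correctness rests on two points. First, under AV, $p$ lies in no winning $k$-committee iff at least $k=\kappa$ candidates have score strictly exceeding~$\kappa-2$, i.e.\ score $\ge\kappa-1$; since the dummies have score~$1$ and vertex-candidates draw their score only from added edges, this amounts to choosing $A'\subseteq A$ with $\abs{A'}\le\binom{\kappa}{2}$ so that $\kappa$ vertices each receive $\ge\kappa-1$ incident edges from~$A'$. The forward direction is immediate: the $\binom{\kappa}{2}$ edges inside a $\kappa$-clique give each clique vertex added-degree exactly $\kappa-1$. The converse is the crux and I handle it by a tight double-counting: if a $\kappa$-set~$S$ has every vertex of added-degree $\ge\kappa-1$, then $\sum_{v\in S}\deg_{A'}(v)\ge\kappa(\kappa-1)=2\binom{\kappa}{2}\ge 2\abs{A'}$, which forces every chosen edge to lie inside~$S$ and $\abs{A'}=\binom{\kappa}{2}$, so~$S$ spans all $\binom{\kappa}{2}$ pairs and is a clique.

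The main obstacle is the demand that $n_{\mathrm{rg}}$ be bounded by a function of the parameter: with only $O(\kappa)$ registered two-approval votes available, essentially all candidates must start at score~$0$, so the entire scoring landscape among vertex-candidates has to be generated by the $\le\ell$ added votes. The double-counting above is exactly what makes this feasible — the budget $\ell=\binom{\kappa}{2}$ is just tight enough that $\kappa$ vertices can reach the threshold $\kappa-1$ only by forming a clique, so no slack can be used to cheat, and the registered votes are needed solely to fix~$p$'s score at $\kappa-2$ without touching any vertex-candidate.
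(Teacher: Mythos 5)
Your proposal is correct and takes essentially the same approach as the paper's proof: both reduce from {\prob{Clique}}, create one vertex-candidate per vertex, one two-approval unregistered vote per edge, use $\kappa-2$ registered two-approval votes to pin~$p$'s score at $\kappa-2$, and set $k=\kappa$, $\ell=\frac{\kappa(\kappa-1)}{2}$, so that~$p$ is pushed out of all winning $k$-committees exactly when $\kappa$ vertex-candidates reach score $\kappa-1$, which forces a clique. The only cosmetic differences are that the paper pairs~$p$ with a single dummy~$p'$ in all registered votes whereas you use $\kappa-2$ distinct dummies, and that your explicit double-counting argument in the converse direction is precisely the counting the paper leaves implicit in its final sentence.
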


\begin{proof}
Let~$\varphi$ be an AV-uniform {\abmv} rule.
    We prove Theorem~\ref{thm-dcav-many-rules-wbh-k-ell} via a reduction from {\prob{Clique}}. Let $(G, \kappa)$ be an instance of {\prob{Clique}}, where $G=(\vset, \eset)$. We construct an instance of {\probb{DCAV}{$\varphi$}} as follows. The candidate set is $C=\vset\cup \{p, p'\}$, where~$p$ is the only  distinguished candidate. We create a multiset~$V$ of $\kappa-2$ registered votes each of which approves exactly~$p$ and~$p'$. For each edge $\{\vere, \vere'\}$ of~$G$, we create one unregistered vote $v(\vere, \vere')$ approving exactly~$\vere$ and~$\vere'$. Let~$U$ be the set of the unregistered votes. Finally, let $k=\kappa$ and $\ell=\frac{\kappa \cdot (\kappa-1)}{2}$. The instance of {\probb{DCAV}{$\varphi$}} is $((C, V\cup U), k, \{p\}, \ell)$. Obviously, every vote in $V\cup U$ approves exactly two candidates. In the following, we prove that the {\prob{Clique}} instance is a {\yesins} if and only if the {\probb{DCAV}{$\varphi$}} instance is a {\yesins}.

    $(\Rightarrow)$ Assume that~$G$ contains a clique~$K$ of size~$\kappa$. Let~$U(K)$ be the set of the $\frac{\kappa \cdot (\kappa-1)}{2}$ votes corresponding to the edges among vertices of~$K$ in~$G$. Let~$E=(C, V\cup U(K))$. By the above construction of the votes, in the election~$E$, every candidate from~$K$ receives $\kappa-1$ approvals, the distinguished candidate~$p$ receives $\kappa-2$ approvals, and every other candidate receives at most one approval. As $\abs{K}=k$, we know that~$p$ cannot be in any winning $k$-committees of~$\varphi$ at~$E$, and hence the {\probb{DCAV}{$\varphi$}} instance is a {\yesins}.

    $(\Leftarrow)$ Assume that there is a subset~$U'$ of at most~$\ell$ votes so that $p$ is not contained in any winning $k$-committees of the election $E=(C, V\cup U')$. As none of the unregistered votes approves~$p$ and~$p'$, we know that~$p$ receives~$\kappa-2$ approvals in~$E$. This implies that there are at least~$k$ candidates from~$\vset$ each receiving at least $\kappa-1$ approvals in~$E$. Let~$B$ denote the set of all candidates receiving at least~$\kappa-1$ approvals in~$E$. As every vote approves exactly two candidates, it holds that $\abs{B}=\kappa$, $\abs{U'}= \frac{\kappa \cdot (\kappa-1)}{2}$ and, moreover,~$K$ is a clique of~$G$. So, the {\prob{Clique}} instance is a {\yesins}.
\end{proof}

\subsection{Deleting Voters}
Now we establish a reduction for {\prob{DCDV}} that applies to a class of {\abmv} rules. It should be mentioned that Magiera~\citeas{Magierphd2020} obtained {\nphns} of {\probb{DCDV}{AV}} and {\probb{DCDV}{SAV}} via two separate reductions from {\prob{Exact Cover by Three Sets}}.
However, the reductions do not imply {\wbhns} with respect to any parameters studied in the paper.
The {\wbhns} result of {\probb{CCDV}{AV}} by~\citeas{DBLP:journals/tcs/LiuFZL09} indicates that {\probb{DCDV}{AV}} is {\wbh} with respect to the number of deleted votes when $\abs{J}=m-1$ and $k=1$, but their reduction is only tailored for the AV rule. We derive a reduction to show the {\wbhns} of {\probb{DCDV}{$\varphi$}} for all AV-uniform rules for the special case where there is only one distinguished candidate.

\begin{theorem}
\label{thm-dcdv-save-wbh-remaining-votes}
For every AV-uniform {\abmv} rule~$\varphi$,
{\probb{DCDV}{$\varphi$}} is {\wbh} with respect to~$n-\ell$, the number of votes not deleted. This holds even when there is only one distinguished candidate. 
\end{theorem}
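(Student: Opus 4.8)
The plan is to reduce from {\prob{RBDS}}, which is {\wbh} with respect to $\kappa$. Given an instance $(G,\kappa)$ with $G=(R\muplus B,\eset)$ (and $\abs{B}\geq\kappa$), I build a {\probb{DCDV}{$\varphi$}} instance with a single distinguished candidate $p$ in which the number of \emph{undeleted} votes is forced to be essentially $2\kappa$. For every red vertex $r\in R$ create one candidate $c(r)$, and let $J=\{p\}$. The votes come in two groups: (i) $\kappa$ ``universal'' votes, each approving all of $\{c(r):r\in R\}$ but \emph{not} $p$; and (ii) for every blue vertex $b\in B$, one ``blue'' vote approving $p$ together with $\{c(r):r\in\neighbor{G}{b}\}$. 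To invoke AV-uniformity I pad every vote up to a common cardinality using fresh private dummy candidates (each dummy approved by a single vote), so that on any subelection all votes have equal size and hence $\varphi$ agrees with AV. Finally set $k=\abs{R}$ and $\ell=\abs{B}-\kappa$, so that $n-\ell=2\kappa$.

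The design makes the AV score of $p$ in any remaining election equal to the number of surviving blue votes, while $c(r)$ scores (surviving universal votes) plus (surviving blue votes whose vertex dominates $r$); the dummies stay far below $p$. First I would argue the forward direction: from a size-$\kappa$ blue dominating set $B^{\star}$, keep all $\kappa$ universal votes and exactly the $\kappa$ blue votes of $B^{\star}$, deleting the other $\abs{B}-\kappa=\ell$ blue votes. Then $p$ scores $\kappa$ while every $c(r)$ scores at least $\kappa+1$, so all $\abs{R}=k$ red candidates strictly beat $p$ and $p$ lies in no winning $k$-committee, a {\yesins}. For the converse, observe that un-deleting a universal vote only raises the reds and keeps $p$ fixed while staying within budget, so any solution may be assumed to retain all $\kappa$ universal votes; hence if $\kappa+j$ blue votes survive then $p$ scores $\kappa+j$, and ``$p$ excluded'' forces every $c(r)$ to be dominated by at least $j+1$ of the surviving blue vertices. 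Removing any $j$ of these $\kappa+j$ blue vertices then leaves every red dominated at least once, so the remaining $\kappa$ blue vertices form a dominating set and the {\prob{RBDS}} instance is a {\yesins}.

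The crux is non-monotonicity. If blue votes did not approve $p$, then $p$ would always score $0$ and, since retaining extra votes can only raise the reds, one could trivially exclude $p$ by keeping all blue votes, collapsing the reduction; making each blue vote approve $p$ is exactly what ties $p$'s score to the number of retained blue votes, so that retaining more than $\kappa$ of them only raises the threshold a red must clear. The second subtlety is seeding each red candidate with a base score without inflating the parameter: a naive per-red gadget would force $\Theta(\abs{R}\kappa)$ undeletable votes, whereas the $\kappa$ universal votes supply the needed base using only $\kappa$ votes, keeping $n-\ell=2\kappa$ bounded by a function of $\kappa$. The main work is therefore verifying the backward direction's redundancy argument (that any ``$(j{+}1)$-fold dominating'' family of $\kappa+j$ blue votes contains a $\kappa$-subset dominating $R$) and checking that the private dummies never interfere, after which {\wbhns} with respect to $n-\ell$ follows.
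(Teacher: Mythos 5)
Your proposal is correct, and its skeleton is the same as the paper's: a reduction from {\prob{RBDS}} with $k=\abs{R}$ and $\ell=\abs{B}-\kappa$, blue votes of the form $\{p\}\cup\{c(r) \setmid r\in \neighbor{G}{b}\}$ so that $p$'s score equals the number of surviving blue votes, a base score of~$\kappa$ for every red candidate, and the identical closing argument in the backward direction (if $\kappa+j$ surviving blue votes dominate every red vertex at least $j+1$ times, discarding any~$j$ of them leaves a dominating $\kappa$-set). Where you genuinely differ is in the gadget that supplies the base scores while making AV-uniformity invocable. The paper partitions $C(R)$ together with $d+1-t$ filler candidates into blocks of size exactly $d+1$ (where $d=\max_{b\in B}\degree{G}{b}$ and $\abs{R}=\kappa'(d+1)+t$), creates~$\kappa$ votes per block, and pads each blue vote with $d-\degree{G}{b}$ private candidates so that every vote approves exactly $d+1$ candidates; this forces the preprocessing assumptions $\abs{R}\leq\kappa(d+1)$ and $\abs{B}\geq\kappa+d'$ and yields the parameter $n-\ell=\kappa(\kappa'+2)$. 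You instead use~$\kappa$ universal votes approving all of $C(R)$ and equalize vote cardinalities with single-use dummy candidates. Your version is simpler: it needs no assumptions on $\abs{R}$ or $\abs{B}$, no arithmetic decomposition, and gives the cleaner parameter $n-\ell=2\kappa$. Two small points you should spell out when writing it up: first, the dummies are harmless because each has score at most $1\leq\kappa\leq$ the score of~$p$ in every relevant subelection, so only red candidates can ever strictly beat~$p$; second, in the exchange step, un-deleting a universal vote also activates its private dummies (raising them from score~$0$ to~$1$), which is fine for the same reason. With those checks made explicit, the proof is complete and proves exactly the claimed statement.
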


\begin{proof}
Let~$\varphi$ be an AV-uniform {\abmv} rule.
We prove the theorem via a reduction from {\prob{RBDS}}. Let $(G, \kappa)$ be an instance of {\prob{RBDS}} where $G=(R\muplus B, \eset)$ is a bipartite graph with~$R$ and~$B$ being two independent sets.
Let~$d=\max_{b\in B}\degree{G}{b}$ be the maximum degree of vertices in~$B$ in the graph~$G$, and let $d'=\max_{r\in R}\degree{G}{r}$ be the maximum degree of vertices in~$R$.
We assume $\abs{R}\leq \kappa \cdot (d+1)$ and $\abs{B}\geq \kappa+d'$. This assumption does not change the {\wbhns} of {\prob{RBDS}}.\footnote{If $\abs{R}>\kappa\cdot (d+1)$, the {\prob{RBDS}} instance is a trivial {\noins}. If $\abs{B}< \kappa+d'$, then any red vertex with degree~$d'$ is dominated by any subset of~$\kappa$ blue vertices. Therefore, if this condition is not satisfied, we can preprocess the given instance by iteratively removing red vertices with the maximum degree until the condition is satisfied. By the above discussion, the original instance and the instance after the preprocessing are equivalent.}
Assume that $\abs{R}=\kappa'\cdot (d+1)+t$ where~$\kappa'$ and~$t$ are two nonnegative integers so that $\kappa'\leq \kappa$ and $t<d+1$. As $\abs{R}\leq \kappa \cdot (d+1)$, such~$\kappa'$ and~$t$ exist and they are unique.
We create an instance of {\probb{DCDV}{$\varphi$}} as follows.

The candidates are constructed as follows.
\begin{itemize}
\item For each $r\in R$, we create one candidate~$c(r)$. Let $C(R)=\{c(r) \setmid r\in R\}$ be the set of candidates corresponding to~$R$.
\item We create a set~$X$ of~$d+1-t$ candidates.
\item For every $b\in B$, we create a set~$C(b)$ of $d-\degree{G}{b}$ candidates. Note that if~$b$ has degree~$d$ in~$G$, it holds that $C(b)=\emptyset$. Let $C(B)=\bigcup_{b\in B}C(b)$ be the set of all candidates created for vertices in~$B$.
\item Finally, we create a candidate~$p$ which is the only distinguished candidate.
\end{itemize}
Let $J=\{p\}$ and let $C=C(R)\cup X\cup J\cup C(B)$.
We divide candidates in $C(R)\cup X$ into~$\kappa'+1$ pairwise disjoint subsets $X_1$, $X_2$, $\dots$, $X_{\kappa'+1}$ so that each subset consists of exactly~$d+1$ candidates.

We create the following votes.
\begin{itemize}
    \item For each $i\in [\kappa'+1]$, we create one multiset of $\kappa$ votes each of which approves exactly the~$d+1$ candidates in~$X_i$.

    \item For each blue vertex $b\in B$, we create one vote~$v(b)=\{p\}\cup C(\neighbor{G}{b})\cup C(b)$ approving~$p$, all candidates corresponding to neighbors of~$b$ in~$G$, and candidates created for~$b$. Observe that every~$v(b)$ approves exactly $d+1$ candidates. In the following discussion, for a given $B'\subseteq B$, we use~$V(B')$ to denote the set of votes corresponding to~$B'$, i.e., $V(B')=\{v(b) \setmid b\in B'\}$.
\end{itemize}
Let~$V$ be the multiset consisting of the above created $\kappa \cdot (\kappa'+1)+\abs{B}$ votes.
Finally, we set $\ell=\abs{B}-\kappa$ and $k=\abs{R}$.
The instance of {\probb{DCDV}{$\varphi$}} is $((C, V), k, J, \ell)$.
In the election~$(C, V)$,
every candidate~$c(r)$ where $r\in R$ has AV score $\kappa+\degree{G}{r}$, the distinguished candidate~$p$ has AV score~$\abs{B}$, and every other candidate has AV score at most~$\kappa$. Notice that as $\abs{B}\geq \kappa+d'$,~$p$ has at least the same AV score than that of every candidate~$c(r)\in C(R)$, and hence~$p$ is contained in at least one winning $k$-committee of~$\varphi$ at $(C, V)$.
 In the following, we prove the correctness of the reduction. 

$(\Rightarrow)$ Assume that there exists~$B'\subseteq B$ such that $\abs{B'}= \kappa$ 
and~$B'$ dominates~$R$. Let $\overline{B}=B\setminus B'$, and let $E=(C, V\setminus V(\overline{B}))$. In the election~$E$, every candidate in~$X$ has the same AV score~$\kappa$ as in $(C, V)$, and~$p$ has AV score~$\kappa$. As~$B'$ dominates~$R$, by the construction of votes, for every $r\in R$, there exists at least one vote in $V(B')=V(B)\setminus V(\overline{B})$ which approves the candidate $c(r)$. Therefore, every $c(r)\in C(R)$ where $r\in R$ has AV score at least $\kappa+1$ in~$E$. Then, as $\abs{C(R)}=k$, we know that~$p$ is not contained in any winning $k$-committee of~$\varphi$ at~$E$.

$(\Leftarrow)$ Assume that there exists $V'\subseteq V$ of at most~$\ell$ votes so that~$p$ is not contained in any winning $k$-committee of~$\varphi$ at the election $E=(C, V\setminus V')$. We may assume that~$V'$ contains only votes approving the distinguished candidate~$p$, i.e., $V'\subseteq V(B)$. The reason is that if~$V'$ contained a vote disapproving~$p$, then removing this vote from~$V'$ into $V\setminus V'$ only increases the AV scores of other candidates without changing the AV score of~$p$; and hence~$p$ still remains excluded from any winning $k$-committee of~$\varphi$ at the resulting election.
Without loss of generality, let $V'=V(B')$ for some $B'\subseteq B$, and let $B''=B\setminus B'$. Besides, let $\abs{B''}=\kappa+z$ for some~$z\geq 0$.  Then, the AV score of~$p$ in~$E$ is $\kappa+z$. Note that in the election~$E$ the AV score of every candidate in~$X$ remains the same as in~$(C,V)$, and every candidate in~$C(B)$ has AV score at most one. As~$p$ is not contained in any winning $k$-committee of~$\varphi$ at~$E$, we know that every candidate $c(r)\in C(R)$ where $r\in R$ in~$E$ has AV score at least $\kappa+z+1$. By the construction of the votes, this indicates that every $r\in R$ is dominated by at least~$z+1$ vertices in~$B''$. Then, after removing any arbitrary~$z$ vertices from~$B''$, we obtain a $\kappa$-subset of~$B$ that dominates~$R$.
\end{proof}

For the two natural parameters~$k$ and~$\ell$, we have the results summarized in the following two theorems.

\begin{theorem}
\label{thm-dcdv-av-wah-k-ell}
For every {\abmv} AV-uniform rule~$\varphi$,  {\probb{DCDV}{$\varphi$}} is {\wah} with respect to~$k+\ell$. This holds even if there is only one distinguished candidate.
\end{theorem}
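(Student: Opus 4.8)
The plan is to reduce from \prob{Clique} restricted to regular graphs, which the excerpt records as \wac{} with respect to the solution size~$\kappa$, adapting the strategy of Theorem~\ref{thm-dcav-many-rules-wbh-k-ell} to the deletion setting. The first move is to make a single reduction cover \emph{all} AV-uniform rules by producing an instance in which every vote approves the same number of candidates. Since deleting whole votes never changes the common approval size of the surviving votes, every subelection reachable by deleting votes is again uniform, so for each AV-uniform~$\varphi$ the rule coincides with AV on all of these subelections; it therefore suffices to argue for AV. There, winning $k$-committees are exactly the $k$-subsets of candidates with the largest total approval score, and in particular~$p$ lies outside every winning $k$-committee if and only if at least~$k$ candidates have approval score strictly larger than that of~$p$.

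Given a $d$-regular graph $G=(N,A)$ and a target~$\kappa$, I would create one candidate~$c(v)$ per vertex together with the distinguished candidate $p$, set $J=\{p\}$, $k=\kappa$, and $\ell=\binom{\kappa}{2}$, and design the (uniform) votes so that (i)~$p$ starts in a winning $k$-committee and (ii)~the only way to push~$p$ out within the deletion budget is to \emph{preserve} a block of~$\kappa$ high-scoring vertex-candidates whose pairwise links account for exactly $\binom{\kappa}{2}$ deleted votes. The crucial point, and the reason deletion behaves differently from the addition argument of Theorem~\ref{thm-dcav-many-rules-wbh-k-ell}, is that deleting votes can only \emph{lower} scores; hence a successful control action must decrease the score of~$p$ while keeping~$\kappa$ candidates above it. The gadget must therefore be arranged so that each unit of decrease of~$p$ is ``charged'' to an edge of~$G$ among the surviving candidates, so that decreasing~$p$ enough to fall below~$\kappa$ survivors consumes at least $\binom{\kappa}{2}$ deletions, with equality forcing the survivors to be pairwise adjacent.

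For the forward direction I would take a clique~$K$ of size~$\kappa$, delete the $\binom{\kappa}{2}$ votes associated with its internal edges, and verify that this drops~$p$ just below the~$\kappa$ candidates of~$K$ while leaving all of them strictly above~$p$, so that $p$ is excluded from every winning $k$-committee. For the converse I would run a double-counting argument in the spirit of the edge-counting in Theorem~\ref{thm-dcav-many-rules-wbh-k-ell}: if deleting at most $\binom{\kappa}{2}$ votes leaves~$\kappa$ candidates strictly above the (necessarily decreased) score of~$p$, then each of these~$\kappa$ candidates must be incident to enough deleted votes, and summing these incidences against the budget $\binom{\kappa}{2}$ forces the~$\kappa$ candidates to realize a clique. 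I expect the main obstacle to be engineering this ``charging'' so that no cheaper, unstructured way of excluding~$p$ exists — in particular ruling out solutions that lower~$p$ by deleting a handful of votes attached to candidates one does not care about — while simultaneously keeping all votes of one common size so that AV-uniformity can be invoked; calibrating the base scores of~$p$ and of the vertex-candidates so that~$p$ sits exactly at the boundary of the winning zone is the delicate step on which both directions of the correctness proof hinge.
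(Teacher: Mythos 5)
Your plan is essentially the paper's own proof: it reduces from {\prob{Clique}} on regular graphs with $k=\kappa$ and $\ell=\binom{\kappa}{2}$, uses the all-votes-of-equal-size trick to transfer the argument from AV to every AV-uniform rule, and closes the converse direction with the same incidence/double-counting argument. The gadget you left open is realized in the paper by one vote per edge $e=\{u,u'\}$ approving $p$, a per-edge padding candidate $c(e)$, and every vertex-candidate except $u$ and $u'$ (so that all votes have the same size), plus $d+2-\kappa$ filler votes approving all vertex-candidates; then $p$ starts $\kappa-2$ ahead of each vertex-candidate, each deleted edge-vote raises exactly its two endpoints by one relative to $p$, and pushing $\kappa$ candidates strictly above $p$ within the budget $\binom{\kappa}{2}$ forces those candidates to form a clique.
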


\begin{proof}
Let~$\varphi$ be an AV-uniform {\abmv} rule.
    We prove Theorem~\ref{thm-dcdv-av-wah-k-ell} via a reduction from {\prob{Clique}} restricted to regular graphs. Let $(G, \kappa)$ be an instance of {\prob{Clique}}, where $G=(\vset, \eset)$ is a graph where the degree of every vertex is~$d$. Without loss of generality, we assume that $\abs{\eset} \geq d \geq \kappa\geq 3$. We create the following candidates. First, for each vertex of~$G$, we create one vertex denoted by the same symbol for simplicity. Then, for each edge $\edee\in \eset$, we create one candidate~$c(\edee)$. Besides, we create a distinguished candidate~$p$. Let~$C=\vset\cup \{c(\edee) \setmid \edee\in \eset\}\cup \{p\}$, and let $J=\{p\}$. We create the following votes.
    \begin{itemize}
        \item For each edge $\edee=\{\vere, \vere'\}$, we create one vote~$v(\edee)=(\vset\cup \{p, c(\edee)\})\setminus \{\vere, \vere'\}$ approving all candidates from~$\vset\cup \{p, c(\edee)\}$ except~$\vere$ and~$\vere'$.
        \item In addition, we create $d+2-\kappa$ votes each approving all candidates in~$\vset$.
    \end{itemize}
    Let~$V$ be the multiset of the above created votes. Obviously, every vote approves exactly~$\abs{\vset}$ candidates. Finally, let $k=\kappa$ and let $\ell=\frac{\kappa \cdot (\kappa-1)}{2}$.
    The instance of {\probb{DCDV}{$\varphi$}} is $((C, V), k, J, \ell)$, which clearly can be constructed in polynomial time. It is easy to verify that with respect to~$V$, every candidate from~$\vset$ receives exactly $(\abs{\eset}-d)+(d+2-\kappa)=\abs{\eset}+2-\kappa$ approvals,~$p$ receives~$\abs{\eset}$ approvals, and every other candidate receives one approval. As $\kappa\geq 3$, we know that~$p$ is contained in all winning $k$-committees of $(C, V)$.
    We prove the correctness of the reduction as follows.

    $(\Rightarrow)$ Assume that~$G$ contains a clique~$K$ of~$\kappa$ vertices. Let~$V'$ be the set of the $\frac{\kappa \cdot (\kappa-1)}{2}$ votes corresponding to the edges among vertices in~$K$. Let~$E=(C, V\setminus V')$. By the construction of the votes and the fact that~$K$ is a clique, we know that every candidate from~$K$ receives $(\abs{\eset}+2-\kappa)-(\frac{\kappa \cdot (\kappa-1)}{2}-(\kappa-1))=\abs{\eset}-\frac{\kappa \cdot (\kappa-1)}{2}+1$ approvals, the distinguished candidate~$p$ receives~$\abs{\eset}-\frac{\kappa \cdot (\kappa-1)}{2}$ approvals. As $k=\kappa=\abs{K}$, we know that~$p$ cannot be in any winning $k$-committees of~$\varphi$ at~$E$.

    $(\Leftarrow)$ Assume that there is a submultiset $V'\subseteq V$ of~$\ell'$ votes so that $\ell'\leq \ell$ and~$p$ is not contained in any winning $k$-committee of~$\varphi$ at $E=(C, V\setminus V')$. This implies that there are at least~$k$ candidates from~$C\setminus \{p\}$ each receiving more approvals than that of~$p$ in~$E$. Observe that these candidates must be from $\vset$. Without loss of generality, let~$B$ denote the set of candidates from~$C\setminus \{p\}$ each receiving more approvals than~$p$ in~$E$. As discussed above, $B\subseteq \abs{\vset}$ and $\abs{B}\geq k$. We also observe that if the {\probb{DCDV}{$\varphi$}} instance is a {\yesins}, it admits a feasible solution where all deleted votes approve~$p$, i.e., all votes in the feasible solution are from those corresponding to edges of~$G$. By this observation, we may assume that~$p$ receives $\abs{\eset}-\ell'$ approvals in~$E$. As every candidate~$c\in B$ receives $\abs{\eset}+2-\kappa$ approvals from~$V$, and the number of approvals of~$c$ received from $V\setminus V'$ drops to a value larger than $\abs{\eset}-\ell'$, the total number of approvals of candidates in~$B$ received from~$V'$ can be at most $\abs{B}\cdot ((\abs{\eset}+2-\kappa)-(\abs{\eset}-\ell'+1))=\abs{B}\cdot (\ell'-\kappa+1 )$. By the construction of the votes, every vote from~$V'$ approves at least $\abs{B}-2$ candidates in~$B$. As a consequence, it holds that
    \begin{equation}
        \label{eq-aaa}
        \ell' \cdot (\abs{B}-2) \leq \abs{B} \cdot (\ell'-\kappa+1).
    \end{equation}
    By the fact that $\abs{B}\geq \kappa$ and $\ell'\leq \ell=\frac{\kappa \cdot (\kappa-1)}{2}$, one can check easily by elementary calculation that the only possibility so that Inequality~\ref{eq-aaa} holds is when $\abs{B}=\kappa$ and $\ell'=\ell$. In fact, in this case it holds that $\ell' \cdot (\abs{B}-2) \leq \abs{B} \cdot (\ell'-\kappa+1)$. Moreover, every vote in~$V'$ approves exactly $\kappa-2$ candidates from~$B$. This also implies  that~$B$ is the set of vertices spanned by edges corresponding to~$V'$. Then, given $\ell'=\ell=\frac{\kappa \cdot (\kappa-1)}{2}$, it follows that~$B$ is a clique of~$k$ vertices in~$G$.
\end{proof}

Theorems~\ref{thm-dcdv-save-wbh-remaining-votes}--\ref{thm-dcdv-av-wah-k-ell} imply that {\prob{DCDV}} for AV, SAV, and NSAV are {\wbh} with respect to the number of votes not deleted, and are {\wah} when jointly parameterized by the number of deleted votes and the size of the winning committee.

{
For AV, we further show that with respect to the dual parameter $m-k$ of $k$, i.e., the number of candidates not in a desired winning committee, {\prob{DCDV}} is {\paranph}.

\begin{theorem}
\label{thm-dcdv-av-wbh-remaining-votes}
{\probb{DCDV}{{\memph{AV}}}} is {\paranph} when parameterized by $m-k$.
\end{theorem}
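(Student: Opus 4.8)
The plan is to reduce from \prob{Vertex Cover} on general graphs, which is \npc, producing instances in which $m-k=1$; since this is a constant, the resulting \nphns\ immediately yields \paranph\ with respect to $m-k$. The reduction rests on a simple characterization of AV winners when $k=m-1$: because the AV score of a committee is the sum of the scores of its members, a winning $(m-1)$-committee is obtained by discarding exactly one candidate of minimum AV score. Consequently $p$ belongs to \emph{no} winning $k$-committee if and only if $p$ is the \emph{unique strict} minimizer of the AV score, i.e.\ every candidate other than $p$ has strictly larger AV score than~$p$. So the destructive goal becomes: delete at most $\ell$ votes so that every remaining candidate strictly beats~$p$.

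Given a \prob{Vertex Cover} instance $(G,\kappa)$ with $G=(\vset,\eset)$, I would create one candidate $c_e$ for every edge $e\in\eset$ together with the single distinguished candidate $p$, so that $C=\{p\}\cup\{c_e\setmid e\in\eset\}$, $J=\{p\}$, $k=\abs{\eset}=m-1$, and $\ell=\kappa$. The votes come in two groups. For each vertex $\vere\in\vset$ I add a \emph{vertex vote} $v(\vere)=\{p\}\cup\{c_e\setmid \vere\notin e\}$, approving $p$ and every edge-candidate except those incident to~$\vere$. For each edge $e\in\eset$ I add two \emph{padding votes} approving only~$c_e$. A direct count then gives AV score $\abs{\vset}$ for $p$ (approved by all $\abs{\vset}$ vertex votes) and AV score $(\abs{\vset}-2)+2=\abs{\vset}$ for each $c_e$; the padding votes exist precisely to equalize all initial scores to $\abs{\vset}$.

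For correctness I would observe that deleting the vertex votes of a set $S\subseteq\vset$ lowers the score of $p$ by $\abs{S}$ and the score of $c_e$ by $\abs{S}-\mathrm{cov}_S(e)$, where $\mathrm{cov}_S(e)$ is the number of endpoints of $e$ lying in $S$; hence after the deletion the score of $c_e$ minus the score of $p$ equals exactly $\mathrm{cov}_S(e)$. Thus every edge-candidate strictly beats $p$ if and only if $\mathrm{cov}_S(e)\ge 1$ for all $e$, i.e.\ $S$ is a vertex cover. The forward direction takes a vertex cover $S$ with $\abs{S}\le\kappa$, deletes $\{v(\vere)\setmid \vere\in S\}$, and makes $p$ the strict minimizer, hence excluded from every winning $k$-committee. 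For the converse I first normalize any solution to delete only vertex votes: deleting a padding vote merely lowers some $c_e$ while leaving $p$ untouched, which can never help exclude $p$, so every padding vote may be returned to the election without destroying the solution; the surviving deleted vertex votes then induce a vertex cover of size at most $\ell=\kappa$.

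The two delicate points, and where I expect to spend the most care, are: (i) the exact equalization of the initial scores, since the entire argument hinges on the clean identity $\bigl(\text{score of }c_e\bigr)-\bigl(\text{score of }p\bigr)=\mathrm{cov}_S(e)$, so that the combinatorial threshold ``covered at least once'' coincides with ``strictly beats $p$''; and (ii) the tie-handling in the winner characterization, where I must verify that $p$ is excluded from \emph{all} winning committees exactly when it is the \emph{strict} unique minimizer, and justify the ``delete only vertex votes'' normalization rigorously in the backward direction. Everything else is a routine polynomial-time construction and counting argument.
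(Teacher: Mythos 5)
Your proof is correct and takes a genuinely different route from the paper's. The paper also works at the constant parameter value $m-k=1$, but it reduces from {\prob{RBDS}} rather than {\prob{Vertex Cover}}: its candidates are the red vertices plus $p$ with $k=\abs{R}$ and $\ell=\abs{B}-\kappa$, its votes are $\kappa$ copies of the ballot $R$ together with one vote $\{p\}\cup \neighbor{G}{b}$ per blue vertex $b$, and a solution keeps exactly the votes of a dominating $\kappa$-subset of $B$, so that every red candidate collects at least $\kappa+1$ approvals while $p$ keeps exactly $\kappa$. Its backward direction uses the same two ingredients as yours: a normalization (there, one may assume only votes approving $p$ are deleted; in your setting, only vertex votes) and a counting argument with a slack parameter $t$. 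What you do differently is (i) reduce from {\prob{Vertex Cover}}, encoding edges as candidates and vertices as votes, (ii) equalize all initial AV scores with padding votes so that the clean identity $\mathrm{sc}(c_e)-\mathrm{sc}(p)=\mathrm{cov}_S(e)$ holds after deleting the votes of $S$, and (iii) state explicitly the winner characterization for $k=m-1$ (namely, $p$ avoids all winning $(m-1)$-committees if and only if $p$ is the unique strict minimizer of the AV score), which the paper uses only implicitly. Your approach buys a more elementary, self-contained argument whose two directions follow almost immediately from the score identity; the paper's approach buys uniformity, since essentially the same RBDS gadget underlies its neighboring results (e.g., Theorem~\ref{thm-dcdv-save-wbh-remaining-votes}). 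Both are valid: {\nphns} at a fixed value of $m-k$ is exactly what the claimed {\paranph} classification requires.
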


\begin{proof}
We prove the hardness of {\probb{DCDV}{{\memph{AV}}}}  for the case where $m-k=1$.  Let $(G, \kappa)$ be an instance of {\prob{RBDS}} where $G=(R\muplus B, \eset)$ is a bipartite graph.
We create an instance {\probb{DCDV}{AV}} as follows. Let $C=R\cup \{p\}$, $J=\{p\}$, $k=\abs{R}$, and $\ell=\abs{B}-\kappa$. We create the following votes:
\begin{itemize}
    \item a multiset~$\widetilde{V}$ of~$\kappa$ votes each of which approves exactly the candidates in~$R$;
    \item for each blue vertex $b\in B$, one vote~$v(b)$ which approves~$p$ and all candidates corresponding to neighbors of~$b$ in~$G$, i.e., $v(b)=\{p\}\cup \neighbor{G}{b}$.
\end{itemize}
In the following analysis, for a given $B'\subseteq B$, we use~$V(B')$ to denote the multiset of votes corresponding to~$B'$, i.e., $V(B')=\{v(b) \setmid b\in B'\}$. Let $V=\widetilde{V}\cup V(B)$. In total, we have $\abs{B}+\kappa$ votes.
The instance of {\probb{DCDV}{AV}} is $((C, V), k, J, \ell)$.
Now we prove the correctness of the reduction.

$(\Rightarrow)$ Suppose that there exists $B'\subseteq B$ of~$\kappa$ vertices dominating~$R$. Let $E=(C, \widetilde{V}\cup V(B'))$. By the construction of votes, for every $r\in R$, there exists at least one vote in~$V(B')$ which approves~$r$. Together with the~$\kappa$ votes in~$\widetilde{V}$, there are in total at least~$\kappa+1$ votes in the election~$E$ which approve~$r$. The distinguished candidate~$p$ is only approved by votes in~$V(B')$. From $\abs{V(B')}=\abs{B'}= \kappa$ and $k=\abs{R}$, we know~$R$ is the unique winning $k$-committee of AV at~$E$. As $p\not\in R$, the instance of {\probb{DCDV}{AV}} is a {\yesins}.

$(\Leftarrow)$ Suppose that there exists $V'\subseteq V$ such that $\abs{V'}\leq \ell=\abs{B}-\kappa$ and~$p$ is not contained in any winning $k$-committee of AV at $E=(C, V\setminus V')$. We may assume that $\widetilde{V}$ and~$V'$  are disjoint, because it is easy to verify that~$p$ is also not contained in any winning $k$-committee of AV at $(C, V\setminus V'\cup \widetilde{V})$. So, let $V'=V(B')$ for some $B'\subseteq B$. Assume, without loss of generality, that~$\abs{V'}=\ell-t$ for some nonnegative integer~$t$. Then, there are exactly $\kappa+t$ votes approving~$p$ in~$E$, and hence the AV score of every other candidate in~$E$ must be at least $\kappa+t+1$. This indicates that for every $r\in R$ there are at least $t+1$ votes in $V(B)\setminus V'$ approving~$r$, or equivalently, every $r\in R$ has at least $t+1$ neighbors from $B\setminus B'$ in~$G$. Let~$B''$ be the subset obtained from $B\setminus B'$ by removing any arbitrary~$t$ vertices. In light of the above discussion, we have that $\abs{B''}=\kappa$ and, more importantly, for every $r\in R$, there exists at least $(t+1)-t=1$ vertex in~$B''$ which dominates~$r$. This means that the {\prob{RBDS}} instance is  a {\yesins}.

To prove the hardness of {\probb{DCDV}{{\memph{AV}}}}  for the case where $m-k=\bigo{1}$, we adapt the above reduction by adding~$\bigo{1}$ candidates who are not approved by any votes.
\end{proof}
}

For sequential $\omega$-Thiele's rules, we have a {\wbhns} result with respect to the solution size.

\begin{theorem}
\label{thm-dcdv-seq-wbh}
For every $\omega$-Thiele's rule~$\varphi$ such that $\omega(1)>0$, {\probb{DCDV}{\memph{seq}$\varphi$}} is {\wbh} with respect to~$\ell$, the number of deleted votes.
\end{theorem}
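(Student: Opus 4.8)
The plan is to reduce from {\prob{RBDS}}, which is {\wbc} with respect to~$\kappa$, and to set the solution size to $\ell=\kappa$ so that the construction is a parameterized reduction for the parameter~$\ell$. Let $(G,\kappa)$ with $G=(R\cup B,\eset)$ be an {\prob{RBDS}} instance; I would assume $\abs{B}\ge\kappa$ and that~$G$ has no isolated vertices, which does not affect the {\wbhns}. I would create one candidate~$c(r)$ for every $r\in R$, one distinguished candidate~$p$, and set $C=\{p\}\cup\{c(r):r\in R\}$, $J=\{p\}$, $k=\abs{R}$, $\ell=\kappa$. The votes would be: for every $b\in B$ one \emph{blue vote} $v(b)=\{p\}\cup\{c(r):r\in R\setminus\neighbor{G}{b}\}$ approving~$p$ and all candidates of \emph{non}-neighbours of~$b$; and for every $r\in R$ a block of $\degree{G}{r}$ private \emph{padding votes} approving only~$c(r)$. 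The tie-breaking order would rank~$p$ first, so that ties are resolved against excluding~$p$. A direct count shows that in the undeleted election every candidate, including~$p$, has AV score exactly~$\abs{B}$; hence at the first step of seq$\varphi$ all margin contributions equal $\abs{B}\cdot\omega(1)$, and since~$p$ is first in the order, it is selected first and thus lies in the winning $k$-committee before any deletion. This makes the instance non-trivial, as required for destructive control.

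The engine of the reduction is a cancellation identity for margin contributions. While $p\notin w$, every blue vote approving both~$p$ and some~$c(r)$ contributes the \emph{same} amount $\Delta_\omega(\abs{v(b)\cap w})$ to the margin of~$p$ and to that of~$c(r)$, since neither candidate is in~$w$; these terms cancel, leaving
\[
\margin{\varphi}{c(r)}{w}{E}-\margin{\varphi}{p}{w}{E}=\degree{G}{r}\cdot\omega(1)-\sum_{\substack{b\in B\setminus B',\ r\in\neighbor{G}{b}}}\Delta_\omega(\abs{v(b)\cap w}),
\]
where~$B'$ is the set of deleted blue vertices and the first term arises from the $\degree{G}{r}$ padding votes of~$c(r)$, whose intersection with~$w$ is always empty. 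For the forward direction I would take a dominating set $B'$ with $\abs{B'}=\kappa$, delete exactly $\{v(b):b\in B'\}$, and use the identity to show that at every step each remaining~$c(r)$ has margin at least that of~$p$, with a strict surplus contributed by the neighbour of~$r$ lying in~$B'$; hence~$p$ is never chosen and is excluded. For the backward direction I would argue that a solution may be assumed to delete only blue votes—deleting a padding vote only lowers some~$c(r)$ and can only help~$p$ survive—and then read off from the identity that excluding~$p$ forces every~$c(r)$ to overtake~$p$, which forces $\abs{\neighbor{G}{r}\cap B'}\ge1$ for every $r\in R$, i.e.\ that the deleted blue vertices dominate~$R$.

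The step I expect to be the main obstacle is controlling $\sum_{b}\Delta_\omega(\abs{v(b)\cap w})$ along the sequential process for an \emph{arbitrary} admissible~$\omega$. When the increments $\Delta_\omega(i)$ are non-increasing—which covers seqAV, seqPAV and seqABCCV, the rules actually targeted here—each summand is at most $\Delta_\omega(0)=\omega(1)$, the whole sum is at most $(\degree{G}{r}-\abs{\neighbor{G}{r}\cap B'})\cdot\omega(1)$, and the identity yields a clean threshold: a strict advantage for~$c(r)$ exactly when~$r$ is dominated. The difficulty is that the reduction is forced to use dense votes (each blue vertex must be encoded by a single deletable vote, otherwise~$\ell$ is no longer bounded by a function of~$\kappa$), so $\abs{v(b)\cap w}$ can be large, and for a general~$\omega$ the increments may \emph{grow}, invalidating the term-by-term bound. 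Resolving this requires a finer argument that keeps domination the decisive condition without letting the growing increments either artificially include~$p$ (breaking the forward direction) or be swamped by enlarged padding (breaking the backward direction); reconciling these two pressures simultaneously is the delicate part of the proof, and is where I would concentrate the technical effort.
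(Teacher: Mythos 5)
There is a genuine gap here---in fact two. The one you flag (increments $\Delta_{\omega}(i)$ exceeding $\omega(1)$) is indeed fatal for the theorem as stated, since the claim covers \emph{every} $\omega$ with $\omega(1)>0$: for, e.g., $\omega(i)=2^i-1$ your forward direction collapses, because the blue votes approve $p$ together with many red candidates, so once a few $c(r)$'s enter~$w$ the terms $\Delta_{\omega}(\abs{v(b)\cap w})$ inflate $p$'s margin and $p$ gets selected even after deleting the votes of a dominating set. But there is a second gap you do not acknowledge: your backward direction already fails for seqPAV and seqABCCV, the very rules you claim are covered. Your ``strict advantage for $c(r)$ exactly when $r$ is dominated'' holds only when all increments equal $\omega(1)$ (i.e., seqAV); once $\Delta_{\omega}(1)<\omega(1)$, a red candidate can overtake $p$ \emph{without} being dominated, because the shared blue votes stop cancelling at full weight after they intersect~$w$. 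Concretely, take $R=\{r_1,r_2\}$, $B=\{b_1,b_2\}$, edges $\edge{b_1}{r_1}$ and $\edge{b_2}{r_2}$, and $\kappa=1$: a {\noins} of {\prob{RBDS}}. Your reduction produces $v(b_1)=\{p,c(r_2)\}$, $v(b_2)=\{p,c(r_1)\}$, one padding vote $\{c(r_1)\}$, one padding vote $\{c(r_2)\}$, with $k=2$ and $\ell=1$. Delete $v(b_1)$: in the first iteration $c(r_1)$ has margin $2\omega(1)$ against $\omega(1)$ for both $p$ and $c(r_2)$, so $c(r_1)$ is chosen; in the second iteration $p$'s margin is $\Delta_{\omega}(1)$ (from $v(b_2)$) while $c(r_2)$'s is $\omega(1)$, so for seqPAV ($\Delta_{\omega}(1)=\frac{1}{2}$) and seqABCCV ($\Delta_{\omega}(1)=0$) the candidate $c(r_2)$ is chosen and $p$ is excluded. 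A one-vote deletion thus succeeds on a {\noins}, so the reduction is unsound precisely for your target rules.

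The paper's proof avoids both problems with a structurally different construction, and it is instructive to see why. It does not attempt $\abs{J}=1$ (the theorem does not claim this): it sets $J=R$, creates a disjoint padding set $X$ of $\abs{R}$ candidates each supported by $d$ private votes (after first normalizing all red degrees to a common value~$d$), sets $k=\abs{R}$ and $\ell=\kappa$, and lets each blue vote approve exactly $\neighbor{G}{b}$---only red candidates, never the padding candidates and never any extra candidate~$p$. Consequently, as long as the committee built so far lies inside~$X$, every relevant margin is an exact multiple of $\Delta_{\omega}(0)=\omega(1)$: padding candidates keep margin $d\cdot\omega(1)$, and each red candidate's margin is its number of surviving approving votes times $\omega(1)$. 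The higher increments $\Delta_{\omega}(i)$ for $i\geq 1$ never enter any comparison, which is exactly why that argument is valid for every $\omega$ with $\omega(1)>0$ in both directions (the tie-breaking order puts $R$ before $X$, so in the backward direction an undominated red candidate, which retains all $d$ of its votes, would be selected first---forcing domination). The root of your difficulty is that your $p$ shares its supporting votes with the red candidates, so the cancellation identity you rely on degrades as the committee grows, in opposite directions depending on whether increments grow or shrink; since the counterexample above shows the reduction map itself is unsound, no finer analysis of the same construction can repair it---the vote structure has to be redesigned so that the distinguished candidates' supporters never intersect the committee being built, which is essentially the paper's disjoint-padding design.
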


\begin{proof}
We prove the theorem via a reduction from {\prob{RBDS}}. Let~$\varphi$ be an $\omega$-Thiele's rule such that $\omega(1)>0$. Let $(G, \kappa)$ be an instance of {\prob{RBDS}} where~$G$ is a bipartite graph with two disjoint independent sets~$R$ and~$B$. As in the proof of Theorem~\ref{thm-ccac-seqThiele-wbh}, we assume that~$G$ does not contain any isolated vertices, and every red vertex $r\in R$ has degree exactly~$d$ for some positive integer~$d$.
We construct an instance of {\probb{DCDV}{seq$\varphi$}} based on the instance of {\prob{RBDS}} as follows. For every $r\in R$, we create one candidate denoted by the same symbol for simplicity. In addition, we create a set~$X$ of~$\abs{R}$ candidates. Let $C=R\cup X$, let $J=R$, and let $k=\abs{R}$.
The linear order for  tie-breaking is $\overrightarrow{R}\rhd \overrightarrow{X}$. We create the following $\abs{B}+d\cdot \abs{R}$ votes:
\begin{itemize}
    \item for each $b\in B$, we create one vote $v(b)=\neighbor{G}{b}$ approving candidates corresponding to its neighbors in~$G$;
    \item for each $x\in X$, we create a multiset $V(x)$ of~$d$ votes each of which approves only the candidate~$x$.
\end{itemize}
In the following exposition, for a given $B'\subseteq B$, we use~$V(B')$ to denote the set of votes created for vertices in~$B'$, i.e., $V(B')=\{v(b) \setmid b\in B'\}$. In addition, let $V(X)=\bigcup_{x\in X}V(x)$, and let $V=V(B)\cup V(X)$.
Note that as every red vertex has exactly~$d$ neighbors in~$G$, every $r\in R$ is approved by exactly~$d$ votes in~$V$. By the definition of the tie-breaking order~$\rhd$, the first candidate added into the winning $k$-committee of seq$\varphi$ at~$(C, V)$ is from~$R$.
Finally, let $\ell=\kappa$, i.e., we are allowed to delete at most~$\kappa$ votes from~$V$.
The instance of {\probb{DCDV}{seq$\varphi$}} is $((C, V), k, J, \ell)$ which clearly can be created in polynomial time.
In the following, we show that the {\prob{RBDS}} instance is a {\yesins} if and only if the above constructed instance of {\probb{DCDV}{seq$\varphi$}} is a {\yesins}.

$(\Rightarrow)$ Assume that there exists $B'\subseteq B$ of~$\kappa$ vertices dominating~$R$. Let $E=(C, V\setminus V(B'))$. Let~$\w$ be the winning $k$-committee of seq$\varphi$ at~$E$. We show below that $\w=X$, implying that the instance of {\probb{DCDV}{seq$\varphi$}} is a {\yesins}. In fact, as~$B'$ dominates~$R$, for every $r\in R$ there exists at least one vote in~$V(B')$ which approves~$r$. Therefore, there are at most $d-1$ votes in $V\setminus V(B')$ approving~$r$. As every candidate $x\in X$ is approved by exactly~$d$ votes which approve only~$x$, $\abs{X}=\abs{R}=k$, and $\omega(1)>0$, we know that the winning $k$-committee of seq$\varphi$ at~$E$ is~$X$. This completes the proof for this direction.

$(\Leftarrow)$ Suppose that there exists $V'\subseteq V$ of at most~$\kappa$ votes so that none of~$J$ is contained in the winning $k$-committee of seq$\varphi$ at $E=(C, V\setminus V')$. Given $k=\abs{C}-\abs{R}$, we know that the winning $k$-committee of seq$\varphi$ at~$E$ is~$X$. As a consequence, we may assume that~$V'$ does not contain any vote in~$V(X)$. As a matter of fact, if~$V'$ contained a vote from~$V(x)$ for some $x\in X$, then moving this vote from~$V'$ into $V\setminus V'$ does not refrain~$X$ from being the winning $k$-committee of seq$\varphi$ at~$(C, V\setminus V')$. Then, without loss of generality, let $V'=V(B')$ for some $B'\subseteq B$. We show below that~$B'$ dominates~$R$. For the sake of contradiction, assume that there exists $r\in R$ not dominated by any vertex in~$B'$. This means that none of the~$d$ votes approving~$r$ is contained in~$V'$, and hence~$r$ has the maximum AV score~$d$ in~$E$. Then, by the definition of the tie-breaking order~$\rhd$, the first candidate from~$R$ in~$\rhd$ that has the maximum AV score $d$ in~$E$ is firstly added into the winning $k$-committee of seq$\varphi$ at~$E$. However, this contradicts that~$X$ is the winning $k$-committee of seq$\varphi$ at~$E$.  Therefore,~$B'$ dominates~$R$, implying that the {\prob{RBDS}} instance is a {\yesins}.
\end{proof}

\subsection{Adding Candidates}
{
Now we start our exploration on destructive control by adding candidates. First, we observe that {\probb{DCAC}{AV}} can be solved by utilizing the polynomial-time algorithm for the utility-based {\probb{DCAC}{AV}} presented in~\cite{DBLP:journals/jair/MeirPRZ08}: assign to each distinguished candidate utility zero, assign to each other candidate, be registered or unregistered, utility one, and set the threshold to be~$k$, the size of the winning committee. As our {\probb{DCAC}{AV}} problem is a special case of the utility-based version in~\cite{DBLP:journals/jair/MeirPRZ08}, the algorithm for {\probb{DCAC}{AV}} is actually quite trivial: Given an instance $((C\muplus D), k, J, \ell)$, let~$s$ be the maximum AV score of distinguished candidates, let $D'\subseteq D$ be the set of candidates having AV score at least $s+1$, and let $\ell'=\min\{\ell, \abs{D'}\}$. Then, the instance is a {\yesins} if and only if there are at least~$k-\ell'$ candidates in~$C\setminus J$ whose AV scores are at least~$s+1$.  The above algorithm can be also easily adapted to solve {\probb{DCAC}{seqAV}}.

However, for other concrete rules studied in the paper, we have intractability results. We first derive hardness for the two additive rules SAV and NSAV. We remark that the algorithm for {\probb{DCAC}{AV}} does not apply to SAV and NSAV because adding candidates may change the SAV and NSAV scores of candidates.
}

\begin{theorem}
\label{thm-dcac-sav-np-hard}
{\probb{DCAC}{\memph{SAV}}} is {\nph} even if there is only one distinguished candidate.
\end{theorem}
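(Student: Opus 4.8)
The plan is to reduce from \prob{RX3C}. Let $(\xs,\xc)$ be an instance with $\xs=\{\xse_1,\dots,\xse_{3\kappa}\}$ and $\xc=\{\xce_1,\dots,\xce_{3\kappa}\}$, where every element lies in exactly three sets; we may assume $\kappa\ge 2$, as \prob{RX3C} stays {\nph} for large $\kappa$. I would use two registered candidates, the distinguished candidate~$p$ and a \emph{threshold} candidate~$q$, and one unregistered candidate $d(\xce)$ for each $\xce\in\xc$. Thus $C=\{p,q\}$, $D=\{d(\xce)\setmid\xce\in\xc\}$, $J=\{p\}$, $k=1$, and $\ell=\kappa$. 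For the votes I would add $9\kappa+1$ votes approving only~$q$, and, for every element $\xse_i$, six votes each approving~$p$ together with the three candidates $d(\xce)$ with $\xse_i\in\xce$. Adding a subset $D'=\{d(\xce)\setmid\xce\in\xc'\}$ then corresponds to selecting $\xc'\subseteq\xc$, and the design guarantees that the sole effect of the added candidates is to enlarge the denominators of the $p$-approving votes: writing $x_i$ for the number of selected sets containing $\xse_i$, each of the six votes of $\xse_i$ now has size $1+x_i$, so the SAV score of~$p$ equals $6\sum_{i=1}^{3\kappa}\frac{1}{1+x_i}$, while the score of~$q$ stays $9\kappa+1$ because the $q$-votes are disjoint from every $d(\xce)$.

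The heart of the argument is a convexity estimate. When $\abs{\xc'}=\kappa$ we have $\sum_i x_i=3\kappa$ over $3\kappa$ terms, so by convexity of $t\mapsto\frac{1}{1+t}$ the sum $\sum_i\frac{1}{1+x_i}$ is minimized exactly when all $x_i=1$, i.e.\ precisely when $\xc'$ is an exact cover, giving $p$-score $9\kappa$; any other choice with $\abs{\xc'}=\kappa$ forces some $x_i=0$ and some $x_j\ge 2$, raising the score to at least $9\kappa+2$, and a short Jensen estimate shows that $\abs{\xc'}<\kappa$ likewise yields $p$-score strictly above $9\kappa+1$. Hence $p$'s score can be pushed below the threshold $9\kappa+1$ if and only if an exact cover exists, and in that case~$q$ strictly beats~$p$. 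I would then rule out the added candidates as spoilers: every vote approving a selected $d(\xce)$ has size at least two, so each $d(\xce)$ scores at most~$9$, whereas $p$'s score never drops below $9\kappa\ge 18$; thus~$q$ is the only candidate that can ever overtake~$p$. Both directions follow: an exact cover lets us add the corresponding $\kappa$ candidates and make~$q$ the unique winner, expelling~$p$ from the (unique) winning $1$-committee; conversely, if~$p$ can be excluded by adding at most $\kappa$ candidates, the candidate beating~$p$ must be~$q$, forcing $p$-score $=9\kappa$ and hence an exact cover.

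I expect the main obstacle to be the convexity step combined with making the threshold land cleanly between the exact-cover value and the next achievable value. The factor six in the vote multiplicities is chosen precisely so that the exact-cover score $9\kappa$ and the smallest non-exact score $9\kappa+2$ are integers straddling the integer threshold $9\kappa+1$; this lets me realize $q$'s score with plain single-candidate votes and avoids any fractional-threshold gadget. The rest is bookkeeping: verifying the integrality gap in the non-exact case, covering $\abs{\xc'}<\kappa$ through the Jensen lower bound, and invoking $\kappa\ge 2$ to ensure $9\kappa>9$ so that no $d(\xce)$ can beat~$p$. The analogous {\nphns} for NSAV would then follow from the same template, once the fixed disapproval penalties of NSAV are absorbed into the score accounting.
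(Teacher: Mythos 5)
Your proof is correct, and it runs on the same engine as the paper's: a reduction from \prob{RX3C} in which the added candidates enlarge the sizes of the votes approving~$p$, so that $p$'s SAV score becomes a fixed multiple of $\sum_i \frac{1}{1+x_i}$, and convexity of $t\mapsto\frac{1}{1+t}$ (plus the integrality gap of $\frac{1}{3}$ for the cheapest $(0,2)$-deviation from the all-ones profile) makes this sum minimal exactly when the selected subcollection is an exact cover. The difference is architectural. The paper keeps the $3\kappa$ element candidates registered, sets $k=3\kappa$, and pins each element candidate at the fractional score $\frac{3\kappa}{2}+\frac{1}{3\kappa}$ via a bloc of $\frac{9\kappa^2}{2}+1$ votes over~$\xs$ (which is why it must assume $\kappa$ even); $p$ is excluded iff all $3\kappa$ element candidates beat it, and the unregistered candidates are automatically harmless since their scores never exceed $\frac{3}{2}$. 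You instead set $k=1$, introduce one threshold candidate~$q$ with fixed score $9\kappa+1$, and choose multiplicity six precisely so that the three relevant values $9\kappa$ (exact cover), $9\kappa+1$ (threshold), and $9\kappa+2$ (cheapest non-exact configuration) are integers straddling one another; the price is the extra step ruling out the added candidates as spoilers (score at most~$9$ versus $p$'s score of at least $9\kappa$, using $\kappa\geq 2$), and a Jensen bound for $\abs{\xc'}<\kappa$ where the paper uses the simpler observation that each added candidate can cost~$p$ at most~$\frac{3}{2}$ points. Your variant also buys something the paper's does not: hardness already for committee size $k=1$, i.e., the problem is {\paranph} with respect to~$k$, whereas the paper's reduction needs $k=3\kappa$ to grow with the instance. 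One caution: your closing remark that NSAV follows by ``absorbing the disapproval penalties'' is optimistic as stated; the paper instead transfers the SAV reduction to NSAV by padding the election with $n\cdot m^2$ unapproved dummy registered candidates (Lemma~\ref{lem-relation-sav-nsav}), and with only two registered candidates your construction would need the same padding device.
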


\begin{proof}
We prove Theorem~\ref{thm-dcac-sav-np-hard} via a reduction from {\prob{RX3C}}. Let $(\xs, \xc)$ be an instance of {\prob{RX3C}}. Without loss of generality, assume that $\kappa\geq 2$. In addition, we assume that~$\kappa$ is even which does not change the {\nphns} of {\prob{RX3C}}.\footnote{If~$\kappa$ is odd, we create an equivalent instance by first adding a copy of~$\xs$ and~$\xc$ into the original instance, 
and then doubling~$\kappa$.} We create an instance of {\probb{DCAC}{SAV}} as follows. The set of registered candidates is $C=\xs\cup \{p\}$. Let $J=\{p\}$. Regarding the unregistered candidates, for every $\xce\in \xc$, we create one candidate~$c(\xce)$. In the following analysis, for a given $\xc'\subseteq \xc$, we use~$C(\xc')$ to denote the set of candidates created for elements of~$\xc'$, i.e., $C(\xc')=\{c(\xce) \setmid \xce\in \xc'\}$. Let $D=C(\xc)$. We create the following votes:
\begin{itemize}
\item a multiset of $\frac{9\kappa^2}{2}+1$ votes each of which approves exactly the~$3\kappa$ candidates in~$\xs$; (as~$\kappa$ is even, $\frac{9\kappa^2}{2}$ is an integer)
\item for each~$\xse\in \xs$, one vote $v(\xse)=\{p\}\cup \{c(\xce) \setmid \xse\in \xce\in \xc\}$ approving~$p$ and the three candidates created for $\xce\in \xc$ so that $\xse\in \xce$.
\end{itemize}
Let~$V$ be the set of the above $\frac{9\kappa^2}{2}+1+3\kappa$ votes. Let $k=3\kappa$, and let $\ell=\kappa$. In the election $(C, V)$, the SAV score of~$p$ is~$3\kappa$, and that of every $\xse\in \xs$ is $\frac{3\kappa}{2}+\frac{1}{3\kappa}$, implying that~$p$ is contained in all winning $k$-committees of  SAV at $(C, V)$. The instance of {\probb{DCAC}{SAV}} is $((C\cup D, V), k, J, \ell)$.
We prove the correctness of the reduction as follows.

$(\Rightarrow)$ Assume that~$\xc$ contains an exact set cover~$\xc'$ of~$\xs$. Let $E=(C\cup C(\xc'), V)$. By the construction of the votes, the SAV score of~$p$ in~$E$ is~$\frac{3\kappa}{2}$, the SAV score of every $\xse\in \xs$ is $\frac{3\kappa}{2}+\frac{1}{3\kappa}$, strictly larger than that of~$p$. As $\abs{\xs}=3\kappa=k$,~$p$ cannot in any winning $k$-committee of SAV at~$E$.

$(\Leftarrow)$ Assume that there exists $\xc'\subseteq \xc$ such that $\abs{\xc'}\leq \ell=\kappa$ and~$p$ is not contained in any winning $k$-committee of SAV at $E=(C\cup C(\xc'), V)$.
We claim first that~$\abs{\xc'}=\kappa$. For the sake of contradiction, assume that $\abs{\xc'}<\kappa$. Then, the SAV score of~$p$ in~$E$ is at least $3\kappa-\frac{3\abs{\xc'}}{2}\geq \frac{3\kappa}{2}+\frac{3}{2}$, which is larger than that of every candidate from $C\cup C(\xc')$ in~$E$ (every $\xse\in \xs$ has the same SAV score as in $(C, V)$, and every~$c(\xce)$ where $\xce\in \xc'$ has SAV score at most $\frac{3}{2}$). However, this contradicts that~$p$ is not contained in any winning $k$-committee of  SAV at~$E$. By this claim, we may assume now that $\abs{\xc'}=\kappa$. We show below that~$\xc'$ is an exact set cover of~$\xs$. To this end, for every $\xse\in \xs$, let $x_{\xse}=\abs{\{\xce\in \xc' \setmid \xse\in \xce\}}$ be the number of elements in~$\xc'$ containing~$\xse$. Then, the SAV score of~$p$ in the election~$E$ is
\begin{equation}
\label{eq-p-sav-score}
3\kappa-\sum_{\xse\in \xs}\left(1-\frac{1}{x_{\xse}+1}\right)=\sum_{\xse\in \xs}\frac{1}{x_{\xse}+1}.
\end{equation}
As $\abs{\xc'}=\kappa$ and every element of~$\xc'$ is a $3$-set, it holds that $\sum_{\xse\in \xs}x_{\xse}=3\kappa$. By an elementary calculation, we see that the value of~\eqref{eq-p-sav-score} achieves its minimum value $\frac{3\kappa}{2}$ only if $x_{\xse}=1$ for all $\xse\in \xs$. Otherwise, the value of~\eqref{eq-p-sav-score} is at least $\frac{3\kappa}{2}+\frac{1}{3}$ (under the fact that each~$x_a$ is a nonnegative integer) which,  provided $\kappa\geq 2$, is larger than $\frac{3\kappa}{2}+\frac{1}{3\kappa}$, the SAV score of every candidate from~$\xs$ in~$E$, contradicting that~$p$ is not contained in any  winning $k$-committee of SAV at~$E$. It is clear that~$\xc'$ is an exact set cover of~$\xs$ if and only if $x_{\xse}=1$ for all $\xse\in \xs$. Therefore, we can conclude now that~$\xc'$ is an exact set cover of~$\xs$.
\end{proof}

We can modify the reduction in the proof of Theorem~\ref{thm-dcac-sav-np-hard} to show the hardness for NSAV, based on a relation between SAV and NSAV studied in~\citep{DBLP:conf/atal/000120}.

\begin{lemma}[\cite{DBLP:conf/atal/000120}]
\label{lem-relation-sav-nsav}
Let $E=(C, V)$ be an election where $m=\abs{C}\geq 2$ and $n=\abs{V}$. Let~$B$ be a set of at least $n\cdot m^2$ candidates so that~$C$ and~$B$ are disjoint, and let $E'=(C\cup B, V)$ (so none of~$B$ is approved by any vote in~$V$). Then, for every two candidates~$c$ and~$c'$ in~$C$, it holds that ${\memph{\textsf{sc}}}_{\memph{SAV}}(\{c\}, E)>{\memph{\textsf{sc}}}_{\memph{SAV}}(\{c'\}, E)$ if and only if ${\memph{\textsf{sc}}}_{\memph{NSAV}}(\{c\}, E')>{\memph{\textsf{sc}}}_{\memph{NSAV}}(\{c'\}, E')$.
\end{lemma}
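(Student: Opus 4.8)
The plan is to rewrite the NSAV score of each candidate of $C$ in $E'$ as its SAV score in $E$ plus a small, explicitly controllable perturbation, and then to argue that the perturbation is too small to change the sign of any SAV comparison. First I would isolate a clean algebraic identity. Since no vote approves a candidate of $B$, every vote $v\in V$ is still a subset of $C$ in $E'$, so $\abs{v}$ is unchanged and the positive (approval) part of the NSAV score of any $c\in C$ in $E'$ is exactly $\score{SAV}{\{c\}}{E}=\sum_{v\ni c}1/\abs{v}$. Writing $M:=\abs{C\cup B}=m+\abs{B}$, the negative part is $\sum_{v\not\ni c}\frac{1}{M-\abs{v}}$. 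Introducing the $c$-independent constant $K:=\sum_{v\in V}\frac{1}{M-\abs{v}}$ and separating off the approved votes gives
\[ \score{NSAV}{\{c\}}{E'}=\score{SAV}{\{c\}}{E}-K+\sum_{v\ni c}\frac{1}{M-\abs{v}}. \]
Subtracting this identity for $c'$ from the one for $c$ cancels $K$ and yields
\[ \score{NSAV}{\{c\}}{E'}-\score{NSAV}{\{c'\}}{E'}=\Big(\score{SAV}{\{c\}}{E}-\score{SAV}{\{c'\}}{E}\Big)+\eta, \]
where the perturbation is $\eta:=\sum_{v\ni c}\frac{1}{M-\abs{v}}-\sum_{v\ni c'}\frac{1}{M-\abs{v}}$.

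Second I would bound $\eta$. Because $\abs{v}\le m$ for every vote, we have $M-\abs{v}\ge\abs{B}$, so each summand is at most $1/\abs{B}$; since at most $n$ votes contribute across the two sums, $\abs{\eta}\le n/\abs{B}\le n/(n\,m^2)=1/m^2$, where the last step is exactly where the hypothesis $\abs{B}\ge n\cdot m^2$ is used. Hence the NSAV difference equals the SAV difference up to an additive error of magnitude at most $1/m^2$, and in particular the sign of the NSAV difference agrees with that of the SAV difference whenever the latter exceeds the error in absolute value.

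Finally I would close the equivalence by a separation argument: if the two SAV scores differ, they differ by strictly more than $\abs{\eta}$, so the sign is inherited in both directions. This last step is where I expect the main obstacle. The SAV scores are sums of unit fractions $1/\abs{v}$ with $\abs{v}\in\{1,\dots,m\}$, hence lie in $\tfrac{1}{L}\mathbb{Z}$ for $L:=\mathrm{lcm}(1,\dots,m)$, so distinct SAV scores differ by at least $1/L$; the heart of the proof is to show that every nonzero gap that can genuinely occur between candidates of $C$ dominates the perturbation bound, so that $\eta$ can neither reverse nor create a strict inequality. The delicate regime is that of near-ties, where the SAV gap is comparable to $1/m^2$ and the coarse bound $1/L$ is useless: here one must argue carefully about the smallest positive difference of two such unit-fraction sums (and how the number of involved votes feeds back into the size requirement on $B$). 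I expect this calibration of the minimal occurring SAV gap against $\abs{\eta}$ to be the crux of the argument.
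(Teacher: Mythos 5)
There is nothing in the paper to compare your proposal against: Lemma~\ref{lem-relation-sav-nsav} is imported from~\cite{DBLP:conf/atal/000120} and used as a black box, so your argument has to stand on its own. Its first two steps are correct --- the identity $\textsf{sc}_{\mathrm{NSAV}}(\{c\},E')=\textsf{sc}_{\mathrm{SAV}}(\{c\},E)-K+\sum_{v\ni c}\frac{1}{M-\abs{v}}$ and the bound $\abs{\eta}\le n/\abs{B}\le 1/m^2$ --- and you located the crux correctly. But the separation claim you still owe is not merely delicate: it is false, and so is the lemma as transcribed. First, exact SAV ties break the right-to-left implication for \emph{every} admissible $B$. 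Take $C=\{c,c',a_1,a_2,a_3,a_4\}$ (so $m=6$) and the three votes $\{c,a_1,a_2,a_3\}$, $\{c,a_1,a_2,a_4\}$, $\{c',a_1\}$. Then $\textsf{sc}_{\mathrm{SAV}}(\{c\},E)=\textsf{sc}_{\mathrm{SAV}}(\{c'\},E)=1/2$, while $\eta=\frac{2}{M-4}-\frac{1}{M-2}=\frac{M}{(M-2)(M-4)}>0$, so $\textsf{sc}_{\mathrm{NSAV}}(\{c\},E')>\textsf{sc}_{\mathrm{NSAV}}(\{c'\},E')$ although $\textsf{sc}_{\mathrm{SAV}}(\{c\},E)>\textsf{sc}_{\mathrm{SAV}}(\{c'\},E)$ fails; this happens no matter how large $\abs{B}$ is. Your closing sentence ``the sign is inherited in both directions'' tacitly assumes that SAV ties become NSAV ties, and the perturbation destroys exactly that.

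Second, the near-tie regime you worried about genuinely occurs, and it kills even the left-to-right implication at the stated threshold $\abs{B}\ge n\cdot m^2$. Let $m=100$, let $c$ be approved by $37$ votes of size $91$ and $89$ votes of size $95$ (none containing $c'$), and let $c'$ be approved by $133$ votes of size $99$, each equal to $C\setminus\{c\}$. Then $\textsf{sc}_{\mathrm{SAV}}(\{c\},E)-\textsf{sc}_{\mathrm{SAV}}(\{c'\},E)=\frac{37}{91}+\frac{89}{95}-\frac{133}{99}=\frac{1}{855855}>0$, but with $n=259$ and $\abs{B}=nm^2=2590000$ (so $M=2590100$) one gets $\eta=\frac{37}{M-91}+\frac{89}{M-95}-\frac{133}{M-99}\approx-\frac{7}{M}\approx-2.7\times 10^{-6}$, which overwhelms the SAV gap of roughly $1.17\times 10^{-6}$ and reverses the strict comparison. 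So distinct sums of unit fractions with denominators in $[m]$ can indeed differ by amounts on the order of $1/\mathrm{lcm}(1,\dots,m)$, far below $1/m^2$, exactly as you feared. What your computation does establish is only the weaker, one-directional statement: if $\abs{B}>n\cdot\mathrm{lcm}(1,\dots,m)$, then a strict SAV inequality forces the same strict NSAV inequality, since any nonzero SAV gap is at least $1/\mathrm{lcm}(1,\dots,m)>\abs{\eta}$; but that needs exponentially many dummy candidates and still leaves the tie problem. Consequently no completion of your proof (or any proof) of the statement as written exists; a correct version must either weaken the conclusion to this one-directional refinement or restrict attention to elections, such as those built in the paper's reductions, in which all relevant SAV gaps are inverse-polynomially large --- there your perturbation argument does close.
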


By Lemma~\ref{lem-relation-sav-nsav}, we add in the instance of {\probb{DCAC}{SAV}} constructed in the proof of Theorem~\ref{thm-dcac-sav-np-hard} a considerably large number (e.g.,~$n\cdot m^2$) of dummy candidates in~$C$ not approved by any vote, so that the SAV ranking (ranking of candidates in a nonincreasing order of their SAV scores) of candidates are identical to the NSAV ranking of candidates in  $(C\cup D', V)$ for all $D'\subseteq D$. This way, winning $k$-committees of SAV at $(C\cup D', V)$ are identical to those of NSAV at the same election, for all $D'\subseteq D$. This ensures the correctness of the reduction for NSAV. The following result arrives.

\begin{corollary}
\label{cor-dcac-nsav-np-hard}
{\probb{DCAC}{\memph{NSAV}}} is {\nph} even if there is only one distinguished candidate.
\end{corollary}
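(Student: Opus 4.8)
The plan is to derive Corollary~\ref{cor-dcac-nsav-np-hard} directly from Theorem~\ref{thm-dcac-sav-np-hard} by invoking Lemma~\ref{lem-relation-sav-nsav} as a black box, exactly as the surrounding prose already hints. The overall strategy is to take the very same {\prob{RX3C}}-based reduction constructed in the proof of Theorem~\ref{thm-dcac-sav-np-hard}, and to pad the registered candidate set~$C$ with a large set~$B$ of additional \emph{dummy} candidates that are approved by no vote, so that the SAV-ranking of candidates and the NSAV-ranking of candidates coincide on every election arising in the reduction.

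Concretely, I would first recall the SAV instance $((C\cup D, V), k, J, \ell)$ from the proof of Theorem~\ref{thm-dcac-sav-np-hard}, with $m=\abs{C}$ and $n=\abs{V}$. I would then set $C^{\dagger}=C\cup B$ where~$B$ is a fresh set of at least $n\cdot m^2$ candidates disjoint from $C\cup D$, none of which is approved by any vote in~$V$, and leave $D$, $V$, $k$, $J$, $\ell$ untouched, so the new {\probb{DCAC}{NSAV}} instance is $((C^{\dagger}\cup D, V), k, J, \ell)$. The key observation is that for every $D'\subseteq D$, the election $(C^{\dagger}\cup D', V)$ satisfies the hypotheses of Lemma~\ref{lem-relation-sav-nsav} (taking the lemma's ``$C$'' to be $C\cup D'$ and its ``$B$'' to be our padding set, whose size dominates the relevant $n\cdot m^2$ bound). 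Hence for any two candidates~$c,c'$ the strict SAV-order between them in $(C\cup D', V)$ matches the strict NSAV-order between them in $(C^{\dagger}\cup D', V)$; since ties are decided identically, the two rules select exactly the same winning $k$-committees at $(C^{\dagger}\cup D', V)$ as SAV does at $(C\cup D', V)$ (the dummies in~$B$ never enter a winning committee because they have the strictly smallest score).

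From this equivalence the correctness transfer is immediate: for every choice of added unregistered candidates $D'\subseteq D$, the distinguished candidate~$p$ lies in some winning $k$-committee of NSAV at $(C^{\dagger}\cup D', V)$ if and only if~$p$ lies in some winning $k$-committee of SAV at $(C\cup D', V)$. Therefore the {\probb{DCAC}{NSAV}} instance is a \yesins{} exactly when the underlying {\probb{DCAC}{SAV}} instance is, which by Theorem~\ref{thm-dcac-sav-np-hard} happens exactly when the {\prob{RX3C}} instance is a \yesins. Since the padding adds only polynomially many candidates, the whole construction is still polynomial-time, and $\abs{J}=1$ is preserved.

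The main obstacle—and the one point deserving care rather than routine calculation—is verifying that the rank-preservation supplied by Lemma~\ref{lem-relation-sav-nsav} genuinely upgrades to \emph{committee}-level agreement between SAV and NSAV for an irresolute rule with linear tie-breaking, uniformly across all $D'\subseteq D$. The lemma as stated compares only \emph{single} candidates' scores and only strict inequalities, so I must argue that (i) strict score-order agreement plus identical tie-breaking forces identical top-$k$ selections, and (ii) the padding set~$B$ is large enough that the bound $n\cdot m^2$ is met for every relevant subelection $(C\cup D', V)$ simultaneously (the candidate count grows as $D'$ grows, so I would choose~$B$ to dominate the worst case $\abs{C}+\abs{D}$). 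Once these two uniformity points are pinned down, everything else is a direct inheritance from the SAV proof.
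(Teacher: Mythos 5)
Your proposal is correct and follows essentially the same route as the paper: the paper likewise takes the {\probb{DCAC}{SAV}} reduction of Theorem~\ref{thm-dcac-sav-np-hard}, pads the registered candidate set with roughly $n\cdot m^2$ dummy candidates approved by no vote, and invokes Lemma~\ref{lem-relation-sav-nsav} to conclude that SAV and NSAV produce identical winning $k$-committees in every subelection $(C\cup D', V)$, so correctness transfers verbatim. Your two ``uniformity'' concerns (sizing the padding against the worst case $\abs{C}+\abs{D}$, and lifting pairwise strict-order agreement to committee-level agreement for these additive rules) are exactly the details the paper leaves implicit, and your handling of them is sound.
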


Now, we move on to sequential Thiele's rules.

\begin{theorem}
\label{thm-dcac-thiele-rule}
For each $\omega$-Thiele's rule~$\varphi$ such that $\omega(2)<2\omega(1)$,
{\probb{DCAC}{seq$\varphi$}} is {\wah} when parameterized by $k+\ell$, the size of a winning committee plus the number of added candidates. Moreover, this holds even when there is only one distinguished candidate.
\end{theorem}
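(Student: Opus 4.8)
The plan is to reduce from \prob{Independent Set} restricted to regular graphs, which is \wah{} with respect to the solution size~$\kappa$ even under this restriction. Let $(G,\kappa)$ with $G=(\vset,\eset)$ a $d$-regular graph be the given instance; I first record that $\omega(2)<2\omega(1)$ forces $\omega(1)>0$ and, crucially, $\omega(1)-\Delta_\omega(1)=2\omega(1)-\omega(2)>0$. The construction uses a single registered candidate~$p$, which is the only distinguished candidate (so $C=\{p\}$ and $J=\{p\}$), and one unregistered candidate $c(\vere)$ for every $\vere\in\vset$, so $D=\{c(\vere)\setmid \vere\in\vset\}$. I create $d$ votes approving only~$p$, and for every edge $\edge{\vere}{\vere'}\in\eset$ one vote $\{c(\vere),c(\vere')\}$. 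I set $k=\ell=\kappa$ and fix a tie-breaking order $\overrightarrow{D}\rhd p$ placing~$p$ last. Two observations drive the proof: (i) $p$ is approved only by its $d$ dedicated single-candidate votes, so the $\varphi$~margin contribution of~$p$ to any committee not containing~$p$ is exactly $d\cdot\omega(1)$; and (ii) each $c(\vere)$ is approved by exactly its $d$ incident edge-votes, each of which approves at most two candidates, so its $\varphi$~margin contribution to a committee~$\w$ is $\sum_{\edge{\vere}{\vere'}}\Delta_\omega(m)$ with every $m\in\{0,1\}$; hence it is at most $d\cdot\omega(1)$, with equality iff none of the \emph{present} neighbours of~$\vere$ already lies in~$\w$.

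For the forward direction, given an independent set $N'$ of size~$\kappa$ I add $D'=\{c(\vere)\setmid \vere\in N'\}$. Because $N'$ is independent, no edge-vote approves two candidates of~$D'$, so for every $\vere\in N'$ each incident edge-vote contributes $\Delta_\omega(0)=\omega(1)$ at every step, keeping the margin of $c(\vere)$ equal to $d\cdot\omega(1)$ throughout the sequential procedure. Since these $\kappa$ candidates tie with~$p$ but precede it in~$\rhd$, they are selected first and fill the winning committee of size $k=\kappa$, excluding~$p$. For the converse, note that with $C=\{p\}$ a feasible $D'$ must satisfy $\abs{D'}=\kappa$: if $\abs{D'}<\kappa$ then $\abs{C\cup D'}\le\kappa$, forcing every $k$-committee to contain~$p$. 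Hence the winning committee equals~$D'$, so each $c(\vere)\in D'$ is chosen over~$p$; by observation~(ii) its selection-time margin must equal $d\cdot\omega(1)$, which by $2\omega(1)-\omega(2)>0$ means no previously selected candidate is a neighbour of~$\vere$. Running this along the selection order shows $\{\vere\setmid c(\vere)\in D'\}$ is an independent set of size~$\kappa$ in~$G$.

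The main obstacle, and precisely the reason the hypothesis $\omega(2)<2\omega(1)$ is needed, lies in the converse: I must ensure that a candidate $c(\vere)$ one of whose neighbours is already selected has \emph{strictly} smaller margin than~$p$, so that it cannot overtake~$p$. This is where restricting to edge-votes (each approving exactly two candidates) is essential, since it confines the relevant increments to $\Delta_\omega(0)=\omega(1)$ and $\Delta_\omega(1)=\omega(2)-\omega(1)$ and so avoids any dependence on the unknown higher-order differences $\Delta_\omega(i)$ for $i\ge 2$. The regularity of~$G$ is equally essential: it makes~$p$ and every vertex-candidate share the common initial margin $d\cdot\omega(1)$, turning ``selected before~$p$'' into the clean combinatorial condition of not conflicting with previously chosen vertices. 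Since $k=\ell=\kappa$ and $J=\{p\}$, the reduction yields \wahns{} with respect to $k+\ell$ already for a single distinguished candidate.
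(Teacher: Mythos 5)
Your reduction is, at its core, the paper's own reduction in disguise: the paper reduces from {\prob{Regular Set Packing}}, and an instance of {\prob{Independent Set}} on a $d$-regular graph is exactly the special case in which the universe is the edge set and each vertex is identified with the $d$-set of its incident edges. The gadgetry coincides ($d$ votes approving only~$p$, one vote per universe element/edge, $k=\ell=\kappa$, tie-breaking order placing the unregistered candidates before~$p$), and your margin analysis is correct --- indeed, confining every vote to two approved candidates pins the relevant increments to $\Delta_\omega(0)=\omega(1)$ and $\Delta_\omega(1)=\omega(2)-\omega(1)$ (note $\abs{v\cap \w}\le 1$ whenever $c(\vere)\in v\setminus \w$), which cleanly avoids the uncontrolled higher differences $\Delta_\omega(i)$, $i\ge 2$.

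There is, however, one genuine defect: you set the registered candidate set to $C=\{p\}$ while taking $k=\kappa$, which violates the requirement $k\leq\abs{C}$ in the formulation of {\probb{DCAC}{$\varphi$}}, and this is not merely cosmetic. Your converse direction argues that $\abs{D'}<\kappa$ ``forces every $k$-committee to contain~$p$,'' but when $\abs{D'}<\kappa-1$ the election $(C\cup D', V)$ has fewer than~$k$ candidates and hence no winning $k$-committee at all; then ``none of~$J$ belongs to any winning $k$-committee'' holds vacuously, so $D'=\emptyset$ would witness a {\yesins} for every input graph and the reduction collapses. The paper avoids precisely this degeneracy by padding~$C$ with a set~$X$ of~$\kappa$ registered dummy candidates approved by no vote and ranked after~$p$ in the tie-breaking order: since~$p$'s margin contribution $d\cdot\omega(1)$ is positive (as $\omega(2)<2\omega(1)$ forces $\omega(1)>0$) while each dummy's is~$0$, the candidate~$p$ is always selected before any dummy, so whenever fewer than~$\kappa$ unregistered candidates are added,~$p$ lands in the winning $k$-committee; and when $\abs{D'}=\kappa$ and $p\notin\w$, the committee equals~$D'$ and your independence argument applies verbatim. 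With this one repair --- which leaves $k+\ell=2\kappa$ and $\abs{J}=1$ intact --- your proof is sound and essentially identical to the paper's.
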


\begin{proof}
We prove Theorem~\ref{thm-dcac-thiele-rule} by a reduction from {\prob{Regular Set Packing}}. Let $(\xs, \xc, \kappa)$ be an instance of {\prob{Regular Set Packing}}, where~$\xc$ is a collection of $d$-subsets of~$\xs$ for some positive integer~$d$. Let~$\varphi$ be an $\omega$-Thiele's rule such that $\omega(2)<2\omega(1)$. Notice that $\omega(1)<2\omega(1)$ implies $\omega(1)>0$. We create an instance of {\probb{DCAC}{seq$\varphi$}} as follows. For every $\xce\in \xc$ we create one candidate~$c(\xce)$. In the following discussion, for a given $\xc'\subseteq \xc$, we use~$C(\xc')=\{c(\xce) \setmid \xce\in \xc'\}$ to denote the set of candidates corresponding to~$\xc'$. In addition, we create one candidate~$p$, and create a set~$X$ of~$\kappa$ candidates. Let $C=X \cup \{p\}$, let $D=C(\xc)$, and let $J=\{p\}$. Besides, let $k=\ell=\kappa$. We create the following votes:
\begin{itemize}
\item $d$ votes approving only the distinguished candidate~$p$; and
\item for every $\xse\in \xs$, one vote $v(\xse)=\{c(\xce) \setmid \xse\in \xce\in \xc\}$ which approves all candidates corresponding to $\xce\in \xc$ containing~$\xse$.
\end{itemize}
Let~$V$ be the set of the above created~$d+\abs{\xs}$ votes. Notice that none of~$X$ is approved by any votes.
The instance of {\probb{DCAC}{seq$\varphi$}} is $((C\cup D, V), k, J, \ell)$ which can be constructed in polynomial time.
The tie-breaking order is $\overrightarrow{C(\xc)}\rhd p\rhd \overrightarrow{X}$.
Clearly, as $\omega(1)>0$ and~$p$ is before all candidates in~$X$ in the tie-breaking order,~$p$ is in the  winning $k$-committee of~seq$\varphi$ at $(C, V)$.
In the following, we show that the instance of {\prob{Regular Set Packing}} is a {\yesins} if and only if the above constructed instance of {\probb{DCAC}{seq$\varphi$}} is a {\yesins}.

$(\Rightarrow)$ Assume that~$\xc$ contains a set packing $\xc'\subseteq \xc$ of cardinality $\kappa$. Let $E=(C\cup C(\xc'), V)$. It is fairly easy to see that~$p$ is not contained in the  winning $k$-committee of seq$\varphi$ at~$E$, and hence the instance of {\probb{DCAC}{seq$\varphi$}} is a {\yesins}. In fact, as all elements of~$\xc'$ are of cardinality~$d$ and~$\xc'$ is a set packing, we know that every $c(\xce)\in C(\xc')$ where $\xce\in \xc'$ is approved by exactly~$d$ votes in~$V$ and, more importantly, none of~$V$ approves two candidates from~$C(\xc')$. As~$p$ is after all candidates in~$C(\xc)$ in the tie-breaking order and $k=\kappa$,  the  winning $k$-committee of seq$\varphi$ at~$E$ must be~$C(\xc')$
.

$(\Leftarrow)$ Suppose that there exists $\xc'\subseteq \xc$ such that $\abs{\xc'}\leq \kappa$ and~$p$ is not contained in the winning $k$-committee, denoted~$\w$, of seq$\varphi$ at $E=(C\cup C(\xc'), V)$. Observe first that $\abs{C(\xc')}=\kappa$. In fact, as none of $C\setminus \{p\}$ is approved by any votes, none of them can be contained in the winning $k$-committee of seq$\varphi$ at~$E$ if~$p$ is not contained in the seq$\varphi$ winning $k$-committee. Therefore, if $\abs{C(\xc')}<\kappa$,~$p$ must be contained in the winning $k$-committee of seq$\varphi$ at~$E$, which is a contradiction.
Now, we claim that~$\xc'$ is a set packing. In fact, if this is not the case, after adding some candidates from~$C(\xc')$ into the winning committee according to the procedure of seq$\varphi$, the~$\varphi$ margin contribution of~$p$ would be larger than that of any other candidates in~$D$ that have not included into the winning committee, which leads to~$p$ being contained in the winning $k$-committee and hence a contradiction arrives.
More precisely, consider the sequence of candidates~$\w^1_E$,~$\w^2_E$, $\dots$,~$\w^{k}_E$ that are added into~$\w$ one by one according to the definition of seq$\varphi$.
If~$\xc'$ is not a set packing, there exists~$i<k$ so that for every candidate $c(\xce)\in C(\xc')\setminus \w^{\leq i}_E$ there exists at least one $c(\xce')\in \w^{\leq i}_E$ so that $\xce\cap \xce'\neq\emptyset$. Let us assume that~$i$ is the minimum integer satisfying the above condition. Let~$\xse\in \xce'\cap \xce$ be any arbitrary element in the intersection of~$\xce'$ and~$\xce$, where~$\xce$ and~$\xce'$ are as above. Then, by the construction of the votes, the vote~$v(\xse)$ approves both~$c(\xce)$ and~$c(\xce')$. More precisely, by the minimality of~$i$, it holds that~$v(a)$ approves exactly the candidate~$c(\xce')$ among all in~$\w^{\leq i}_E$. As a consequence, the~$\varphi$ margin contribution of $c(\xce)$ to $\w^{\leq i}_E$ can be at most $(d-1)\cdot \omega(1)+(\omega(2)-\omega(1))$. However, the~$\varphi$ margin contribution of~$p$ to~$\w^{\leq i}_E$ is $d \cdot \omega(1)$, which is strictly larger than that of~$c(\xce)$ given $\omega(2)<2\omega(1)$. As this holds for all $c(\xce)\in C(\xc')\setminus \w^{\leq i}_E$, we obtain that $\w^{i+1}_E=p$.  From $i+1\leq k$, it follows that $p\in \w$, meaning that the instance of the {\probb{DCAC}{seq$\varphi$}} is a {\noins}.
\end{proof}

\subsection{Deleting Candidates}

Now we study destructive control by deleting candidates.
We show that any rule satisfying the Chernoff property is immune to {\prob{DCDC}}.

\begin{theorem}
\label{thm-chernoff-rule-immue-dcdc}
Every {\abmv} rule satisfying the Chernoff property is immune to {\prob{DCDC}}.
\end{theorem}

\begin{proof}
Let~$\varphi$ be an {\abmv} rule satisfying the Chernoff property.
Let $(C, V)$ be an election and let $J\subseteq C$ be a set of distinguished candidates such that there exists a winning $k$-committee~$\w$ of~$\varphi$ at $(C, V)$ with $\w\cap J\neq \emptyset$. As~$\varphi$ satisfies the Chernoff property, for any $C'\subseteq C$ such that $J\subseteq C'$ and~$\abs{C'}\geq k$, there exists at least one winning $k$-committee~$\w'$ of~$\varphi$ at~$(C', V_{C'})$ so that $(\w\cap J)\subseteq \w'$. Therefore, it is impossible to delete candidates from {$C\setminus J$} so that none of~$J$ is contained in any winning $k$-committee of~$\varphi$ at the resulting election.
\end{proof}

Recall that except AV and SAV, none of the other concrete rules studied in the paper satisfies the Chernoff property. In fact, we show below that {\prob{DCDC}} for these rules are intractable from the parameterized complexity point of view.
It is interesting to see if there are other natural {\abmv} rules satisfying the Chernoff property.

We start with the two additive rules SAV and NSAV.
For SAV, Magiera~\citeas{Magierphd2020} showed that {\prob{DCDC}} is {\nph}. We strengthen this result by deriving a {\wbhns} reduction.

\begin{theorem}
\label{thm-dcdc-sav-np-hard}
{\probb{DCDC}{\memph{SAV}}} and {\probb{DCDC}{\memph{NSAV}}} are {\wbh} with respect to~$\ell$, the number of deleted candidates. This holds even if there is only one distinguished candidate.
\end{theorem}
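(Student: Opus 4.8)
The plan is to establish $\wbh$ness of $\probb{DCDC}{\text{SAV}}$ (with $\abs{J}=1$) by a reduction from $\prob{RBDS}$ parameterized by $\kappa$, and then to transfer the result to NSAV using Lemma~\ref{lem-relation-sav-nsav}, exactly as was done for $\prob{DCAC}$ in Corollary~\ref{cor-dcac-nsav-np-hard}. Let $(G,\kappa)$ with $G=(R\muplus B,\eset)$ be an $\prob{RBDS}$ instance. The committee size will encode $R$ being forced into the winning committee whenever $p$ is excluded, and deleting $\ell=\kappa$ candidates should correspond to selecting $\kappa$ blue vertices dominating $R$. Concretely, I would let the registered candidates consist of one candidate $c(r)$ for each $r\in R$, one candidate $c(b)$ for each $b\in B$, the distinguished candidate $p$, and enough padding candidates to control the committee size. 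The deletable set is $C(B)$ (we must forbid deleting $p$, which is automatic since $p\in J$). The votes should be designed so that in the full election $p$ sits in some winning $k$-committee, but after deleting $\kappa$ blue-candidates corresponding to a dominating set, every $c(r)$ outscores $p$ in SAV score, pushing $p$ out.

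First I would fix the SAV scores carefully. Because SAV divides each approval by the vote length, and because deleting a candidate $c(b)$ changes the lengths of all votes that approved $b$, the scores are sensitive to deletions — this is the crucial difference from AV that makes the problem hard. The intended gadget: for each blue vertex $b$ create one vote approving $p$ together with the red-candidates it dominates and $c(b)$ (a suitable padding to fix vote-length), so that deleting $c(b)$ removes exactly the contribution that $b$'s vote makes to those red-candidates. I would add a large multiset of ``bulk'' votes approving only the $R$-candidates (or only $p$) to calibrate the baseline scores so that $p$ is a winner iff no red candidate is dominated more than expected. Then $k$ is set to $\abs{R}$ (plus padding) so that forcing all of $C(R)$ into the committee excludes $p$. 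The forward direction: a dominating $B'$ of size $\kappa$, when the corresponding $C(B')$ is deleted, makes every $c(r)$ receive a strict score boost over $p$, so $C(R)$ becomes the unique winning committee and $p$ is excluded. The backward direction: if $p$ is excluded after deleting at most $\kappa$ candidates, argue that only blue-candidates are usefully deleted, that exactly $\kappa$ are deleted, and that every $r\in R$ must have been ``covered,'' recovering a dominating set $B'$ of size $\kappa$.

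The main obstacle will be the score calibration in SAV. Unlike AV, one cannot simply count approvals: deleting $c(b)$ shortens every vote that contained $b$, which simultaneously \emph{raises} the SAV contribution that vote gives to its other approved candidates (including possibly $p$ and several $c(r)$'s). So I must choose vote lengths and the number of bulk votes so that the net effect of deleting a dominating set strictly favors every $c(r)$ over $p$, while ensuring that deleting candidates \emph{other} than the intended blue ones (or deleting more than $\kappa$) cannot help. The cleanest way is to make every constructed vote have a \emph{uniform} length before any deletion, so that $\varphi$ behaves like AV initially (invoking AV-uniformity to pin down the starting winners), and then to quantify precisely, via $\frac{1}{\abs{v}}$ bookkeeping, how a single deletion perturbs scores. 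I expect to need an inequality of the form ``one deletion that covers $r$ increases $\score{\text{SAV}}{\{c(r)\}}{\cdot}$ by strictly more than it increases $\score{\text{SAV}}{\{p\}}{\cdot}$,'' which forces the dominating-set structure.

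Finally, for NSAV I would append $n\cdot m^2$ dummy candidates approved by nobody, exactly as in the passage following Theorem~\ref{thm-dcac-sav-np-hard}, so that by Lemma~\ref{lem-relation-sav-nsav} the SAV and NSAV rankings coincide on every subelection reachable by deleting candidates; hence the winning $k$-committees of NSAV match those of SAV throughout the reduction, and the same instance witnesses $\wbh$ness for NSAV. Since $\ell=\kappa$ and $\prob{RBDS}$ is $\wbc$ with respect to $\kappa$, this yields $\wbh$ness with respect to the number of deleted candidates, with $\abs{J}=1$, as claimed.
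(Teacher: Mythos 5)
Your high-level plan (reduce from {\prob{RBDS}} with $\abs{J}=1$, set $k=\abs{R}$, make deleting blue-candidates correspond to a dominating set, then transfer to NSAV by padding with dummy candidates and Lemma~\ref{lem-relation-sav-nsav}) matches the paper's. But your concrete gadget has a fatal flaw: you put~$p$ \emph{inside} the blue votes. If the vote for blue vertex~$b$ is $v(b)=\{p\}\cup C(\neighbor{G}{b})\cup\{c(b)\}$, of size $\degree{G}{b}+2$, then deleting~$c(b)$ raises the contribution of~$v(b)$ to~$p$ and to each dominated red-candidate by exactly the same amount, namely $\frac{1}{\degree{G}{b}+1}-\frac{1}{\degree{G}{b}+2}$. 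Since~$p$ lies in \emph{every} blue vote while~$c(r)$ lies only in the votes of its dominators, the difference $\score{\text{SAV}}{\{c(r)\}}{\cdot}-\score{\text{SAV}}{\{p\}}{\cdot}$ is non-increasing under any deletion of blue-candidates. Consequently, the inequality you yourself identify as necessary (``one deletion that covers~$r$ increases $c(r)$'s score by strictly more than it increases~$p$'s score'') is false for your votes, and the forward direction collapses: if~$p$ starts in some winning $k$-committee, it remains in one after deleting any set of blue-candidates, so yes-instances of {\prob{RBDS}} are not mapped to yes-instances of {\probb{DCDC}{SAV}}. No calibration of bulk votes can repair this, because making every $c(r)$ overtake~$p$ after deletion would force $s^0_{c(r)}>s^0_p$ already before deletion, i.e.,~$p$ would initially be excluded and the instance would be trivially a {\yesins} regardless of the {\prob{RBDS}} answer.

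The missing idea, which is the crux of the paper's proof, is to \emph{decouple}~$p$ from all deletable candidates: no vote approving~$p$ may contain any blue-candidate. The paper gives~$p$ only singleton votes plus one vote $\{p\}\cup X$ over padding candidates~$X$ (deleting~$X$ only helps~$p$, so it is never useful to the controller), and encodes domination in \emph{red-centered} votes $v(r)=\{c(r)\}\cup C(\neighbor{G}{r})$ of uniform size $d+1$ (assuming all red vertices have degree~$d$, a harmless regularization in the spirit of your uniform-length idea). Then deleting $c(b)$ for a dominator~$b$ of~$r$ shrinks~$v(r)$ and boosts $c(r)$ from $\abs{R}+\frac{1}{d+1}$ to at least $\abs{R}+\frac{1}{d}$, while~$p$'s score is untouched at $\abs{R}+\frac{1}{d+1}$; conversely, if some~$r$ is undominated by the deleted set, $c(r)$ ties with~$p$ and~$p$ survives in a winning committee. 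Your backward-direction outline and the NSAV transfer are fine in shape, but both rest on this gadget, so the proof as proposed does not go through without this structural change.
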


\begin{proof}
We prove the theorem via a reduction from {\prob{RBDS}}. Consider first the result for SAV. Let $(G, \kappa)$ be an instance of {\prob{RBDS}}, where~$G$ is a bipartite graph with two disjoint independent sets~$R$ and~$B$. Without loss of generality, let us assume that~$R\neq \emptyset$. Additionally, as in the proof of Theorem~\ref{thm-dcdv-seq-wbh}, we assume that all red vertices have the same degree~$d$ for some positive integer~$d$. We create an instance of {\probb{DCDC}{SAV}}  as follows.
For each $\vere\in R\cup B$, we create one candidate denoted by~$c(\vere)$. In the following discussion, for a given $B'\subseteq B$ (resp.~$R'\subseteq R$), we use~$C(B')$ (resp.~$C(R')$) to denote the set of candidates corresponding to~$B'$ (resp.~$R'$), i.e.,  $C(B')=\{c(b) \setmid b\in B'\}$ (resp.~$C(R')=\{c(r) \setmid r\in R'\}$).
Then, we create one candidates~$p$, and create a set~$X$ of~$d$ candidates.
Let $J=\{p\}$, and let $C=C(R)\cup C(B)\cup X\cup J$. We create the following votes:
\begin{itemize}
\item First, we create one vote approving exactly the $d+1$ candidates in~$\{p\}\cup X$.
\item Second, we create~$\abs{R}$ votes each approving exactly~$p$.
\item Then,  for each $r\in R$, we create~$\abs{R}$ votes each approving exactly the candidate~$c(r)$.
\item Finally, for each $r\in R$, we create one vote~$v(r)$ approving~$c(r)$ and the~$d$ candidates corresponding to its neighbors in~$G$, i.e., $v(r)=\{c(r)\}\cup C(\neighbor{G}{r})$.
\end{itemize}
Let~$V$ denote the multiset of the above created votes. Clearly, $\abs{V}=1+\abs{R} + \abs{R}^2+\abs{R}=1+2 \abs{R}  + \abs{R}^2$.
Let $k=\abs{R}$ and let $\ell=\kappa$. The instance of {\probb{DCDC}{SAV}} is $((C, V), k, J, \ell)$.
In the election~$(C,V)$, the SAV score of every candidate in $\{p\}\cup C(R)$ is $\abs{R}+\frac{1}{d+1}$, that of every candidate in~$X$ is~$\frac{1}{d+1}$, and that of every~$c(b)\in C(B)$ is~$\frac{\degree{G}{b}}{d+1}$. Note that $\degree{G}{b}\leq \abs{R}$ for all $b\in B$, and it holds that $\abs{R}>0$. Hence, the winning $k$-committees of SAV at $(C, V)$ are exactly all $k$-subsets of $\{p\}\cup C(R)$.  
We prove the correctness of the reduction as follows.

$(\Rightarrow)$ Assume that~$B$ contains a subset~$B'$ of~$\kappa=\ell$ vertices which dominate~$R$. Let $E=(C\setminus C(B'), V)$. By the construction of the votes, the SAV score of~$p$ in~$E$ remains the same as in $(C, V)$. Consider now the SAV score of candidates in~$C(R)$ in the election~$E$. As~$B'$ dominates~$R$, by the construction of the votes, for every~$r\in R$, there exists at least one~$b\in B'$ so that~$v(r)$ approves~$c(b)$. It follows that the SAV score of~$c(r)$ in~$E$ increases to at least $\abs{R}+\frac{1}{d}$, which is strictly larger than that of~$p$. As this holds for all the~$\abs{R}$ candidates in~$C(R)$ and $k=\abs{R}$, it follows that~$p$ cannot be in any winning $k$-committee of SAV at~$E$.

$(\Leftarrow)$ Assume that there exists $C'\subseteq C\setminus \{p\}$ such that $\abs{C'}\leq \ell=\kappa$ and~$p$ is not contained in any winning $k$-committee of SAV at $E=(C\setminus C', V)$. Notice that, by the construction of the votes, the SAV score of every candidate from~$C(B)\setminus C'$ in the election~$E$ can be at most~$\frac{\degree{G}{b}}{2}<\abs{R}$, and that of everyone from~$X\setminus C'$ can be at most~$\frac{1}{2}$, both strictly smaller than that of~$p$ provided that $R\neq\emptyset$. As a consequence, we know that~$R$ must be the unique winning $k$-committee of SAV at~$E$. It follows that $C'\cap  C(R)=\emptyset$. Let $B'=\{b\in B \setmid c(b)\in C'\}$.
Obviously, $\abs{B'}\leq \abs{C'}\leq \kappa$.
We claim that~$B'$ dominates~$\xs$. To prove this, assume for contradiction that there exists $r\in R$ not dominated by any vertex in~$B'$. Then, by the construction of the votes, the SAV score of~$c(r)$ in~$E$ remains $\abs{R}+\frac{1}{d+1}$, the same as in the election~$(C, V)$. However,~$p$ has at least the same SAV score in~$E$ as in $(C, V)$, and given $k=\abs{R}$, this implies that there exists a winning $k$-committee of SAV at~$E$ which contains~$p$, a contradiction. Then, from $\abs{B'}\leq \kappa$ and that~$B'$ dominates~$R$, it follows that the {\prob{RBDS}} instance is a {\yesins}.

Based on Lemma~\ref{lem-relation-sav-nsav}, we can show the same {\wbhns} result for {\probb{DCDC}{NSAV}} by adding a considerable large number of dummy candidates not approved by any vote.
\end{proof}

\onlyfull{Based on Lemma~\ref{lem-relation-sav-nsav}, we can show the same {\wbhns} result for {\probb{DCDC}{NSAV}} by adding a considerable large number of dummy candidates in~$C$. So, we have the following corollary.

\begin{corollary}
\label{cor-dcdc-nsav-np-hard}
{\probb{DCDC}{\memph{NSAV}}} is {\wbh} with respect to~$\ell$, the number of deleted candidates. This holds even if there is only one distinguished candidate.
\end{corollary}
}

Finally, we study the complexity of {\prob{DCDC}} for sequential $\omega$-Thiele's rules.

\begin{theorem}
\label{thm-dcdc-seqThiele-wah}
For every $\omega$-Thiele's rule~ $\varphi$ such that $\omega(2)<2\omega(1)$, {\probb{DCDC}{\memph{seq}$\varphi$}} is {\wah} when parameterized by $m-\ell$, the number of  candidates not deleted. Moreover, this holds even when there is only one distinguished candidate.
\end{theorem}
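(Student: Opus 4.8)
The plan is to reduce from \prob{Independent Set} restricted to $d$-regular graphs, which is \wah{} with respect to the solution size~$\kappa$. Given a $d$-regular graph $G=(\vset,\eset)$ with $n=\abs{\vset}$ and $\kappa\geq 2$, I would take candidates $C=\{c(v)\setmid v\in\vset\}\cup\{p,z\}$, with $J=\{p\}$ the single distinguished candidate and $z$ a special ``blocker''. The votes are: $d$ votes approving only~$p$; for each edge $\edge{u}{v}\in\eset$ one vote approving $c(u)$, $c(v)$ and~$z$; and $d+1$ votes approving only~$z$. I set $k=\kappa$, $\ell=n-\kappa+1$, and the tie-breaking order~$\rhd$ so that all vertex-candidates precede~$z$, which precedes~$p$. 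Then $m=n+2$, so the parameter is $m-\ell=\kappa+1=f(\kappa)$, exactly what an FPT-reduction needs, and the instance is built in polynomial time.

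The design rests on three margin facts, all consequences of $\omega$ being nondecreasing and $\omega(2)<2\omega(1)$ (whence $\omega(1)>0$ and $\Delta_\omega(1)=\omega(2)-\omega(1)<\omega(1)$). First, $p$'s margin contribution to every committee is the constant $d\cdot\omega(1)$, since its $d$ private votes are never disturbed by the committee. Second, whenever $z$ is present its margin is at least $(d+1)\cdot\omega(1)>d\cdot\omega(1)$, so $z$ is always the first candidate selected. Third, once $z$ is in the committee each surviving $c(v)$ has every one of its approving edge-votes already meeting the committee in~$z$, so its margin is at most $d\cdot\Delta_\omega(1)<d\cdot\omega(1)$; whereas in a subelection \emph{not} containing~$z$, $c(v)$ attains the full value $d\cdot\omega(1)$ exactly when none of its kept neighbours is already in the committee, each such neighbour strictly lowering the margin by $\omega(1)-\Delta_\omega(1)>0$.

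For the forward direction, from an independent set $I$ of size~$\kappa$ I delete $z$ and the $n-\kappa$ vertex-candidates outside~$I$ (exactly $\ell$ deletions). In the remaining election the $\kappa$ candidates $c(v)$, $v\in I$, each attain margin $d\cdot\omega(1)$ (their edge-votes now approve no other kept candidate), tie with~$p$, and win the tie-break, so seq$\varphi$ fills all $k=\kappa$ seats with vertex-candidates and~$p$ is excluded. For the reverse direction, suppose deleting at most~$\ell$ candidates keeps a set $S\ni p$ with $p$ outside the winning committee. If $z\in S$, then $z$ is taken first and thereafter $p$ strictly beats every vertex-candidate, so~$p$ enters as the second member (using $k=\kappa\geq2$), a contradiction; hence $z$ is deleted and $S$ consists of~$p$ and at least $\kappa$ vertex-candidates. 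Because $p$ is available at every step with constant margin $d\cdot\omega(1)$, any candidate chosen before~$p$ must have margin at least $d\cdot\omega(1)$, hence (being a vertex-candidate) exactly $d\cdot\omega(1)$; and two such full-margin candidates cannot be adjacent, since the later one would already have a kept neighbour in the committee. As~$p$ is excluded, the first~$\kappa$ chosen members form an independent set of size~$\kappa$ in~$G$.

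The crux, and the part I expect to need the most care, is guaranteeing that~$p$ lies in the winning committee of the \emph{undeleted} election, for otherwise deleting nothing would make the instance a trivial \yesins{}. This is exactly what~$z$ enforces: it is selected first and instantly depresses every vertex-candidate below~$p$, so~$p$ is taken second. The mirror-image subtlety in the reverse direction is that a solution cannot afford to retain~$z$, which pins~$z$ down among the deleted candidates and collapses the argument to the clean independent-set analysis above. Verifying the inequalities $d\cdot\Delta_\omega(1)<d\cdot\omega(1)$ and the ``full margin iff no kept neighbour in committee'' characterisation are the only routine computations that remain.
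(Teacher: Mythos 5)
Your proof is correct, and at its core it runs on the same engine as the paper's: give $p$ exactly $d$ private votes so that its margin contribution is the constant $d\cdot\omega(1)$ at every step, place $p$ last in the tie-breaking order, and arrange the remaining votes so that a kept candidate matches this benchmark precisely when it overlaps nothing already in the committee, the overlap penalty $\omega(1)-\Delta_{\omega}(1)>0$ being exactly where $\omega(2)<2\omega(1)$ is used, with $\ell$ chosen so that $m-\ell=\kappa+1$. The paper reduces from {\prob{Regular Set Packing}} (candidates are the sets, votes are the elements); you reduce from {\prob{Independent Set}} on regular graphs, which is the special case of {\prob{Regular Set Packing}} in which each vertex owns its $d$ incident edges, and you add a blocker candidate~$z$ that any successful solution must delete. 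The paper needs no such gadget: its backward direction is argued contrapositively---if no set packing of size~$\kappa$ exists, then for \emph{every} admissible deletion (including the empty one) the greedy process stalls before step~$k$ and~$p$ slips in---so there is no need to certify separately that~$p$ wins in the undeleted election. What your variant buys is robustness and simplicity of the margin arithmetic: every vote approves at most three candidates, and in the only two situations you analyze (committee exactly $\{z\}$, or~$z$ deleted so each vote has at most two surviving candidates) only $\Delta_{\omega}(0)=\omega(1)$ and $\Delta_{\omega}(1)$ ever appear, whereas the paper's element votes can be large and its margin bound leans on a minimality argument over the selection sequence.

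One point to tighten: your third margin fact (``once $z$ is in the committee each surviving $c(v)$ has margin at most $d\cdot\Delta_{\omega}(1)$'') is false as stated for committees containing~$z$ together with a kept neighbour $c(u)$ of~$v$: the vote $\{c(u),c(v),z\}$ then contributes $\Delta_{\omega}(2)$, which the hypothesis $\omega(2)<2\omega(1)$ bounds neither by $\Delta_{\omega}(1)$ nor by $\omega(1)$. This does not break the proof, because you invoke the fact only at the second selection step, where the committee is exactly $\{z\}$ and every edge vote meets it in a single candidate; but the claim (and the phrase ``thereafter $p$ strictly beats every vertex-candidate'') should be restricted to that step, which together with $k=\kappa\geq 2$ is all the argument requires.
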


\begin{proof}
We prove the theorem by a reduction from {\prob{Regular Set Packing}}. Let~$\varphi$ be an $\omega$-Thiele's rule~ $\varphi$ such that $\omega(2)<2\omega(1)$. Notice that by the definition of $\omega$-Thiele's rule, $\omega(2)<2\omega(1)$ implies that $\omega(1)>0$.
Let $(\xs, \xc, \kappa)$ be an instance of {\prob{Regular Set Packing}} where every $\xce\in \xc$ is a $d$-subset of~$\xs$ for some positive integer~$d$. We construct an instance of {\probb{DCDC}{seq$\varphi$}} as follows. 
For each $\xce\in \xc$, we create one candidate~$c(\xce)$. In the following discussion, for a given $\xc'\subseteq \xc$, we use~$C(\xc')$ to denote the set of candidates corresponding to~$\xc'$, i.e., $C(\xc')=\{c(\xce) \setmid \xce\in \xc'\}$.
In addition, we create one candidate~$p$. Let $C=C(\xc)\cup \{p\}$ and let~$J=\{p\}$.
Clearly, $\abs{C}=\abs{\xc}+1$.
We create the following votes:
\begin{itemize}
\item $d$ votes each approving only the candidate~$p$;
\item for each $\xse\in \xs$, one vote $v(\xse)=\{c(\xce) \setmid \xse\in \xce\in \xc\}$.
\end{itemize}
Let~$V$ be the multiset of the above $d+\abs{\xs}$ created votes. Finally, let $k=\kappa$ and $\ell=\abs{\xc}-\kappa$. The instance of {\probb{DCDC}{seq$\varphi$}} is $((C, V), k, J, \ell)$ which can be constructed in polynomial time. We assume that in the tie-breaking order,~$p$ is ranked in the last position. In the following, we prove the correctness of the reduction.

$(\Rightarrow)$ Assume that~$\xc$ contains a set packing~$\xc'$ of cardinality~$\kappa$. Let $E=(C(\xc')\cup \{p\}, V)$. By the construction of the votes, in the election~$E$ it holds that each candidate $c\in C(\xc')\cup \{p\}$ is approved by exactly~$d$ votes and, moreover, none of these~$d$ votes approves any other candidate from $C(\xc')\cup \{p\}$ except~$c$. Then, by the tie-breaking order and the fact that $\omega(1)>0$, the winning $k$-committee of seq$\varphi$ at~$E$ consists of the first~$k$ candidates from~$C(\xc')$ in the tie-breaking order. So, the above constructed instance of {\probb{DCDC}{seq$\varphi$}} is a {\yesins}.

$(\Leftarrow)$ Assume that~$\xc$ does not contain any set packing of cardinality~$\kappa$. Let~$\xc'$ be a subcollection of~$\xc$ so that $\abs{\xc'}\geq \abs{C}-1-\ell=\kappa$. Let $E=(C(\xc')\cup \{p\}, V)$. We claim that~$p$ is contained in the winning $k$-committee, denoted~$\w$, of seq$\varphi$ at~$E$. To this end, let us consider the sequence~$\w^1_E$,~$\w^2_E$, $\dots$,~$\w^k_E$ of candidates added into~$\w$ according to the procedure of seq$\varphi$. As~$\xc'$ does not contain any set packing of cardinality~$\kappa=k$, there exists~$i<k$ so that for every candidate $c(\xce)\in C(\xc')\setminus \w^{\leq i}_E$ there exists at least one $c(\xce')\in \w^{\leq i}_E$ so that $\xce\cap \xce'\neq\emptyset$. Let us assume that~$i$ is the minimum integer satisfying the condition. Let~$\xse\in \xce'\cap \xce$ be any arbitrary element in the intersection of~$\xce$ and~$\xce'$, where~$\xce$ and~$\xce'$ are stipulated as above. Then, by the construction of the votes, the vote~$v(\xse)$ approves both~$c(\xce)$ and~$c(\xce')$. More precisely, by the minimality of~$i$, it holds that~$v(\xse)$ approves exactly the candidate~$c(\xce')$ among all in~$\w^{\leq i}_E$, i.e., it holds that $v(\xse)\cap \w^{\leq i}_E=\{c(\xce')\}$. As a consequence, the~$\varphi$ margin contribution of~$c(\xce)$ to~$\w^{\leq i}_E$ can be at most $(d-1)\cdot \omega(1)+(\omega(2)-\omega(1))$. However, the~$\varphi$ margin contribution of~$p$ to~$\w^{\leq i}_E$ is $d \cdot \omega(1)$, which is strictly larger than that of~$c(\xce)$ given $\omega(2)<2\omega(1)$. It follows that $\w^{i+1}_E=p$. On top of that, because $i+1\leq k$, we have that $p\in \w$. Thus, the instance of the {\probb{DCDC}{seq$\varphi$}} is a {\noins}.
\end{proof}

It should be noted that the {\wahns} stated in Theorem~\ref{thm-dcdc-seqThiele-wah} also holds with respect to the parameter~$k$, because in our reduction we had indeed $k=\kappa$. However, because $m-\ell\geq k$, the {\wahns} with respect to $m-\ell$ already implies the {\wahns} when parameterized by~$k$, and thus we did not explicitly pointed it out.

Moreover, in light of Observation~\ref{obs-a}, the reduction in the proof of Theorem~\ref{thm-dcdc-seqThiele-wah} implies the following corollary.

\begin{corollary}
\label{cor-dcdc-seqThiele-np-hard-voters-3-canddiates-3}
For every sequential $\omega$-Thiele's rule $\varphi$ such that $\omega(2)<2\omega(1)$, {\probb{DCDC}{seq$\varphi$}} is {\nph} even when there is only one distinguished candidate, every voter approves at most three candidates, and every candidate is approved by at most three votes.
\end{corollary}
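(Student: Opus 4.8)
The plan is to derive the corollary as a direct specialization of the reduction built in the proof of Theorem~\ref{thm-dcdc-seqThiele-wah}, using Observation~\ref{obs-a} to pin down the combinatorial parameters of the source instance. That reduction already provides, for every $\omega$-Thiele's rule with $\omega(2)<2\omega(1)$, a polynomial-time many-one reduction from {\prob{Regular Set Packing}}—with each $\xce\in\xc$ a $d$-subset of $\xs$—to {\probb{DCDC}{seq$\varphi$}}, and it already uses a single distinguished candidate $J=\{p\}$. Rather than re-run the construction, I would feed it instances coming from the special case singled out in the first item of Observation~\ref{obs-a}: restrict {\prob{Regular Set Packing}} so that every $\xce\in\xc$ has cardinality exactly three and every $\xse\in\xs$ occurs in exactly three members of $\xc$. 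By that observation this restricted variant is equivalent to {\prob{RX3C}}, which is {\npc}; hence the specialized reduction still witnesses {\nphns} (which is all the corollary asserts).

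With $d=3$ fixed, the next step is to check the two structural bounds. The registered votes fall into two families: the $d=3$ votes approving only~$p$, and, for each $\xse\in\xs$, the vote $v(\xse)=\{c(\xce)\setmid \xse\in\xce\in\xc\}$. Because every $\xse$ lies in exactly three sets, each $v(\xse)$ approves exactly three candidates, while the $p$-only votes approve one candidate each; so every vote approves at most three candidates. On the candidate side, $p$ is approved by exactly the $d=3$ votes of the first family, and each $c(\xce)$ is approved precisely by the votes $v(\xse)$ with $\xse\in\xce$, of which there are $\abs{\xce}=3$; so every candidate is approved by at most three votes.

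Finally, the correctness argument requires no new work: the forward and backward directions in the proof of Theorem~\ref{thm-dcdc-seqThiele-wah} invoke $d$ only through the inequalities $\omega(1)>0$ and $\omega(2)<2\omega(1)$, so they go through verbatim when $d=3$. Assembling these observations yields a polynomial-time reduction from the {\npc} problem {\prob{RX3C}} to {\probb{DCDC}{seq$\varphi$}} whose output respects all the stated restrictions, giving the claimed {\nphns}. There is essentially no hard step here; the only point deserving care is confirming that the single choice $d=3$ forces the per-vote and per-candidate bounds simultaneously, and that passing to the RX3C-equivalent special case of {\prob{Regular Set Packing}} preserves {\nphns} rather than only the parameterized hardness established in the theorem.
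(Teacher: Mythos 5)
Your proposal is correct and follows exactly the paper's own derivation: the paper obtains this corollary precisely by specializing the reduction of Theorem~\ref{thm-dcdc-seqThiele-wah} to the RX3C-equivalent case of {\prob{Regular Set Packing}} via Observation~\ref{obs-a}, with $d=3$ yielding the per-vote and per-candidate bounds just as you verify. Your explicit check that NP-hardness (rather than the theorem's parameterized hardness, which would not survive constant $d$) is what transfers is a worthwhile clarification, but it is the same argument.
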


\section{Concluding Remarks}
In this section, we first provide a short summary of our main contribution, then we make some remarks on the complexity of control problems with respect to the axiomatic properties studied in the paper, and finally we lay out some directions for future research.

\subsection{Summary of Our Main Contribution}
We continued the line of research on the (parameterized) complexity of election control problems for several prominent {\abmv} rules including AV, SAV, NSAV, PAV, ABCCV, MAV, seqPAV, and seqABCCV. We first studied constructive control for some sequential $\omega$-Thiele's rules including seqPAV and seqABCCV, which significantly complements the work of Yang~\citeas{DBLP:conf/ijcai/Yang19} who studied constructive control for other rules. Then, we investigated destructive control for the aforementioned rules. Our investigation leads to a complete picture of the complexity landscape of the eight standard control problems for the above mentioned rules. We also established a lot of parameterized complexity results with respect to many parameters.
We emphasize that, unlike most of the previous works, to obtain the above results, we explored numerous axiomatic properties of these rules (e.g., Chernoff property, $\alpha$-efficiency, etc.) and derive general results applying to rules satisfying certain properties. We believe that these properties might be of independent interest and deserve further investigation. Our main results are summarized in Table~\ref{tab-results-summary}.

\subsection{Remarks on the Complexity w.r.t.\ some Axiomatic Properties}
We would  like to remark that our results established in terms of axiomatic properties are tight in the sense that if a complexity result of a destructive control problem for a rule satisfying a particular property is not shown in the paper, it means that there exists at least one rule which satisfies the property but disobeys the result. We also make some notes on the complexity of constructive control with respect to these properties. We say that a rule resists or is resistance to a control action if the corresponding control problem for the rule is computationally hard ({\nph}, {\wah}, {\wbh}, etc.); and say that the rule is vulnerable to the corresponding control action otherwise.

\paragraph{$\alpha$-Efficient and Neutral} We showed that {\prob{DCAV}} for all $\alpha$-efficient and neutral rules are computationally hard (Theorem~\ref{thm-dcav-many-rules-wbh}).
We show that there exists an $\alpha$-efficient and neutral rule (despite not a natural rule) which is vulnerable to all the other three destructive control problems. Given an election $(C, V)$ and a positive integer~$k\leq \abs{C}$, this rule first computes the set
$C_0=\{c\in C \setmid V(c)=\emptyset\}$ of candidates not approved by any votes.
Then, if $\abs{C\setminus C_0}<k$, all $k$-committees of~$C$ are winning $k$-committees. Otherwise, all $k$-committees of $C\setminus C_0$ are winning $k$-committees. This rule is clearly $\alpha$-efficient and neutral. It is easy to see that for this rule, {\prob{DCDV}} is polynomial time solvable:
\begin{itemize}
    \item Let $I=((C, V), k, J, \ell)$ be an instance of {\prob{DCDV}}, where $J\subseteq C$ is the nonempty set of distinguished candidates.
    \item If none of the distinguished candidates in~$J$ is contained in any winning $k$-committees, we  conclude that~$I$ is a {\yesins}.
    \item Otherwise, if $\abs{C\setminus C_0}<k$, we conclude that~$I$ is a {\noins}. 
    \item Otherwise, we delete all votes from~$V$ approving at least one distinguished candidates. If there are more than~$\ell$ votes deleted, we conclude that~$I$ is a {\noins}.  Otherwise, we compute the set $C'$ of candidates not approved by any of the remaining votes. Then, we conclude that~$I$ is a {\yesins} if and only if $\abs{C\setminus C'}\geq k$.
\end{itemize}

{\prob{DCAC}} for this rule is also polynomial-time solvable:
\begin{itemize}
\item Let $I=((C\cup D, V), k, J, \ell)$ be an instance of {\prob{DCAC}}.
    \item If none of the distinguished candidates is contained in any winning $k$-committee of~$(C, V)$, we conclude that~$I$ is a {\yesins}.
    \item Otherwise,~$I$ is a {\noins} (in this case, $\abs{C\setminus C_0}\geq k$, and hence by the definition of the rule, we cannot turn any candidate from being contained in at least one winning $k$-committee to being not contained in any winning $k$-committee by including candidates).
\end{itemize}

Furthermore, it is easy to verify that the above rule is immune to {\prob{DCDC}}, {\prob{CCAV}}, and {\prob{CCAC}}, and that {\prob{CCDV}} and {\prob{CCDC}} for this rule are polynomial-time solvable.

\paragraph{AV-Uniform} We showed that {\prob{DCAV}} and {\prob{DCDV}} for all AV-uniform rules are computationally hard (Theorems~\ref{thm-dcav-many-rules-wbh-k-ell}--\ref{thm-dcdv-av-wah-k-ell}), {\prob{DCAC}} for AV is polynomial-time solvable, and AV is immune to {\prob{DCDC}}.
The hardness of {\prob{CCAV}} and {\prob{CCDV}} for $r$-approval~\cite{DBLP:journals/corr/abs-1005-4159} imply that {\prob{CCAV}} and {\prob{CCDV}} for all AV-uniform rules are computationally hard. Additionally, it is known that AV is immune to {\prob{CCAC}}, and {\prob{CCDC}} for AV is polynomial-time solvable.

\begin{table*}[ht]
\caption{Complexity of election control with respect to several axiomatic properties of {\abmv} rules. Here, an entry filled with ``R'' means that the corresponding problem is computationally hard for all rules satisfying the corresponding properties. Besides, an entry with ``\sout{R}'' (resp.\ \sout{I}) means that there is at least one rule which satisfies the corresponding property but is vulnerable (resp.\ susceptible) to the corresponding control problem.}
\label{tab-complexity-results-properties-tight}
    \centering
    \begin{tabular}{|l|c|c|c|c|c|c|c|c|}\hline
         &  CCAV & CCDV & CCAC & CCDC & DCAV & DCDV & DCAC & DCDC \\ \hline

        $\alpha$-efficient \& neutral & \sout{R} &\sout{R}&\sout{R}&\sout{R}&R&\sout{R}&\sout{R}&\sout{R} \\ \hline

        AV-uniform &R&R&\sout{R}&\sout{R}&R&R&\sout{R}&\sout{R}\\ \hline

        Chernoff &\sout{I}&\sout{I}&\sout{I}&\sout{I}&\sout{I}&\sout{I}&\sout{I}&I \\ \hline
    \end{tabular}
\end{table*}

\paragraph{Chernoff Property} We showed that all Chernoff rules are immune to {\prob{DCDC}} (Theorem~\ref{thm-chernoff-rule-immue-dcdc}). We show below that there exists Chernoff rules which are not immune to any of the other seven types of control actions. First, AV satisfies Chernoff property but is susceptible to {\prob{DCAV}},  {\prob{DCDV}}, and {\prob{DCAC}}, {\prob{CCAV}}, {\prob{CCDV}}, and {\prob{CCDC}}. AV is immune to {\prob{CCAC}, but we derive a rule satisfying the Chernoff property but is susceptible to {\prob{CCAC}}. Consider a rule which determines the single winners (winning $1$-committees) of elections with three candidates~$a$,~$b$, and~$c$ as follows. For an election with only the two candidates~$a$ and~$b$,~$b$ is the winner, and for an election with all three candidates~$a$,~$b$, and~$c$,~$a$ is the winner. Let $C=\{a, b\}$, $D=\{c\}$, and $J=\{a\}$. Clearly, we can add~$c$ to~$C$ so that~$a$ becomes the winner.

We refer to Table~\ref{tab-complexity-results-properties-tight} for a summary of the above discussion.

We also remark  that in our results for sequential $\omega$-Thiele's rules, the following conditions are used:
\begin{center}
\begin{minipage}{0.67\textwidth}
\begin{enumerate}
    \item[(1)] $\triangle_{\omega}(i)<\omega(1)$, $i\geq 1$  $\cdots \cdots \cdots \cdots \cdots \cdots \cdots \cdots \cdots \cdot$ \hfill Theorems~\ref{ccav-seqpav-nph},~\ref{thm-ccac-seqThiele-wbh}
    \item[(2)] $\triangle_{\omega}(i)<\omega(1)$, $i\in [3]$ $\cdots \cdots \cdots \cdots\cdots\cdots\cdots \cdots\cdots\cdots \cdot$ \hfill Theorem~\ref{thm-ccdc-seqpav-seqabccv-np-hard}
    \item[(3)] $\triangle_{\omega}(1)<\omega(1)$ (i.e., $\omega(2)<2\omega(1)$) $\cdots \cdots \cdot \cdot$  \hfill Theorems~\ref{ccdv-seqpav-nph},~\ref{thm-dcac-thiele-rule},~\ref{thm-dcdc-seqThiele-wah}
    \item[(4)] $\triangle_{\omega}(i)>0$, $i\geq 0$ $\cdots \cdots \cdots \cdots \cdots \cdots \cdots \cdots \cdots \cdots \cdots \cdot \cdots $  \hfill Theorem~\ref{thm-dcav-seqpav-nph}
    \item[(5)] $\omega(1)>0$ \& $\triangle_{\omega}(1)\leq \omega(1)$ $\cdots \cdots \cdots \cdots \cdots \cdots \cdots$  \hfill Theorems~\ref{thm-dcav-seqabccv-wa-hard},~\ref{thm-dcdv-seq-wbh}
\end{enumerate}
\end{minipage}
\end{center}
Their implication relation is $(1)\Rightarrow (2) \Rightarrow (3) \Rightarrow (5) \Leftarrow (4)$, where an arrow from~$(x)$ to~$(y)$ means that~$(x)$ implies~$(y)$. If there is no arc between two conditions, it means that the two conditions are independent.
These conditions are quite natural in the sense that most of the important sequential $\omega$-Thiele's rules satisfy them. For instance, arguably the  most important sequential rules seqPAV and seqABCCV satisfy even Condition~(1), the strongest among all except Condition~(4). As a matter of fact, there are other rules like sequential Webster approval voting\footnote{This is the $\omega$-Thiele's rule such that $\omega(i)=\sum_{j=1}^i\frac{1}{2j-1}$ for all $i\geq 1$.}  which satisfies  Condition~(1) too. seqABCCV violates Condition~(4), but Theorem~\ref{thm-dcav-seqpav-nph} is the only one using this condition, and we have Theorem~\ref{thm-dcav-seqabccv-wa-hard} as a supplement  (which applies to all the above sequential rules\onlyfull{, though the {\wahns} reduction is targeted for a different parameter}). seqAV satisfies only Condition~(5), and hence by Theorems~\ref{thm-dcav-seqabccv-wa-hard} and~\ref{thm-dcdv-seq-wbh}, {\probb{DCAV}{seqAV}} and {\probb{DCDV}{seqAV}} are respectively {\wah} and {\wbh} with respect to the corresponding parameters. Other results for seqAV can be inherited  from (or trivially implied by) those for AV studied in the literature. In summary, we have the complexity of all the eight control problems for all of the above mentioned sequential rules.

\subsection{Future Research}
There are many interesting questions left for future research. For instance, do the hardness results of control by adding/deleting candidates become fixed-parameter tractable with respect to the number of votes? Will the complexity of these problems change radically (from {\nph} to {\poly}) when restricted to some specific domains of dichotomous preferences (e.g., the domains of candidates interval, voters interval, etc.)?  We refer to~\citep{DBLP:conf/ijcai/ElkindL15,DBLP:journals/corr/abs-2205-09092,DBLP:journals/aarc/Karpov22,DBLP:conf/ijcai/Yang19a} for the definitions of many restricted domains widely studied in the literature.



\section{Acknowledgments}
The author thanks all anonymous reviewers who have provided instructive comments on the paper.

\end{document}